\newtheorem{theorem}{Theorem}
\newtheorem{definition}{Definition}
\newtheorem{lemma}{Lemma}
\newtheorem{claim}{Claim}[section]
\newtheorem{assumption}{Assumption}
\theoremstyle{remark}
\newtheorem{remark}{Remark}
\newcommand{\rand}{{\textsc{rand}}}
\newcommand{\R}{{\mathbb R}}
\newcommand{\E}{{\mathbb E}}
\newcommand{\X}{{\mathcal X}}
\newcommand{\U}{{\mathcal U}}
\newcommand{\Y}{{\mathcal Y}}
\renewcommand{\P}{{\mathcal P}}
\newcommand{\Py}{{\mathbb P}}
\newcommand{\C}{\mathcal{C}}
\newcommand{\G}{\mathcal{G}}
\renewcommand{\H}{\mathcal{H}}
\newcommand{\vct}[1]{{#1}}
\newcommand{\rv}[1]{{{#1}}}
\newcommand{\gm}[1]{\mathscr{#1}}
\title{\LARGE \bf
Common Information Belief based Dynamic Programs for Stochastic Zero-sum Games with Competing Teams
}
\author{Dhruva Kartik, Ashutosh Nayyar and Urbashi Mitra% <-this % stops a space
%\thanks{This work was not supported by any organization}% <-this % stops a space
\thanks{Dhruva Kartik, Ashutosh Nayyar and Urbashi Mitra are with the Department of Electrical and Computer Engineering, University of Southern California, Los Angeles, CA, USA. Email:
        {\tt\small \{mokhasun,ashutosn,ubli\}@usc.edu}. This research was supported by Grant ONR N00014-15-1-2550,
NSF CCF-1817200,
NSF ECCS 1750041,
NSF CCF-2008927,
ARO W911NF1910269,
Cisco Foundation 1980393,
ONR 503400-78050,
DOE DE-SC0021417,
Swedish Research Council 2018-04359
 and Okawa Foundation.} %
}
\begin{document}

\maketitle
\thispagestyle{empty}
\pagestyle{empty}

%%%%%%%%%%%%%%%%%%%%%%%%%%%%%%%%%%%%%%%%%%%%%%%%%%%%%%%%%%%%%%%%%%%%%%%%%%%%%%%%
\begin{abstract}
Decentralized team problems where players have asymmetric information about the state of the underlying stochastic system have been actively studied, but \emph{games} between such teams are less understood. We consider a general model of zero-sum stochastic games between two competing teams. This model subsumes many previously considered team and zero-sum game models. For this general model, we provide bounds on the upper (min-max) and lower (max-min) values of the game. Furthermore, if the upper and lower values of the game are identical (i.e., if the game has a \emph{value}), our bounds coincide with the value of the game.  Our bounds are obtained using two  dynamic programs based on a sufficient statistic known as the common information belief (CIB). We also identify certain information structures in which only the minimizing team controls the evolution of the CIB. In these cases,  we show that one of our CIB based dynamic programs can be used to find the min-max strategy (in addition to the min-max value). We propose an approximate dynamic programming approach for computing the values (and the strategy when applicable) and illustrate our results with the help of an example.
\end{abstract}

%%%%%%%%%%%%%%%%%%%%%%%%%%%%%%%%%%%%%%%%%%%%%%%%%%%%%%%%%%%%%%%%%%%%%%%%%%%%%%%%
\section{Introduction}\label{sec:intro}

In decentralized team problems, players collaboratively control a stochastic system to minimize a common cost. The information used by these players to select their control actions may be different. For instance, some of the players may have more information about the system state than others \cite{xie2020optimally}; or each player may have some private observations that are shared with other players with some delay \cite{nayyar2010optimal}. Such multi-agent team problems with an information asymmetry arise in a multitude of domains like autonomous vehicles and drones, power grids, transportation networks, military and rescue operations, wildlife conservation \cite{fang2019artificial} etc. Over the past few years, several methods have been developed to address decentralized team problems \cite{nayyar2013decentralized,oliehoek2016concise,rashid2018qmix,foerster2019bayesian,xie2020optimally}. However, \emph{games} between such teams are less understood. Many of the aforementioned systems are susceptible to adversarial attacks. Therefore, the strategies used by the team of players for controlling these systems must be designed in such a way that the damage inflicted by the adversary is minimized. {Such adversarial interactions can be modeled as zero-sum games between competing teams, and our main goal in this paper is develop a framework that can be used to analyze and solve them.}

The aforementioned works  \cite{nayyar2013decentralized,oliehoek2016concise,rashid2018qmix,foerster2019bayesian,xie2020optimally} on cooperative team problems solve them by first constructing an auxiliary single-agent Markov Decision Process (MDP). The auxiliary state (state of the auxiliary MDP) is the \emph{common information belief }(CIB). CIB is the belief on the system state and all the players' \emph{private} information conditioned on the \emph{common} (or public) information. Auxiliary actions (actions in the auxiliary MDP) are \emph{mappings} from agents' private information to their actions \cite{nayyar2013decentralized}. The optimal values of the team problem and this auxiliary MDP are identical. Further, an optimal strategy for the team problem can be obtained using any optimal solution of the auxiliary MDP with a simple transformation. The optimal value and strategies of this auxiliary MDP (and thus the team problem) can be characterized by dynamic programs (a.k.a. Bellman equations or recursive formulas). A key consequence of this characterization is that the CIB is a sufficient statistic for optimal control in team problems. We investigate whether a similar approach can be used to characterize values and strategies in zero-sum games between teams. This extension is not straightforward. In general games (i.e., not necessarily zero-sum), it may not be possible to obtain such dynamic programs (DPs) and/or sufficient statistics \cite{nayyar2017information,tang2021dynamic}. However, we show that for \emph{zero-sum} games between teams, the values can be characterized by CIB based DPs. Further, we show that for some specialized models, the CIB based DPs can be used to characterize a min-max strategy as well. A key implication of our result is that this CIB based approach can be used to solve several team problems considered before \cite{nayyar2013decentralized,xie2020optimally,foerster2019bayesian} even in the presence of certain types of adversaries.

{A phenomenon of particular interest and importance in team problems is \emph{signaling}. Players in a team can agree upon their control strategies \emph{ex ante}. Based on these agreed upon strategies, a player can often make inferences about the system state or the other players' private information (which are otherwise inaccessible to the player). This implicit form of communication between players is referred to as {signaling} and can be vital for effective coordination. While signaling is beneficial in cooperative teams, it can be detrimental in the presence of an adversary. This is because the adversary can exploit it to infer sensitive private information and inflict severe damage upon the system. A concrete example that illustrates this trade-off between \emph{signaling} and \emph{secrecy} is discussed is Section \ref{specialcases}. Our framework can be used optimize this trade-off in several stochastic games between teams.}

\paragraph{Related Work on Games}
 Zero-sum games between two individual players with asymmetric information have been extensively studied. In \cite{mertens2015repeated,rosenberg2004stochastic,renault2012value,gensbittel2014existence,li2016recursive,kartik2020upper}, stochastic zero-sum games with varying degrees of generality were considered and dynamic programming characterizations of the value of the game were provided. Various properties of the value functions (such as continuity) were also established and for some specialized information structures, these works also characterize a min-max strategy. Linear programs for computing the values and strategies in certain games were proposed in \cite{zheng2013decomposition,li2014lp}; and methods based on heuristic search value iteration (HSVI) \cite{smith2004heuristic} to compute the value of some games were proposed in \cite{horak2017heuristic,horak2019solving}. Zero-sum \emph{extensive form} games in which a team of players competes against an adversary have been studied in \cite{von1997team,farina2018ex,zhang2020computing}. Structured Nash equilibria in general games (i.e. not necessarily zero-sum) were studied in \cite{nayyar2014common,ouyang2017dynamic,vasal2019systematic} under some assumptions on the system dynamics and players' information structure. 
{A combination of reinforcement learning and search was used in \cite{rebel} to solve two-player zero-sum games. While this approach has very strong empirical performance, a better analytical understanding of it is needed.}
{Our work is closely related to \cite{kartik2019stochastic,kartik2020upper} and builds on their results. Our novel contributions in this paper over past works are summarized below.}

\paragraph{Contributions} (i) In this paper, we study a \emph{general} class of stochastic zero-sum games between two competing teams of players.  Our general model captures a variety of information structures including many previously considered stochastic team \cite{oliehoek2016concise,foerster2019bayesian,xie2020optimally} and zero-sum game models \cite{kartik2020upper,horak2017heuristic,horak2019solving} as well as game models that have \emph{not} been studied before. Our results provide a unified framework for analyzing a wide class of game models that satisfy some minimal assumptions. (ii) For our general model, {we adapt the techniques in \cite{kartik2020upper} to} provide bounds on the upper (min-max) and lower (max-min) values of the game. These bounds provide us with fundamental limits on the performance achievable by either team. Furthermore, if the upper and lower values of the game are identical (i.e., if the game has a \emph{value}), our bounds coincide with the value of the game.  Our bounds are obtained using two  dynamic programs (DPs) based on a sufficient statistic known as the common information belief (CIB). (iii) We also identify a subclass of game models in which only one of the teams (say the minimizing team) controls the evolution of the CIB. In these cases, we show that one of our CIB based dynamic programs can be used to find the min-max value as well as a min-max \emph{strategy}\footnote{{Note that this characterization of a min-max strategy is \emph{not} present in \cite{kartik2020upper}. A similar result for a \emph{very specific model with limited applicability} exists in \cite{kartik2019stochastic}. Our result is substantially more general than that in \cite{kartik2019stochastic}.}}. (iv) Our result reveals that the structure of the CIB based min-max strategy is similar to the structure of team optimal strategies. Such structural results have been successfully used in prior works \cite{foerster2019bayesian,rebel} to design efficient strategies for significantly challenging team problems. (v) Lastly, we discuss an approximate dynamic programming approach along with key structural properties for computing the values (and the strategy when applicable) and illustrate our results with the help of an example. 
% Our work is most closely related to \cite{kartik2020upper} which considers stochastic zero-sum games between two individual players. Our results on zero-sum games between teams are more general than those in \cite{kartik2020upper} and to the best of our knowledge, the first of their kind.

% Our results show that a wide class of game models can be analyzed in a unified way without requiring a case-by-case approach.

\paragraph{Notation}\label{notation}
Random variables are denoted by upper case letters, their realizations by the corresponding lower case letters. In general, subscripts are used as time index while superscripts are used to index decision-making agents. For time indices $t_1\leq t_2$, $\rv{X}_{t_1:t_2}$ is the short hand notation for the variables $(\rv{X}_{t_1},\rv{X}_{t_1+1},...,\rv{X}_{t_2})$. Similarly, $\rv{X}^{1:2}$ is the short hand notation for the collection of variables $(\rv{X}^1,\rv{X}^2)$.
%The indicator function of set $E$ is denoted by $\mathds{1}_{E}(\cdot)$, that is, $\mathds{1}_{E}(x) = 1$ if $x \in E$, and $0$ otherwise.
Operators $\Py(\cdot)$ and $\E[\cdot]$ denote the probability of an event, and the expectation of a random variable respectively.
For random variables/vectors $X$ and $Y$, $\Py(\cdot | \rv{Y}=y)$, $\E[\rv{X}| \rv{Y}=y]$ and $\Py(\rv{X} = x \mid \rv{Y} = y)$ are denoted by $\Py(\cdot | y)$, $\E[\rv{X}|y]$ and $\Py(x \mid y)$, respectively. 
For a strategy $g$, we use $\Py^g(\cdot)$ (resp. $\E^g[\cdot]$) to indicate that the probability (resp. expectation) depends on the choice of $g$. For any finite set $\mathcal{A}$, $\Delta\mathcal{A}$ denotes the probability simplex over the set $\mathcal{A}$.
For any two sets $\mathcal{A}$ and $\mathcal{B}$, $\mathcal{F}(\mathcal{A},\mathcal{B})$ denotes the set of all functions from $\mathcal{A}$ to $\mathcal{B}$. We define \textsc{rand} to be mechanism that given (i) a finite set $\mathcal{A}$, (ii) a distribution $d$ over $\mathcal{A}$ and a random variable $K$ uniformly distributed over the interval $(0,1]$, produces a random variable $X \in \mathcal{A}$ with distribution $d$, i.e.,
\begin{align}
X = \rand(\mathcal{A},d,K) \sim d.\label{randmech}
\end{align}

\section{Problem Formulation}
\label{probform}
Consider a dynamic system with two teams. Team 1 has $N_1$ players and Team 2 has $N_2$ players. The system operates in discrete time over a horizon\footnote{With a sufficiently large planning horizon, infinite horizon problems with discounted cost can be solved approximately as finite-horizon problems.} $T$. Let $\rv{X}_t \in \X_t$ be the state of the system at time $t$, and let $\rv{U}_t^{i,j} \in \U_t^{i,j}$ be the action of Player $j$, $j \in \{1,\dots,N_i\}$, in Team $i$, $i \in \{1,2\}$, at time $t$. Let 
\begin{align*}
    &U_t^1 \doteq \left(U_t^{1,1},\dots, U_t^{1,N_1}\right); & U_t^2 \doteq \left(U_t^{2,1},\dots, U_t^{2,N_2}\right),
\end{align*}
and $\mathcal{U}_t^i$ be the set of all possible realizations of $U_t^i$. {We will refer to $U_t^i$ as Team $i$'s action at time $t$.} The state of the system evolves in a controlled Markovian manner as
\begin{align}
\label{statevol}\rv{X}_{t+1} = f_t(\rv{X}_t, \rv{U}_t^1,\rv{U}_t^2,\rv{W}_t^s),
\end{align}
where $\rv{W}_t^s$ is the system noise. There  is an observation process $\rv{Y}_t^{i,j} \in \Y_t^{i,j}$ associated with each Player $j$ in Team $i$ and is given as
\begin{align}
\label{obseq}\rv{Y}_t^{i,j} = h_t^{i,j}(\rv{X}_t, \rv{U}_{t-1}^1,\rv{U}_{t-1}^2,\rv{W}_t^{i,j}),
\end{align}
where $\rv{W}_t^{i,j}$ is the observation noise. Let us define
\begin{align*}
    &Y_t^1 \doteq \left(Y_t^{1,1},\dots, Y_t^{1,N_1}\right);&Y_t^2 \doteq \left(Y_t^{2,1},\dots, Y_t^{2,N_2}\right).
\end{align*} We assume that the sets $\X_t$, $\U_t^{i,j}$ and $\Y_t^{i,j}$ are finite for all $i,j$ and $t$. Further,  the random variables $\rv{X}_1, \rv{W}_t^s, \rv{W}_t^{i,j}$ (referred to as \emph{the primitive random variables}) can take finitely many values and are mutually independent.
\begin{remark}
An alternative approach commonly used for characterizing system dynamics and observation models is to specify the transition and observation probabilities. We emphasize that this alternative characterization is equivalent to ours in equations \eqref{statevol} and \eqref{obseq} \cite{kumar2015stochastic}.
\end{remark}

\paragraph{Information Structure}\label{infostruct}
{At time $t$, Player $j$ in Team $i$ has access to \emph{a subset of all observations and actions generated so far}. Let $I^{i,j}_t$ denote the collection of variables (i.e. observations and actions) available to Player $j$ in team $i$ at time $t$.  Then  $ \rv{I}_t^{i,j} \subseteq \cup_{i,j}\{\rv{Y}^{i,j}_{1:t}, \rv{U}^{i,j}_{1:t-1}\} $. The set of all possible realizations of $\rv{I}_t^{i,j}$ is denoted by $\mathcal{I}^{i,j}_t$. Examples of such information structures include $I_t^{i,j} = \{Y_{1:t}^{i,j},U_{1:t-1}^{i,j}\}$ which corresponds to the information structure in Dec-POMDPs \cite{oliehoek2016concise} and $I_t^{i,j} = \{Y_{1:t}^{i,j},Y^{1:2}_{1:t-d},U_{1:t-1}^{1:2}\}$ wherein each player's actions are seen by all the players and their observations become public after a delay of $d$ time steps.}

Information $\rv{I}_t^{i,j}$ can be decomposed into \emph{common} and \emph{private} information, i.e. $\rv{I}_t^{i,j} = \rv{C}_t \cup \rv{P}_t^{i,j}$; common information $\rv{C}_t$ is the set of variables known to \emph{all} players at time $t$. The private information $P_t^{i,j}$ for Player $j$ in Team $i$ is defined as $\rv{I}_t^{i,j}\setminus C_t$. Let 
\begin{align*}
    &P_t^1 \doteq \left(P_t^{1,1},\dots, P_t^{1,N_1}\right); &P_t^2 \doteq \left(P_t^{2,1},\dots, P_t^{2,N_2}\right).
\end{align*}
We will refer to $P_t^i$ as Team $i$'s private information. Let $\mathcal{C}_t$ be the set of all {possible} realizations of common information at time $t$,  $\mathcal{P}_t^{i,j}$ be the set of all {possible} realizations of private information for Player $j$ in Team $i$ at time $t$ {and $\mathcal{P}_t^{i}$ be the set of all possible realizations of $P^i_t$}.  We make the following assumption on the evolution of common and private information. This is similar to Assumption 1 of  \cite{nayyar2014common,kartik2020upper}. 

\begin{assumption}\label{infevolve}
The evolution of common and private information available to the players is as follows:
(i) The common information $\rv{C}_t$ is non-decreasing with time, i.e. $\rv{C}_t \subseteq \rv{C}_{t+1}$. Let $\rv{Z}_{t+1} \doteq \rv{C}_{t+1}\setminus \rv{C}_t$ be the increment in common information. Thus, $\rv{C}_{t+1} = \{\rv{C}_t,\rv{Z}_{t+1}\}$. Furthermore,
\begin{align}
\label{commonevol}\rv{Z}_{t+1} = \zeta_{t+1}(\rv{P}_t^{1:2},\rv{U}_t^{1:2},\rv{Y}_{t+1}^{1:2}),
\end{align}
where $\zeta_{t+1}$ is a fixed transformation.
(ii) The private information evolves as
\begin{align}
\label{privevol}\rv{P}^{i}_{t+1} = \xi_{t+1}^{i}(\rv{P}_t^{1:2},\rv{U}_t^{1:2},\rv{Y}_{t+1}^{1:2}),
\end{align}
where $\xi_{t+1}^{i}$ is a fixed transformation {and $i=1,2$}.

\end{assumption}
As noted in \cite{nayyar2013decentralized,kartik2020upper}, a number of information structures satisfy the above assumption. Our analysis applies to any information structure that satisfies Assumption \ref{infevolve} including, among others, Dec-POMDPs and the delayed sharing information structure discussed above.

\paragraph{Strategies and Values}
Players can use any information available to them to select their actions and we allow behavioral strategies for all players. Thus, at time $t$, Player $j$ in Team $i$ chooses a distribution $\delta\rv{U}_t^{i,j}$ over its action space using a \emph{control law} $g_t^{i,j}:  \mathcal{I}_t^{i,j} \rightarrow \Delta\mathcal{U}_t^{i,j}$, i.e., $\delta\rv{U}_t^{i,j} = g_t^{i,j}(\rv{I}_t^{i,j}) = g_t^{i,j}(\rv{C}_t,\rv{P}_t^{i,j}).$
% \begin{equation}
% \delta\rv{U}_t^{i,j} = g_t^{i,j}(\rv{I}_t^{i,j}) = g_t^{i,j}(\rv{C}_t,\rv{P}_t^{i,j}). \label{eq:stg}
% \end{equation}
The distrubtion $\delta U^{i,j}_t$ is then used to randomly generate the control action $U^{i,j}_t$ as follows. We assume that player $j$ of Team $i$ has access to i.i.d. random variables $\rv{K}^{i,j}_{1:T}$ that are uniformly distributed over the interval $(0,1]$. These uniformly distributed variables are independent of each other and of the primitive random variables. The action $U^{i,j}_t$ is generated using $K^{i,j}_t$ and the randomization mechanism described in \eqref{randmech}, i.e.,
\begin{align}
    \label{rand1}\rv{U}_t^{i,j} = \rand(\mathcal{U}_t^{i,j},\delta U_t^{i,j},\rv{K}_t^{i,j}).
\end{align}
%\begin{equation}
%\rv{U}_t^i = \rand(g_t^i(\rv{I}_t^i),\rv{V}_t^i).
%\end{equation}

%\begin{remark}
%We will at times refer to $\delta \rv{U}^i_t$ as player $i$'s \emph{behavioral action} at time $t$.
%\end{remark}

The collection of control laws used by the players in Team $i$ at time $t$ is denoted by $g_t^i \doteq (g_t^{i,1},\dots,g_t^{i,N_i})$ and is referred to as the control law of Team $i$ at time $t$. Let the set of all possible control laws for Team $i$ at time $t$ be denoted by $\mathcal{G}_t^i$. The collection of control laws $\vct{g}^i \doteq (g_1^i,\dots,g_T^i)$ is referred to as the \emph{control strategy} of Team $i$, and the pair of control strategies $(\vct{g}^1,\vct{g}^2)$ is referred to as a \emph{strategy profile}. Let the set of all possible control strategies for Team $i$ be $\G^i$.

The total expected cost associated with a strategy profile $(\vct{g}^1,\vct{g}^2)$ is
\begin{align}
J(\vct{g}^1,\vct{g}^2)\doteq\E^{\left(\vct{g}^1,\vct{g}^2\right)}\left[\sum_{t=1}^T c_t(\rv{X}_t,\rv{U}_t^{1},\rv{U}_t^2)\right], \label{eq:totalcost}
\end{align}
where $c_t:\X_t \times \U_t^1 \times \U_t^2 \rightarrow \R$ is the cost function at time $t$.
%\begin{align}
%J_m(\delta\vct{g}^1,\delta\vct{g}^2)= \sum_{\hat{\vct{g}}^1}\sum_{\hat{\vct{g}}^2}\delta\vct{g}^1(\hat{\vct{g}}^1)J(\hat{\vct{g}}^1,\hat{\vct{g}}^2)\delta\vct{g}^2(\hat{\vct{g}}^2).
%\end{align}
Team 1 wants to minimize the total expected cost, while Team 2 wants to maximize it. We refer to this zero-sum game between Team 1 and Team 2 as Game $\gm{G}$. 
\begin{definition}\label{valdef}
The upper and lower values of the game $\gm{G}$ are respectively defined as
\begin{align}
\label{uppval}S^u(\gm{G}) &\doteq \min_{g^1 \in \mathcal{G}^1}\max_{g^2 \in \mathcal{G}^2} J(g^1,g^2),\\
\label{lowval}S^l(\gm{G}) &\doteq \max_{g^2 \in \mathcal{G}^2}\min_{g^1 \in \mathcal{G}^1} J(g^1, g^2).
\end{align}
If the upper and lower values are the same,  they are referred to as  the value of the game and denoted by $S(\gm{G})$. The minimizing strategy in \eqref{uppval} is referred to as Team 1's optimal strategy and the maximizing strategy in \eqref{lowval} is referred to as Team 2's optimal strategy\footnote{The strategy spaces $\G^1$ and $\G^2$ are compact and the cost $J(\cdot)$ is continuous in $g^1,g^2$. Hence, the existence of optimal strategies can be established using Berge's maximum theorem \cite{guide2006infinite}.}.
\end{definition}

A key objective of this work is to characterize the upper and lower values $S^u(\gm{G})$ and $S^l(\gm{G})$ of Game $\gm{G}$. 
To this end, we will define an \emph{expanded} virtual game $\gm{G}_e$. This virtual game will be used to obtain bounds on the upper and lower values of the original game $\gm{G}$. These bounds happen to be tight when the upper and lower values of game $\gm{G}$ are equal. For a sub-class of information structures, we will show that the expanded virtual game $\gm{G}_e$ can be used to obtain optimal strategies for one of the teams.

 \begin{remark}
 An alternative way of randomization is to use \emph{mixed strategies} wherein a player randomly chooses a deterministic strategy at the beginning of the game and uses it for selecting its actions. According to Kuhn's theorem, mixed and behavioral strategies are equivalent when players have perfect recall \cite{maschler2013game}. 
 \end{remark}
 {\begin{remark}[Independent and Shared Randomness]
 In most situations, the source of randomization is either privately known to the player (as in \eqref{rand1}) or publicly known to all the players in both teams. In this paper, we focus on independent randomization as in \eqref{rand1}.
 In some situations, a shared source of randomness may be available to all players in Team $i$ but not to any any of the players in the opposing team. Such shared randomness can help players in a team coordinate better. We believe that our approach can be extended to this case as well with some modifications.
 \end{remark}}

%\paragraph{Team Nash Equilibria}
We note that if the upper and lower values of game $\gm{G}$ are the same, then any pair of optimal strategies $(g^{1*},g^{2*})$ forms a \emph{Team Nash Equilibrium}\footnote{When players in a team randomize independently, Team Nash equilirbia may not exist in general \cite{anantharam2007common}.}, i.e., for every $\vct{g}^1 \in \G^1$ and $\vct{g}^2 \in \G^2$,
\begin{align*}
J(\vct{g}^{1*},\vct{g}^{2}) \leq J(\vct{g}^{1*},\vct{g}^{2*}) \leq J(\vct{g}^{1},\vct{g}^{2*}).
\end{align*}
In this case, $J(g^{1*},g^{2*})$ is the value of the game, i.e. $J(\vct{g}^{1*},\vct{g}^{2*}) = S^l(\gm{G}) = S^u(\gm{G}) = S(\gm{G}).$  Conversely, if a Team Nash Equilibrium exists, then the upper and lower values are the same \cite{osborne1994course}.

\section{Expanded Virtual Game $\gm{G}_e$}\label{expandedgame}
The expanded virtual game $\gm{G}_e$ is constructed using the methodology in \cite{kartik2020upper}. This game involves the same underlying system model as in game $\gm{G}$. The key distinction between games $\gm{G}$ and $\gm{G}_e$ lies in the manner in which the actions used to control the system are chosen. In game $\gm{G}_e$, 
all the players in each team of game $\gm{G}$ are replaced by a virtual player. Thus, game $\gm{G}_e$ has two virtual players, one for each team, and they operate as follows. 

\paragraph{Prescriptions} Consider virtual player $i$ associated with Team $i$, $i=1,2$. At each time $t$ and for each $j = 1,\dots,N_i$, virtual player $i$ selects a function $\Gamma^{i,j}_t$ that maps private information $P^{i,j}_t$ to a distribution $\delta \rv{U}_t^{i,j}$ over the space $\mathcal{U}_t^{i,j}$. Thus, $\delta U_t^{i,j} = \Gamma_t^{i,j}(P_t^{i,j})$. The set of all such mappings is denoted by $\mathcal{B}_t^{i,j} \doteq \mathcal{F}(\P_t^{i,j},\Delta \U_t^{i,j})$. We refer to the tuple $ \Gamma_t^i \doteq (\Gamma_t^{i,1},\dots,\Gamma_t^{i,N_i})$ of such mappings as virtual player $i$'s \emph{prescription} at time $t$. {The set of all possible prescriptions for virtual player $i$ at time $t$ is denoted by $\mathcal{B}_t^i \doteq \mathcal{B}_t^{i,1}\times\dots\times\mathcal{B}_t^{i,N_i}$.}
Once virtual player $i$ selects its prescription, the action $U^{i,j}_t$ is randomly generated according to the distribution $\Gamma_t^{i,j}(\rv{P}_t^{i,j})$. More precisely,
\begin{align}
\rv{U}_t^{i,j} &= \rand(\U_t^{i,j},\Gamma_t^{i,j}(\rv{P}_t^{i,j}),\rv{K}_t^{i,j}),
\end{align}
where the random variable $K^{i,j}_t$ and the mechanism $\rand$ are the same as in equation \eqref{rand1}.
% Virtual player $i$'s prescription induces a mapping between team $i$'s private information $\P_t^i$ and $\Delta \U_t^i$ as defined below. We will refer to this mapping as the \emph{behavioral form} of virtual player $i$'s prescription and denote it with $\Gamma_t^i$. 

%Note that $\mathcal{B}_t^i \subseteq \mathcal{F}(\P_t^i,\Delta \U_t^i)$ and is usually a strict subset because of the independent randomization in \eqref{randmech}.

%For a prescription $\gamma_t^i$, the probability of selecting an action $u_t^i$ when the private information is $p_t^i$ is denoted by $\gamma_t^i(p_t^i ; u_t^i)$.

%% Note that equations (\ref{virdyn1}-\ref{virdyn4}) describes a well-defined dynamic system.

\paragraph{Strategies} The virtual players in game $\gm{G}_e$ have access to the common information $\rv{C}_t$ and all the past prescriptions of both players, i.e., $\Gamma_{1:t-1}^{1:2}$. Virtual player $i$ selects its prescription at time $t$  using a control law $\tilde{\chi}_t^i$, i.e., $\Gamma_t^i = \tilde{\chi}_t^i(\rv{C}_t,\Gamma_{1:t-1}^{1:2}).$
Let $\tilde{\mathcal{H}}_t^i$ be the set of all such control laws at time $t$ and
$\tilde{\mathcal{H}}^i  \doteq \tilde{\mathcal{H}}_1^i \times \dots \times \tilde{\mathcal{H}}_T^i$ be the set of all control strategies for virtual player $i$. 
The total cost for a strategy profile $(\tilde{\vct{\chi}}^1,\tilde{\vct{\chi}}^2)$ is
\begin{align}
{\mathcal{J}}(\tilde{\vct{\chi}}^1,\tilde{\vct{\chi}}^2)=\E^{(\tilde{\vct{\chi}}^1,\tilde{\vct{\chi}}^2)}\left[\sum_{t=1}^T c_t(\rv{X}_t,\rv{U}_t^{1},\rv{U}_t^2)\right]. \label{eq:virtualJ}
\end{align}
The upper and lower values in $\gm{G}_e$ are defined as
\begin{align*}
S^u(\gm{G}_e) &\doteq \min_{\tilde{\chi}^1 \in \tilde{\mathcal{H}}^1}\max_{\tilde{\chi}^2 \in \tilde{\mathcal{H}}^2} {\mathcal{J}}(\tilde{\chi}^1,\tilde{\chi}^2)\\
S^l(\gm{G}_e) &\doteq \max_{\tilde{\chi}^2 \in \tilde{\mathcal{H}}^2}\min_{\tilde{\chi}^1 \in \tilde{\mathcal{H}}^1} {\mathcal{J}}(\tilde{\chi}^1, \tilde{\chi}^2).
\end{align*}

The following theorem establishes the relationship between the upper and lower values of the expanded game $\gm{G}_e$ and the original game $\gm{G}$. This result is analogous to Theorem 1 from \cite{kartik2020upper}.

\begin{theorem}[Proof in App. \ref{virtgameproof}]\label{origvirt}
The lower and upper values of the two games described above satisfy the following: $S^l(\gm{G})  \leq S^l(\gm{G}_e) \leq S^u(\gm{G}_e) \leq  S^u(\gm{G}).$
% \begin{align}
% \label{valuerel}S^l(\gm{G})  \leq S^l(\gm{G}_e) \leq S^u(\gm{G}_e) \leq  S^u(\gm{G}).
% \end{align}
Further, all these inequalities become equalities when a Team Nash equilibrium exists in Game $\gm{G}$.
\end{theorem}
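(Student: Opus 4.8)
The plan is to prove the four-term chain by treating the middle inequality separately from the two outer inequalities, and then to collapse the chain under the Team Nash hypothesis with a squeeze argument. The middle inequality $S^l(\gm{G}_e) \leq S^u(\gm{G}_e)$ requires no structure of the game at all: it is the generic max-min $\leq$ min-max inequality, which holds for the single cost functional $\mathcal{J}(\tilde{\chi}^1,\tilde{\chi}^2)$ over the product strategy space $\tilde{\mathcal{H}}^1 \times \tilde{\mathcal{H}}^2$. All the real work lies in the outer inequalities $S^l(\gm{G}) \leq S^l(\gm{G}_e)$ and $S^u(\gm{G}_e) \leq S^u(\gm{G})$; these are mirror images of one another with the two teams' roles interchanged, so I would prove one in full and obtain the other by symmetry.

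The first ingredient is an embedding of each team's strategy space $\G^i$ into the virtual player's space $\tilde{\mathcal{H}}^i$. Given $g^i \in \G^i$ with component laws $g_t^{i,j}(\rv{C}_t,\cdot)$, I would define the induced \emph{common-information-only} virtual strategy $\tilde{\chi}^i$ by setting the prescription $\Gamma_t^{i,j} \doteq g_t^{i,j}(\rv{C}_t,\cdot)$, so that $\tilde{\chi}_t^i$ ignores the prescription history $\Gamma_{1:t-1}^{1:2}$ and depends on $\rv{C}_t$ alone. Because the action-generation rule \eqref{rand1} and the variables $\rv{K}_t^{i,j}$ are shared by both games, this embedding preserves the conditional law of $\rv{U}_t^{i,j}$ given $(\rv{C}_t,\rv{P}_t^{i,j})$; an induction on $t$ using the dynamics \eqref{statevol}, the observation model \eqref{obseq}, and the information-evolution rules \eqref{commonevol}--\eqref{privevol} then shows that the full joint distribution of states, actions, and observations agrees in the two games under corresponding strategies, whence $\mathcal{J}(\tilde{\chi}^1,\tilde{\chi}^2) = J(g^1,g^2)$ whenever both virtual strategies are common-information-only and induced by $g^1,g^2$.

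The second and crucial ingredient is a \emph{redundancy} lemma: if virtual player $1$ uses a common-information-only strategy, then against it virtual player $2$ gains nothing from observing the prescription history. Indeed, when $\tilde{\chi}^1$ depends on $\rv{C}_t$ alone, player $1$'s prescriptions $\Gamma_{1:t-1}^1$ are deterministic functions of $\rv{C}_{1:t-1} \subseteq \rv{C}_t$; an induction on $t$ then shows that player $2$'s own prescriptions $\Gamma_{1:t-1}^2$ are likewise deterministic functions of $\rv{C}_t$, since they are produced by feeding common information and already-determined prescriptions into $\tilde{\chi}^2$. Hence the pair $(\rv{C}_t,\Gamma_{1:t-1}^{1:2})$ carries no more than $\rv{C}_t$, so every $\tilde{\chi}^2 \in \tilde{\mathcal{H}}^2$ may be replaced, without changing $\mathcal{J}$, by a common-information-only strategy, which in turn corresponds via the embedding to a team-$2$ strategy $g^2 \in \G^2$. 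Combined with value-preservation, this yields, for $\tilde{\chi}^1$ induced by any $g^1$, the identity $\max_{\tilde{\chi}^2 \in \tilde{\mathcal{H}}^2} \mathcal{J}(\tilde{\chi}^1,\tilde{\chi}^2) = \max_{g^2 \in \G^2} J(g^1,g^2)$. Minimizing both sides over common-information-only $\tilde{\chi}^1$ (equivalently over $g^1 \in \G^1$) gives $\min_{\tilde{\chi}^1\,\mathrm{c.i.}}\max_{\tilde{\chi}^2}\mathcal{J} = S^u(\gm{G})$; since minimizing over the strictly larger set $\tilde{\mathcal{H}}^1$ can only decrease the value, $S^u(\gm{G}_e) \leq S^u(\gm{G})$ follows. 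The mirror argument, with the teams interchanged and $\min$/$\max$ swapped (so that maximizing over the larger set can only increase the value), gives $S^l(\gm{G}) \leq S^l(\gm{G}_e)$.

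Finally, for the equality claim I would invoke the fact recorded after Definition \ref{valdef} that the existence of a Team Nash equilibrium in $\gm{G}$ forces $S^l(\gm{G}) = S^u(\gm{G})$. The established chain $S^l(\gm{G}) \leq S^l(\gm{G}_e) \leq S^u(\gm{G}_e) \leq S^u(\gm{G})$ is then pinched between two equal endpoints, so all four quantities coincide. I expect the main obstacle to be the redundancy lemma of the third paragraph: one must argue carefully that, against a common-information-only opponent, the prescription history adds no information, and in particular that restricting the responding virtual player to common-information-only strategies is genuinely without loss. The delicate point is the inductive bookkeeping showing that both players' prescriptions collapse to deterministic functions of $\rv{C}_t$, together with the verification that the embedding preserves the \emph{entire} joint distribution --- and hence the expected cost --- and not merely the marginal action laws.
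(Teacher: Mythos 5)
Your proposal is correct and takes essentially the same route as the paper: your embedding of team strategies as common-information-only virtual strategies is exactly the paper's intermediate game $\gm{G}_v$ with its bijection $\mathcal{M}^i$, and your ``redundancy lemma'' (prescriptions roll out as deterministic functions of $\rv{C}_t$, so the responding virtual player can be restricted to common-information-only strategies without changing the cost) is precisely the paper's map $\varrho$ together with Lemma \ref{evolequi}. The concluding set-inclusion inequalities and the squeeze under a Team Nash equilibrium also coincide with the paper's argument.
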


\subsection{The Dynamic Programming Characterization}\label{dpsec}
We describe a methodology for finding the upper and lower values of the expanded game $\gm{G}_e$ in this subsection. {The results (and their proofs) in this subsection are similar to those in Section 4.2 of \cite{kartik2020upper}. However, the prescription spaces $\mathcal{B}_t^i$ in this paper are different (and more general) from those in \cite{kartik2020upper}, and thus our results in this paper are more general.} Our dynamic program is based on a sufficient statistic for virtual players in game $\gm{G}_e$ called the common information belief (CIB).
\begin{definition}
At time $t$, the common information belief (CIB), denoted by $\Pi_t$, is defined as the virtual players' belief on the  state and private information based on their information in game $\gm{G}_e$. Thus, for each $x_t \in \X_t, p_t^1 \in \P_t^1$ and $p_t^2 \in \P_t^2,$ we have
\begin{align*}
&\Pi_t(\vct{x}_t,\vct{p}_t^{1:2}) \doteq \Py\left[\rv{X}_t = \vct{x}_t,\rv{P}_t^{1:2} = \vct{p}_t^{1:2} \mid \rv{C}_t,\Gamma_{1:t-1}^{1:2}\right].
\end{align*}
The belief $\Pi_t$ takes values in the set $\mathcal{S}_t \doteq \Delta(\mathcal{X}_t \times \mathcal{P}^1_t \times \mathcal{P}^2_t)$.
\end{definition}
% The virtual players can compute the CIB using the information available to them, i.e. common information $C_t$ and prescription history $\Gamma_{1:t-1}^{1:2}$, without using their strategies. 
The following lemma describes an update rule that can be used to compute the CIB.

\begin{lemma}[Proof in App. \ref{infstateproof}]\label{infstate}
For any strategy profile $(\tilde{\vct{\chi}}^{1}, \tilde{\vct{\chi}}^{2})$ in Game $\gm{G}_e$, the common information based belief $\Pi_t$ evolves almost surely as
\begin{align}
\Pi_{t+1} = F_t(\Pi_t, \Gamma_t^{1:2},\vct{Z}_{t+1}), \label{eq:pieq}
\end{align}
where $F_t$ is a fixed transformation that does not depend on the virtual players' strategies. Further, the total expected cost can be expressed as
\begin{align}
{\mathcal{J}}(\tilde{\vct{\chi}}^1,\tilde{\vct{\chi}}^2)=\E^{(\tilde{\vct{\chi}}^1,\tilde{\vct{\chi}}^2)}\left[\sum_{t=1}^T \tilde{c}_t(\Pi_t,\Gamma_t^1,\Gamma_t^2)\right],
\end{align}
where the function $\tilde{c}_t$ is as defined in equation (\ref{tildec}) in Appendix \ref{infstateproof}.
\end{lemma}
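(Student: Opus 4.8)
The plan is to prove Lemma~\ref{infstate} in two parts: first establishing the update rule \eqref{eq:pieq} for the CIB, and then deriving the cost representation in terms of $\Pi_t$ and the prescriptions. The key conceptual point throughout is that the CIB $\Pi_t$ is a \emph{sufficient statistic} that, together with the current prescriptions, determines the relevant conditional distributions needed for both the belief update and the cost.

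For the belief update, I would apply Bayes' rule to $\Pi_{t+1}(x_{t+1},p_{t+1}^{1:2}) = \Py[\rv{X}_{t+1}=x_{t+1},\rv{P}_{t+1}^{1:2}=p_{t+1}^{1:2} \mid \rv{C}_{t+1},\Gamma_{1:t}^{1:2}]$. Using $\rv{C}_{t+1}=\{\rv{C}_t,\rv{Z}_{t+1}\}$ from Assumption~\ref{infevolve}, I would condition on $(\rv{C}_t,\Gamma_{1:t}^{1:2})$ and the increment $\rv{Z}_{t+1}$. The numerator of the Bayesian update is the joint probability of the new state, new private information, and the observed increment $\vct{Z}_{t+1}$, all conditioned on the past. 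The plan is to expand this joint probability by summing over $(x_t,p_t^{1:2},u_t^{1:2},y_{t+1}^{1:2})$ and factoring it into: (i) the prior $\Pi_t(x_t,p_t^{1:2})$; (ii) the action-generation probabilities $\prod_{i,j}\Gamma_t^{i,j}(p_t^{i,j})(u_t^{i,j})$, which are determined entirely by $\Pi_t$'s support and $\Gamma_t^{1:2}$; (iii) the state-transition and observation kernels induced by $f_t$ and $h_t^{i,j}$ on the primitive noises; and (iv) indicator functions enforcing the deterministic transformations $\zeta_{t+1}$ and $\xi_{t+1}^i$ from \eqref{commonevol} and \eqref{privevol}. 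The crucial observation is that every factor depends only on $\Pi_t$, $\Gamma_t^{1:2}$, and $\vct{Z}_{t+1}$ — not on the full strategies $\tilde{\vct{\chi}}^{1:2}$ — because the strategies affect $\Pi_{t+1}$ only through the realized prescriptions, which are being conditioned upon. This yields the fixed transformation $F_t$. I would verify the update holds almost surely by noting the denominator (the probability of observing $\vct{Z}_{t+1}$) is positive on the relevant event.

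For the cost representation, the plan is to define $\tilde{c}_t(\Pi_t,\Gamma_t^1,\Gamma_t^2) \doteq \E[c_t(\rv{X}_t,\rv{U}_t^1,\rv{U}_t^2) \mid \rv{C}_t,\Gamma_{1:t}^{1:2}]$ and show this conditional expectation is a function of only $(\Pi_t,\Gamma_t^1,\Gamma_t^2)$. I would expand it as
\begin{align*}
\sum_{x_t,p_t^{1:2}} \Pi_t(x_t,p_t^{1:2}) \sum_{u_t^{1:2}} \Big(\textstyle\prod_{i,j}\Gamma_t^{i,j}(p_t^{i,j})(u_t^{i,j})\Big) c_t(x_t,u_t^1,u_t^2),
\end{align*}
which manifestly depends only on $\Pi_t$ and the current prescriptions. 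Then by the tower property of conditional expectation, $\E^{(\tilde{\vct{\chi}}^1,\tilde{\vct{\chi}}^2)}[c_t(\rv{X}_t,\rv{U}_t^1,\rv{U}_t^2)] = \E^{(\tilde{\vct{\chi}}^1,\tilde{\vct{\chi}}^2)}[\tilde{c}_t(\Pi_t,\Gamma_t^1,\Gamma_t^2)]$, and summing over $t$ gives the claimed expression.

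The main obstacle I anticipate is the careful bookkeeping in the Bayesian update step — specifically, correctly identifying which primitive random variables are conditionally independent given the history and verifying that the action-generation randomness ($\rv{K}_t^{i,j}$) and observation noise remain independent of the conditioning variables. The subtle point is confirming that conditioning on the prescriptions $\Gamma_{1:t}^{1:2}$ (which are themselves random, being outputs of the control laws $\tilde{\chi}_t^i$ applied to $\rv{C}_t$ and past prescriptions) does not introduce unexpected statistical dependencies that would make the factorization fail. Resolving this requires arguing that, given the prescription sequence, the system evolution depends on the strategies only through those prescriptions, which is precisely the structural property that makes $\Pi_t$ a valid controlled Markov state. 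Since this mirrors the argument in Section~4.2 of \cite{kartik2020upper} (adapted to the more general prescription spaces $\mathcal{B}_t^i$ used here), I expect the proof to follow the same template, with the generalization being essentially notational.
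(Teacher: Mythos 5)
Your proposal follows essentially the same route as the paper's proof in Appendix~\ref{infstateproof}: a Bayes'-rule update for $\Pi_{t+1}$ whose numerator and denominator factor into the prior $\Pi_t$, the prescription-induced action probabilities, and a strategy-independent kernel (your explicit split into transition/observation kernels and indicators for $\zeta_{t+1},\xi^i_{t+1}$ is just a finer decomposition of the paper's single kernel $\Py[x_{t+1},p^{1:2}_{t+1},z_{t+1}\mid x_t,p^{1:2}_t,u^{1:2}_t]$), handling zero-probability increments separately, followed by the identical tower-property argument with $\tilde{c}_t$ defined exactly as in \eqref{tildec}. You also correctly identify and resolve the one genuine subtlety—that conditioning on the (strategy-dependent) prescription history does not break the factorization—so the proposal is sound as written.
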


%\begin{remark}\label{fremark}
%{Because \eqref{eq:pieq} is an almost sure equality, the transformation $F_t$ in Lemma \ref{infstate} is not necessarily unique. In Appendix \ref{infstateproof}, we identify a class of transformations such that for any transformation $F_t$ in this class, Lemma \ref{infstate} holds. We denote this class by $\mathscr{B}$.}
%\end{remark}

\paragraph{Values in Game $\gm{G}_e$} We now describe two dynamic programs, one for each virtual player in $\gm{G}_e$.
The minimizing virtual player (virtual player 1) in game $\gm{G}_e$ solves the following dynamic program. Define $V^u_{T+1}(\pi_{T+1}) = 0$ for every  $\pi_{T+1}$. In a backward inductive manner, at each time $t \leq T$ and for each possible  common information belief $\pi_t$ and prescriptions $\gamma^1_t, \gamma^2_t$, define the upper cost-to-go function $w_t^u$ and the upper value function $V_t^u$ as
\begin{align}
\label{uppercost}&w^u_t(\pi_t,\gamma_t^1,\gamma_t^2) \\
&\doteq \tilde{c}_t(\pi_t,\gamma_t^1,\gamma_t^2) + \E[V^u_{t+1}(F_t(\pi_{t},\gamma_t^{1:2},\rv{Z}_{t+1}))\mid \pi_t,\gamma_t^{1:2}],\nonumber\\
\label{minequa}&V_t^u(\pi_t) \doteq \min_{{\gamma}_t^1} \max_{\gamma_t^2}w_t^u(\pi_t,\gamma_t^1,\gamma_t^2).
\end{align}
The maximizing virtual player (virtual player 2) solves an analogous max-min dynamic program with a lower cost-to-go function $w_t^l$ and lower value function $V_t^l$ (See App. \ref{equiexistlemmaproof} for details).

\begin{lemma}[Proof in App. \ref{equiexistlemmaproof}]\label{equiexistlemma}
%Consider the min-max dynamic program in (\ref{minequa}). 
For each $t$, there exists a measurable mapping $\Xi^1_t: \mathcal{S}_t \rightarrow \mathcal{B}_t^1$ such that $V_t^u(\pi_t) = \max_{\gamma_t^2}w_t^u(\pi_t,\Xi_t^1(\pi_t),\gamma_t^2)$.
% \begin{align}
% V_t^u(\pi_t) &= \max_{\gamma_t^2}w_t^u(\pi_t,\Xi_t^1(\pi_t),\gamma_t^2).
% \end{align}
Similarly, there exists a measurable mapping $\Xi^2_t: \mathcal{S}_t \rightarrow \mathcal{B}_t^2$ such that $V_t^l(\pi_t) = \min_{\gamma_t^1}w_t^l(\pi_t,\gamma_t^1,\Xi_t^2(\pi_t))$.
% \begin{align}
% V_t^l(\pi_t) &= \min_{\gamma_t^1}w_t^l(\pi_t,\gamma_t^1,\Xi_t^2(\pi_t)).
% \end{align}

\end{lemma}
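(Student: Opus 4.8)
The plan is to prove the statement as a measurable selection result: I will show that the inner minimizer in \eqref{minequa} can be chosen to depend measurably on $\pi_t$, after first establishing enough regularity of the cost-to-go function $w_t^u$. I focus on the construction of $\Xi_t^1$; the argument for $\Xi_t^2$ is entirely symmetric, with $\min$ and $\max$ interchanged and $w_t^u, V_t^u$ replaced by $w_t^l, V_t^l$.

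First I would argue, by backward induction on $t$, that the upper value function $V_t^u$ is continuous on $\mathcal{S}_t$. The base case $V_{T+1}^u \equiv 0$ is immediate. For the inductive step, assume $V_{t+1}^u$ is continuous. Since $\rv{Z}_{t+1}$ takes finitely many values, the conditional expectation in \eqref{uppercost} is a finite sum $\sum_{z}\Py[\rv{Z}_{t+1}=z\mid \pi_t,\gamma_t^{1:2}]\,V_{t+1}^u(F_t(\pi_t,\gamma_t^{1:2},z))$. I would then verify that (i) the effective one-step cost $\tilde{c}_t$ is continuous in $(\pi_t,\gamma_t^{1:2})$, (ii) the map $z \mapsto \Py[\rv{Z}_{t+1}=z\mid \pi_t,\gamma_t^{1:2}]$ depends continuously on $(\pi_t,\gamma_t^{1:2})$, and (iii) the belief update $F_t(\cdot,\cdot,z)$ is continuous. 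These facts follow from the explicit constructions of $\tilde{c}_t$ and $F_t$ in the proof of Lemma \ref{infstate}, where each is assembled from finite sums and products of the coordinates of $\pi_t$ and of the prescription values $\gamma_t^{i,j}(p_t^{i,j})$. Sums, products, and compositions of continuous maps then give continuity of $w_t^u$ on $\mathcal{S}_t \times \mathcal{B}_t^1 \times \mathcal{B}_t^2$.

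Next, I would invoke compactness together with Berge's maximum theorem twice. The prescription sets $\mathcal{B}_t^{i,j}=\mathcal{F}(\P_t^{i,j},\Delta\U_t^{i,j})$ are finite products of probability simplices over finite sets, hence nonempty, compact, and convex; consequently $\mathcal{B}_t^1$ and $\mathcal{B}_t^2$ are compact metric spaces. Applying Berge's theorem to the continuous $w_t^u$ over the compact set $\mathcal{B}_t^2$ shows that $\phi_t(\pi_t,\gamma_t^1)\doteq \max_{\gamma_t^2}w_t^u(\pi_t,\gamma_t^1,\gamma_t^2)$ is continuous in $(\pi_t,\gamma_t^1)$. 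A second application over the compact set $\mathcal{B}_t^1$ shows that $V_t^u(\pi_t)=\min_{\gamma_t^1}\phi_t(\pi_t,\gamma_t^1)$ is continuous in $\pi_t$, which closes the induction, and simultaneously shows that the argmin correspondence $\pi_t \mapsto \{\gamma_t^1 : \phi_t(\pi_t,\gamma_t^1)=V_t^u(\pi_t)\}$ is nonempty, compact-valued, and upper hemicontinuous.

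Finally, to extract $\Xi_t^1$ itself I would apply the measurable maximum theorem (e.g., Theorem 18.19 of \cite{guide2006infinite}) to this correspondence: since $\phi_t$ is a Carath\'eodory function and the constraint correspondence $\pi_t \mapsto \mathcal{B}_t^1$ is constant, compact-valued, and trivially measurable, there exists a measurable selector $\Xi_t^1:\mathcal{S}_t\to\mathcal{B}_t^1$ with $\phi_t(\pi_t,\Xi_t^1(\pi_t))=V_t^u(\pi_t)$, which is exactly the assertion $V_t^u(\pi_t)=\max_{\gamma_t^2}w_t^u(\pi_t,\Xi_t^1(\pi_t),\gamma_t^2)$. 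I expect the main obstacle to be the inductive continuity argument, and within it the verification that $F_t$ and $\tilde{c}_t$ are genuinely continuous under the more general prescription spaces used here; this is the step where the generalization over \cite{kartik2020upper} must be checked, since everything downstream (Berge's theorem and the selection theorem) relies only on continuity and compactness and is insensitive to the particular structure of $\mathcal{B}_t^i$.
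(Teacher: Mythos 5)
Your overall skeleton---backward induction for continuity of the value functions, two applications of Berge's maximum theorem, then a measurable selection theorem---is exactly the paper's route, and the Berge and selection steps are sound. The gap is your item (iii): the belief update $F_t(\cdot,\cdot,z_{t+1})$ is \emph{not} continuous on $\mathcal{S}_t\times\mathcal{B}_t^1\times\mathcal{B}_t^2$. By Definition \ref{fdef} it is the ratio $\mathscr{P}_t^j/\mathscr{P}_t^m$ wherever the denominator is positive, and an \emph{arbitrary} measurable map $G_t$ wherever $\mathscr{P}_t^m(\pi_t,\gamma_t^{1:2};z_{t+1})=0$. Such zero-probability points genuinely occur (behavioral prescriptions may put zero mass on some actions, and beliefs may put zero mass on some $(x_t,p_t^{1:2})$), and at them the posterior admits no continuous extension at all: its directional limits differ. (Standard example: prior $(\epsilon_1,\epsilon_2,1-\epsilon_1-\epsilon_2)$ and an observation with likelihoods $(1,1,0)$ give posterior $\bigl(\tfrac{\epsilon_1}{\epsilon_1+\epsilon_2},\tfrac{\epsilon_2}{\epsilon_1+\epsilon_2},0\bigr)$, which has no limit as $(\epsilon_1,\epsilon_2)\to(0,0)$.) So $F_t$ is not ``assembled from finite sums and products,'' the composition $V_{t+1}^u(F_t(\cdot,\cdot,z_{t+1}))$ need not be continuous, and your first invocation of Berge's theorem is unsupported as written.

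What saves the induction is that $F_t$ only ever enters $w_t^u$ \emph{weighted by} $\mathscr{P}_t^m$. The paper exploits this by extending all functions to the cone $\bar{\mathcal{S}}_t=\{\alpha\pi_t: 0\le\alpha\le 1,\ \pi_t\in\mathcal{S}_t\}$ and proving positive homogeneity of the value functions (Lemma \ref{scalinglemma}), which yields
\begin{align*}
\sum_{z_{t+1}}\mathscr{P}_t^m(\pi_t,\gamma_t^{1:2};z_{t+1})\,V_{t+1}^u\bigl(F_t(\pi_t,\gamma_t^{1:2},z_{t+1})\bigr)
=\sum_{z_{t+1}}V_{t+1}^u\bigl(\mathscr{P}_t^j(\pi_t,\gamma_t^{1:2};z_{t+1},\cdot,\cdot)\bigr);
\end{align*}
each summand on the right is the composition of the polynomial (hence continuous) map $\mathscr{P}_t^j$ in \eqref{jointprob} with the inductively continuous extended $V_{t+1}^u$, and the arbitrary branch $G_t$ never matters because it is always multiplied by zero. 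An alternative patch that stays on $\mathcal{S}_t$: since $V_{t+1}^u$ is continuous on the compact set $\mathcal{S}_{t+1}$, it is bounded by some $M$; the weighted term is continuous on the open set $\{\mathscr{P}_t^m>0\}$, and at points with $\mathscr{P}_t^m=0$ the bound $|\mathscr{P}_t^m\, V_{t+1}^u(F_t)|\le M\,\mathscr{P}_t^m\to 0$ gives continuity by a squeeze argument. Either way, some step absorbing the weight into the value function is indispensable. Once continuity of $w_t^u$ is secured, the rest of your argument (Berge twice, then a measurable selector---the paper invokes Condition 3.3.2 of \cite{hernandez2012discrete} rather than the measurable maximum theorem of \cite{guide2006infinite}, an immaterial difference) goes through exactly as you describe.
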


\begin{theorem}[Proof in App. \ref{dpproof}]\label{dp}
The upper and lower values of the expanded virtual game $\gm{G}_e$ are given by $S^u(\gm{G}_e) = \E[V_1^u(\Pi_1)]$ and $S^l(\gm{G}_e) = \E[V_1^l(\Pi_1)]$.
% \begin{align}
% &S^u(\gm{G}_e) = \E[V_1^u(\Pi_1)]; &S^l(\gm{G}_e) = \E[V_1^l(\Pi_1)].
% \end{align}
\end{theorem}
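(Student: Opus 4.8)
The plan is to establish the first identity $S^u(\gm{G}_e)=\E[V_1^u(\Pi_1)]$ by two matching inequalities; the second identity $S^l(\gm{G}_e)=\E[V_1^l(\Pi_1)]$ then follows by an entirely symmetric argument with the roles of the minimizer and maximizer, and of the triples $(V^u,w^u,\Xi^1)$ and $(V^l,w^l,\Xi^2)$, interchanged. The engine of both inequalities is a backward-induction claim on the conditional cost-to-go. Writing $h_t \doteq (\rv{C}_t,\Gamma^{1:2}_{1:t-1})$ for the common history available to both virtual players at time $t$ --- which by definition determines $\Pi_t$ --- I would repeatedly use two facts from Lemma \ref{infstate}: that the running cost is additive with per-stage term $\tilde c_t(\Pi_t,\Gamma_t^1,\Gamma_t^2)$, and that $\Pi_t$ is a controlled Markov state, i.e.\ the conditional law of $\rv{Z}_{t+1}$ (hence of $\Pi_{t+1}=F_t(\Pi_t,\Gamma^{1:2}_t,\rv{Z}_{t+1})$) given $(h_t,\Gamma^{1:2}_t)$ depends on this conditioning only through $(\Pi_t,\Gamma^{1:2}_t)$. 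Together with the tower property and the determinism of the control laws $\tilde\chi^i_t$, these let me peel off one stage at a time and collapse the continuation into the term $\E[V^u_{t+1}(F_t(\pi_t,\gamma^{1:2}_t,\rv{Z}_{t+1}))\mid \pi_t,\gamma^{1:2}_t]$ appearing in \eqref{uppercost}.

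For the upper bound $S^u(\gm{G}_e)\le \E[V_1^u(\Pi_1)]$, I would let virtual player $1$ adopt the Markov strategy $\Gamma^1_t=\Xi^1_t(\Pi_t)$ supplied by Lemma \ref{equiexistlemma} and fix an \emph{arbitrary} strategy $\tilde\chi^2$ for player $2$. The inductive claim is that, almost surely, $\E[\sum_{s=t}^T \tilde c_s \mid h_t]\le V^u_t(\Pi_t)$, with base case $V^u_{T+1}\equiv 0$. In the inductive step the peeling described above turns the left side into $w^u_t(\Pi_t,\Xi^1_t(\Pi_t),\Gamma^2_t)$; since this holds for whatever prescription $\Gamma^2_t$ player $2$ happens to play, I can bound it by $\max_{\gamma^2_t} w^u_t(\Pi_t,\Xi^1_t(\Pi_t),\gamma^2_t)$, which equals $V^u_t(\Pi_t)$ exactly by the defining property of $\Xi^1_t$ in Lemma \ref{equiexistlemma}. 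Taking $t=1$ and then expectations yields $\mathcal{J}(\tilde\chi^{1},\tilde\chi^2)\le \E[V_1^u(\Pi_1)]$ for every $\tilde\chi^2$, and hence the claimed inequality after applying the outer $\max$ and $\min$.

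For the lower bound $S^u(\gm{G}_e)\ge \E[V_1^u(\Pi_1)]$ I would fix an arbitrary $\tilde\chi^1\in\tilde{\mathcal H}^1$ and construct a response for player $2$ guaranteeing at least $\E[V_1^u(\Pi_1)]$. The crucial observation --- and the step I expect to be the main obstacle --- is that, although prescriptions are chosen simultaneously, the virtual players' strategies are \emph{deterministic} functions of the commonly observed history $h_t$; consequently player $2$, who may tailor its response to the known strategy $\tilde\chi^1$, can reconstruct $\gamma^1_t=\tilde\chi^1_t(h_t)$ before acting and thus effectively best-respond to it. I would therefore let player $2$ play a measurable selector $\Gamma^2_t\in\argmax_{\gamma^2_t} w^u_t(\Pi_t,\gamma^1_t,\gamma^2_t)$, whose existence follows from the same measurable-selection argument underlying Lemma \ref{equiexistlemma}. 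The symmetric backward induction then shows $\E[\sum_{s=t}^T\tilde c_s\mid h_t]\ge V^u_t(\Pi_t)$: peeling one stage gives $w^u_t(\Pi_t,\gamma^1_t,\Gamma^2_t)=\max_{\gamma^2_t}w^u_t(\Pi_t,\gamma^1_t,\gamma^2_t)$, and since a maximum over $\gamma^2_t$ at the fixed prescription $\gamma^1_t$ can only exceed the min-max value, this is at least $\min_{\gamma^1_t}\max_{\gamma^2_t}w^u_t(\Pi_t,\gamma^1_t,\gamma^2_t)=V^u_t(\Pi_t)$. Taking $t=1$ and expectations gives $\max_{\tilde\chi^2\in\tilde{\mathcal H}^2}\mathcal{J}(\tilde\chi^1,\tilde\chi^2)\ge \E[V_1^u(\Pi_1)]$ for every $\tilde\chi^1$, and minimizing over $\tilde\chi^1$ completes the bound. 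The two inequalities pin down $S^u(\gm{G}_e)=\E[V_1^u(\Pi_1)]$; the inductions for $S^l$ are the mirror image, with player $2$ committing to $\Xi^2_t(\Pi_t)$ for its guarantee and player $1$ predicting and best-responding for the matching bound. Beyond these inductions, the only technical care I anticipate is the standing measurability and continuity of $V^u_t$ and $w^u_t$ needed to justify the conditional expectations and the selectors, which is inherited from the construction behind Lemma \ref{equiexistlemma}.
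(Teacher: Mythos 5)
Your proposal is correct and follows essentially the same route as the paper's proof: both directions are backward inductions through $w^u_t$ and $V^u_t$, committing virtual player 1 to the strategy induced by $\Xi^1_t(\Pi_t)$ for the inequality $S^u(\gm{G}_e)\leq \E[V^u_1(\Pi_1)]$, and for the reverse inequality exploiting the fact that player 2's strategy may depend on the commonly observed history (hence on $\gamma^1_t = \tilde{\chi}^1_t(C_t,\Gamma^{1:2}_{1:t-1})$) to best-respond. The only cosmetic difference is that the paper packages the best-response direction as an auxiliary MDP whose value functions $V^{\tilde{\chi}^1}_t$ are shown by induction to dominate $V^u_t(\tilde{\pi}_t)$, whereas you construct an explicit stage-wise greedy measurable selector and induct on the realized conditional cost-to-go --- the same argument in different clothing.
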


Theorem \ref{dp} gives us a dynamic programming characterization of the upper and lower values of the expanded game. As mentioned in Theorem \ref{origvirt}, the upper and lower values of the expanded game provide bounds on the corresponding values of the original game. If the original game has a Team Nash equilibrium, then the dynamic programs described above characterize the value of the game. 
%Note that this applies to any dynamic game of  the form in Section \ref{sec:probform} where the common information is non-decreasing in time and the private information has a ``state-like'' update equation (see Assumption \ref{infevolve}). As noted before, a variety of information structures satisfy this assumption \cite{nayyar2013decentralized}, \cite{nayyar2014common}.  

\paragraph{Optimal Strategies in Game $\gm{G}_e$} The mappings $\Xi^1$ and $\Xi^2$ obtained from the dynamic programs described above (see Lemma \ref{equiexistlemma}) can be used to construct optimal strategies for both virtual players in game $\gm{G}_e$ in the following manner.
{\begin{definition}\label{stratdef} Define strategies $\tilde{\chi}^{1*}$ and $\tilde{\chi}^{2*}$ for virtual players 1 and 2 respectively as follows: for each instance of common information $c_t$ and prescription history $\gamma_{1:t-1}^{1:2}$, let
\begin{align*}
&\tilde{\chi}^{1*}_t(c_t,\gamma_{1:t-1}^{1:2}) \doteq \Xi_t^1(\pi_t); &\tilde{\chi}^{2*}_t(c_t,\gamma_{1:t-1}^{1:2}) \doteq \Xi_2^1(\pi_t),
\end{align*}
where $\Xi_t^1$ and $\Xi_t^2$ are the mappings defined in Lemma \ref{equiexistlemma} and $\pi_t$ (which is a function of $c_t,\gamma_{1:t-1}^{1:2}$) is obtained in a forward inductive manner using the update rule $F_t$ defined in Lemma \ref{infstate}.
\end{definition}}

\begin{theorem}[Proof in App. \ref{dpproof}]\label{stratthm1}
The strategies $\tilde{\chi}^{1*}$ and $\tilde{\chi}^{2*}$ as defined in Definition \ref{stratdef} are, respectively, min-max and max-min strategies in the expanded virtual game $\gm{G}_e$.
\end{theorem}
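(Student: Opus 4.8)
The plan is to prove Theorem \ref{stratthm1} by a backward-induction argument on the cost-to-go, showing that the forward-constructed strategies $\tilde{\chi}^{1*}$ and $\tilde{\chi}^{2*}$ each achieve the value functions computed in the dynamic program of Theorem \ref{dp}. The core observation is that, by Lemma \ref{infstate}, the CIB $\Pi_t$ evolves via the strategy-independent transformation $F_t$ and the instantaneous cost can be written as $\tilde{c}_t(\Pi_t,\Gamma_t^1,\Gamma_t^2)$; consequently the expanded game $\gm{G}_e$ is equivalent to a stochastic game whose \emph{state} is $\Pi_t$ and whose per-stage payoff is $\tilde{c}_t$. Since $\Pi_t$ is a deterministic function of $(c_t,\gamma_{1:t-1}^{1:2})$, the strategies in Definition \ref{stratdef} are well-defined functions of the virtual players' information, and we may analyze them entirely in terms of the belief state.

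First I would define, for an arbitrary fixed strategy $\tilde{\chi}^2$ of virtual player 2, the cost-to-go of the pair $(\tilde{\chi}^{1*},\tilde{\chi}^2)$ from time $t$ onward as a function of $\pi_t$, and show by backward induction that it is bounded above by $V_t^u(\pi_t)$. The base case $t=T+1$ holds since both quantities are zero. For the inductive step, I would condition on $(\pi_t,\gamma_t^{1:2})$, split the total cost into the current cost $\tilde{c}_t$ plus the expected future cost, apply the induction hypothesis to bound the future term by $\E[V_{t+1}^u(F_t(\pi_t,\gamma_t^{1:2},\rv{Z}_{t+1}))\mid \pi_t,\gamma_t^{1:2}]$, and recognize the sum as $w_t^u(\pi_t,\gamma_t^{1:2})$ via \eqref{uppercost}. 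Substituting virtual player 1's prescription $\gamma_t^1 = \Xi_t^1(\pi_t)$ from Lemma \ref{equiexistlemma} and maximizing over $\tilde{\chi}^2$'s choice of $\gamma_t^2$ then yields exactly $\max_{\gamma_t^2} w_t^u(\pi_t,\Xi_t^1(\pi_t),\gamma_t^2) = V_t^u(\pi_t)$, closing the induction. Rolling this back to $t=1$ and taking expectations over $\Pi_1$ gives $\max_{\tilde{\chi}^2}{\mathcal{J}}(\tilde{\chi}^{1*},\tilde{\chi}^2) \leq \E[V_1^u(\Pi_1)] = S^u(\gm{G}_e)$. Combined with the trivial inequality $S^u(\gm{G}_e) = \min_{\tilde{\chi}^1}\max_{\tilde{\chi}^2}{\mathcal{J}} \leq \max_{\tilde{\chi}^2}{\mathcal{J}}(\tilde{\chi}^{1*},\tilde{\chi}^2)$, this forces equality, so $\tilde{\chi}^{1*}$ is a min-max strategy. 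The argument for $\tilde{\chi}^{2*}$ being max-min is entirely symmetric, using $V_t^l$, $w_t^l$ and $\Xi_t^2$.

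I expect the main obstacle to be the inductive step where the future cost is bounded uniformly over \emph{all} opposing strategies $\tilde{\chi}^2$, not merely over a single-stage prescription. The subtlety is that the induction hypothesis must hold for every continuation strategy of virtual player 2, and that the belief $\pi_{t+1}$ reached after the realized increment $\rv{Z}_{t+1}$ is again a valid state from which the hypothesis applies; this requires that virtual player 2's freedom to choose $\gamma_t^2$ as a function of its information collapses, under the worst case, to the stagewise maximization in \eqref{minequa} precisely because $\tilde{\chi}^{1*}$ commits to $\Xi_t^1(\pi_t)$ regardless of $\gamma_t^2$. One must therefore argue that the max-over-strategies factorizes stagewise given the Markov evolution of $\Pi_t$; I would make this rigorous by fixing $\tilde{\chi}^2$ arbitrarily, noting that its time-$t$ prescription is some (possibly information-dependent) $\gamma_t^2$ which is pointwise dominated by the maximizer in \eqref{minequa}, and appealing to the measurability of $\Xi_t^1$ guaranteed by Lemma \ref{equiexistlemma} to ensure the constructed strategy is admissible. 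The remaining steps are routine manipulations of conditional expectations using the tower property and the cost decomposition from Lemma \ref{infstate}.
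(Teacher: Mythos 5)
Your proposal is correct and follows essentially the same route as the paper: Appendix \ref{dpproof} proves Theorems \ref{dp} and \ref{stratthm1} jointly by formulating virtual player 2's best response to a fixed $\tilde{\chi}^1$ as an MDP over histories $(c_t,\gamma_{1:t-1}^{1:2})$ and running exactly your backward induction, showing the best-response cost-to-go equals $V_t^u(\tilde{\pi}_t)$ when player 1 plays $\Xi_t^1(\pi_t)$ (and dominates $V_t^u$ for arbitrary $\tilde{\chi}^1$, which is Claim \ref{upperclaim}); your handling of the history-dependent $\gamma_t^2$ via pointwise domination by the stagewise maximizer is precisely how the paper closes that step. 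Your only structural shortcut is importing the inequality $\E[V_1^u(\Pi_1)] \leq S^u(\gm{G}_e)$ by citing Theorem \ref{dp} instead of re-deriving it via Claim \ref{upperclaim}; this is not circular, since that direction of Theorem \ref{dp} is established for arbitrary $\tilde{\chi}^1$ without reference to $\tilde{\chi}^{1*}$.
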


\section{Only Virtual Player 1 controls the CIB}\label{oneplayercontrol}
In this section, we consider a special class of instances of Game $\gm{G}$ and show that the dynamic program in \eqref{minequa} can be used to obtain a min-max \emph{strategy} for Team 1, the minimizing team in game $\gm{G}$. The key property of the information structures considered in this section is that the common information belief $\Pi_t$ is controlled\footnote{Note that the players in Team 2 might still be able to control the state dynamics through their actions.} only by virtual player 1 in the corresponding expanded game $\gm{G}_e$. This is formally stated in the following assumption.

\begin{assumption}\label{onesidecontrolassum}
For any strategy profile $(\tilde{\vct{\chi}}^{1}, \tilde{\vct{\chi}}^{2})$ in Game $\gm{G}_e$, the CIB $\Pi_t$ evolves almost surely as
\begin{align}
\Pi_{t+1} = F_t(\Pi_t, \Gamma_t^{1},\vct{Z}_{t+1}),  \label{eq:pieq}
\end{align}
where $F_t$ is a fixed transformation that does not depend on the virtual players' strategies.
%\begin{align}
%\pi_{t+1}(x_{t+1},p_{t+1}^{1:2}) &= \frac{{P}^j_t(\pi_t,\gamma_t^{1:2},z_{t+1},x_{t+1},p_{t+1}^{1:2})}{\mathscr{P}_t^m(\pi_t,\gamma_t^{1:2},z_{t+1})}\\
%&= F_t(\pi_t, \gamma_t^{1:2},z_{t+1},x_{t+1},p_{t+1}^{1:2}),
%\end{align}

\end{assumption}
We will now describe some instances of Game $\gm{G}$ that satisfy Assumption \ref{onesidecontrolassum}. {We note that
two-player zero-sum games that satisfy a property similar to Assumption \ref{onesidecontrolassum} were studied in \cite{gensbittel2014existence}.}

\subsection{{Game Models} Satisfying Assumption \ref{onesidecontrolassum}}\label{infexample}
\paragraph{All players in Team 2 have the same information} Consider an instance of game $\gm{G}$ in which every player $j$ in Team 2 has the following information structure $I_t^{2,j} = \left\{Y_{1:t}^{2}, U_{1:t-1}^{2} \right\}$.
% \begin{align}
% I_t^{2,j} = \left\{Y_{1:t}^{2}, U_{1:t-1}^{2} \right\}, ~~ j = 1,\dots,N_2.
% \end{align}
Further, Team 2's information is known to every player in Team 1. Thus, the common information $C_t = I_t^{2,j}$. Under this condition, players in Team 2 do not have any private information. Thus, their private information $P_t^2 = \varnothing$. Any information structure satisfying the above conditions satisfies Assumption \ref{onesidecontrolassum}, see Appendix \ref{infproof1} for a proof. Since Team 1's information structure is relatively unrestricted, the above model subsumes many previously considered team and game models. {Notable examples of such models include: (i) all purely cooperative team problems in \cite{nayyar2013decentralized,dibangoye2016optimally,xie2020optimally,foerster2019bayesian}, and (ii) two-player zero-sum game models where one agent is more informed than the other \cite{zheng2013decomposition,gensbittel2014existence,horak2017heuristic}.}
% \cite{gensbittel2014existence,zheng2013decomposition,horak2017heuristic,xie2020optimally,dibangoye2016optimally}

\paragraph{Team 2's observations become common information with one-step delay}Consider an an instance of game $\gm{G}$ where the current private information of Team 2 becomes common information in the very next time-step. More specifically, we have $C_{t+1} \supseteq \{Y^2_{1:t}, U^2_{1:t}\}$ and for each Player $j$ in Team 2, $P^{2,j}_t = Y^{2,j}_t$.
% \begin{align}
% &P_t^{2,j} = Y^{2,j}_t; &Z_{t+1} \supseteq \{Y_t^{2},U_t^2\}.
% \end{align}
Note that unlike in \cite{zheng2013decomposition,horak2017heuristic}, players in Team 2 have some private information in this model. Any information structure that satisfies the above conditions satisfies Assumption \ref{onesidecontrolassum}, see Appendix \ref{infproof2} for a proof.

\paragraph{Games with symmetric information} Consider the information structure where $ \rv{I}_t^{i,j} = \cup_{i,j}\{\rv{Y}^{i,j}_{1:t}, \rv{U}^{i,j}_{1:t-1}\} $ for every $i,j$. All the players in this game have the same information and thus, players do not have any private information. Note that this model subsumes perfect information games. It can be shown that this model satisfies Assumption \ref{onesidecontrolassum} using the same arguments in Appendix \ref{infproof1}. In this case, the CIB is not controlled by both virtual players and thus, we can use the dynamic program to obtain both min-max and max-min strategies.

{In addition to the models discussed
above, there are other instances of $\gm{G}$ that
satisfy Assumption 2. These are included in
Appendix \ref{additionalinf}.}

\subsection{Min-max Value and Strategy in Game $\gm{G}$}\label{onesidedp}
\paragraph{Dynamic Program} 
Since we are considering special cases of Game $\gm{G}$, we can use the analysis in Section \ref{expandedgame} to write the min-max dynamic program for virtual player 1. Because of Assumption \ref{onesidecontrolassum}, the belief update $F_t(\pi_t,\gamma_t^{1:2},z_{t+1})$ in \eqref{uppercost} is replaced by $F_t(\pi_t,\gamma_t^{1},z_{t+1})$. Using Theorems \ref{dp} and \ref{stratthm1}, we can can conclude that the upper value of the expanded game $S^u(\gm{G}_e) = \E [V_1^u(\Pi_1)]$ and that the strategy $\tilde{\chi}^{1*}$ obtained from the DP is a min-max strategy for virtual player 1 in Game $\gm{G}_e$. An approximate dynamic programming based approach for solving the dynamic programs is discussed in Appendix \ref{dpsolve}. This discussion includes certain structural properties of the value functions that make their computation significantly more tractable.

{\paragraph{Min-max Value and Strategy}
The following results provide a characterization of the min-max value $S^u(\gm{G})$ and a min-max strategy $g^{1*}$ in game $\gm{G}$ under Assumption \ref{onesidecontrolassum}. Note that unlike the inequality in Theorem \ref{origvirt}, the upper values of games $\gm{G}$ and $\gm{G}_e$ are always equal in this case.}
{\begin{theorem}[Proof in App. \ref{strategyproof}]\label{onesidevalue}
Under Assumption \ref{onesidecontrolassum}, we have $S^u(\gm{G}) = S^u(\gm{G}_e) =  \E[V_1^u(\Pi_1)].$
\end{theorem}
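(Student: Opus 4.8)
The plan is to prove the two equalities separately. The equality $S^u(\gm{G}_e) = \E[V_1^u(\Pi_1)]$ is exactly the content of Theorem \ref{dp}, so the real work lies in showing $S^u(\gm{G}) = S^u(\gm{G}_e)$. By Theorem \ref{origvirt} we already have the inequality $S^u(\gm{G}_e) \le S^u(\gm{G})$ for free, so it suffices to establish the reverse inequality $S^u(\gm{G}) \le S^u(\gm{G}_e)$. I would do this constructively, by exhibiting a single Team 1 strategy $g^{1*}$ in the original game whose worst-case cost against Team 2 is at most $S^u(\gm{G}_e)$.

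The construction of $g^{1*}$ is driven entirely by Assumption \ref{onesidecontrolassum}. First I would take the virtual-player-1 strategy $\tilde{\chi}^{1*}$ from Definition \ref{stratdef}, which issues the prescription $\Gamma_t^1 = \Xi_t^1(\pi_t)$, where $\pi_t$ is computed by the forward recursion $\pi_{t+1} = F_t(\pi_t,\gamma_t^{1},z_{t+1})$. The key observation is that under Assumption \ref{onesidecontrolassum} this recursion involves only virtual player 1's own prescriptions and the publicly observed increment $Z_{t+1}$; it does not reference Team 2's prescriptions at all. Consequently every player in Team 1 can reproduce the same sequence $(\pi_t,\Gamma_t^1)$ using only the common information $C_t$ together with the ex ante agreed strategy. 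I would then define $g^{1*}$ by having Player $j$ in Team 1 apply the commonly computed prescription component to its own private information, i.e.\ $g_t^{1,j}(c_t,p_t^{1,j}) = \Xi_t^{1,j}(\pi_t)(p_t^{1,j})$. This is a legitimate control law because a straightforward induction shows $\pi_t$ is a deterministic function of $c_t$ (since $c_{t+1}=(c_t,z_{t+1})$ and $\pi_{t+1}=F_t(\pi_t,\Xi_t^1(\pi_t),z_{t+1})$).

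Next I would fix an arbitrary Team 2 strategy $g^2$ in $\gm{G}$ and build a matching virtual-player-2 strategy $\tilde{\chi}^2$ in $\gm{G}_e$ by letting virtual player 2's prescription at time $t$ have components $\Gamma_t^{2,j} = g_t^{2,j}(C_t,\cdot)$, which is valid since virtual player 2 observes $C_t$. With this choice the action-generation mechanism \eqref{rand1}, the state and observation dynamics \eqref{statevol}--\eqref{obseq}, and the information updates \eqref{commonevol}--\eqref{privevol} coincide across the two games, and they are driven by the same primitive variables and the same randomization seeds $K_t^{i,j}$. A coupling argument, identical in spirit to the one underlying Theorem \ref{origvirt}, then shows that the joint law of $(X_{1:T},U_{1:T}^{1:2})$ under $(g^{1*},g^2)$ in $\gm{G}$ equals its law under $(\tilde{\chi}^{1*},\tilde{\chi}^2)$ in $\gm{G}_e$, whence $J(g^{1*},g^2) = \mathcal{J}(\tilde{\chi}^{1*},\tilde{\chi}^2)$. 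Since $\tilde{\chi}^{1*}$ is a min-max strategy, the right-hand side is at most $\max_{\tilde{\chi}^2}\mathcal{J}(\tilde{\chi}^{1*},\tilde{\chi}^2)=S^u(\gm{G}_e)$. Taking the supremum over $g^2$ yields $\max_{g^2}J(g^{1*},g^2)\le S^u(\gm{G}_e)$, and hence $S^u(\gm{G})\le S^u(\gm{G}_e)$, completing the argument.

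I expect the main obstacle to be the distribution-matching step: one must verify carefully that the CIB recursively computed by Team 1 in $\gm{G}$ stays equal, realization by realization, to the belief driving the prescriptions in $\gm{G}_e$, despite the fact that the common information is itself random and endogenously generated by the actions. This is precisely where Assumption \ref{onesidecontrolassum} is indispensable: if the update $F_t$ depended on $\Gamma_t^2$, Team 1 could not track $\pi_t$ from common information alone and the very definition of $g^{1*}$ would break down. This is exactly why only an inequality survives in the general setting of Theorem \ref{origvirt}, whereas here the reverse inequality, and therefore equality, holds.
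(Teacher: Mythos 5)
Your proposal is correct, and its mathematical core is the same as the paper's: the whole theorem rests on the observation that, under Assumption \ref{onesidecontrolassum}, the CIB recursion driving $\tilde{\chi}^{1*}$ uses only virtual player 1's own prescriptions and $Z_{t+1}$, so the expanded-game min-max strategy collapses to something Team 1 can implement from common information alone, after which a pathwise coupling (same primitive randomness, same seeds $K_t^{i,j}$) and a sandwich with Theorem \ref{origvirt} finish the job. The difference is organizational: the paper never couples $\gm{G}$ and $\gm{G}_e$ directly, but factors the argument through the intermediate game $\gm{G}_v$, reusing machinery it has already built --- it writes $\chi^{1*} = \varrho^1(\tilde{\chi}^{1*},\cdot)$ (Definition \ref{def:rho}), invokes Lemma \ref{evolequi} for cost equality, runs a short chain of inequalities to get $S^u(\gm{G}_v) = S^u(\gm{G}_e)$ together with min-max optimality of $\chi^{1*}$ in $\gm{G}_v$, and then transfers to $\gm{G}$ via the bijection $\mathcal{M}^1$ of Lemma \ref{virtlemma} and Remark \ref{virtorigremark}. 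You instead lift an arbitrary $g^2$ to $\tilde{\chi}^2 \in \tilde{\H}^2$ and re-derive the coupling $J(g^{1*},g^2) = \mathcal{J}(\tilde{\chi}^{1*},\tilde{\chi}^2)$ in one step. What the paper's route buys is brevity and reuse: the new content of the proof is only a few lines because the coupling lemmas are already proved; what your route buys is self-containedness and directness --- the intermediate game disappears, and the induction you would have to carry out is exactly the content of Lemma \ref{evolequi} plus the $\gm{G}$-to-$\gm{G}_v$ equivalence, merged. Both arguments also yield, as the same byproduct, that the Algorithm \ref{alg:example} strategy is min-max in $\gm{G}$ (Theorem \ref{strategy}), and both hinge on precisely the verification you flag as the main obstacle: that the belief computed forward from $c_t$ in the original game agrees realization-by-realization with the belief driving the prescriptions in $\gm{G}_e$, which is exactly where Assumption \ref{onesidecontrolassum} is consumed.
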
}
% Consider the strategy $g^{1*}$ for Team 1 in game $\gm{G}$ wherein each player $j$ at time $t$ uses the following procedure to obtain its behavioral action $\delta U_t^{1,j}$. Player $j$ first computes the common information belief $\pi_t$ based on the common information $c_t$ using the following recursive relation
% \begin{align}
% &\pi_1(x_1,p_1^{1:2}) \\
% \nonumber&= \Py[X_1 = x_1, P_t^1 = p_t^1,P_t^2 = p_t^2 \mid C_1 = c_1] ~ \forall \; x_1,p_1^1,p_1^2\\
% &\pi_{\tau + 1} = F_\tau(\pi_\tau, \Xi_\tau^1(\pi_\tau),z_{\tau+1}), ~ 1 \leq \tau < t,
% \end{align}
% where $\Xi_t^1$ is obtained from the dynamic program described above and $F_t$ is the update rule in Assumption \ref{onesidecontrolassum}. Player $j$ then obtains a prescription $\gamma_t^1$ as
% \begin{align}
% \chi_t^{1*}(\vct{c}_t) = \gamma_t^1  \doteq \Xi_t^1(\pi_t),\label{presdef}
% \end{align}
% where $\chi^{1*}$ is used to denote this mapping between the common information $c_t$ and the prescription $\gamma_t^1$. This behavioral prescription is then translated into its native form to obtain the tuple $(\gamma_t^{1,1},\dots,\gamma_t^{1,N_1})$. Finally, Player $j$ selects its behavioral action as
% \begin{align}
%     g_t^{1,j*}(c_t,p_t^{1,j}) \doteq \gamma_t^{1,j}(p_t^{1,j}).
% \end{align}

\begin{algorithm}[tb]
  \caption{Strategy $g^{1,j*}$ for Player $j$ in Team 1}
  \label{alg:example}
\begin{algorithmic}
%   \STATE {\bfseries Input:} data $x_i$, size $m$
%   \REPEAT
    \STATE Input: $\Xi^1_t(\pi)$ obtained from DP for all $t$ and all $\pi$
  \FOR{$t=1$ {\bfseries to} $T$}
  \STATE Current information: $C_t,P_t^{1,j}$
  \COMMENT{where $C_{t} = \{C_{t-1},Z_t\}$}
  \STATE Update CIB $\Pi_{t} = F_{t-1}(\Pi_{t-1}, \Xi_{t-1}^1(\Pi_{t-1}),Z_{t})$ \COMMENT{If $t=1$, Initialize CIB $\Pi_t$ using $C_t$}
  \STATE Get prescription $\Gamma_t^1 = (\Gamma_t^{1,1},\dots,\Gamma_t^{1,N_1}) = \Xi_t^1(\Pi_t)$ 
  \STATE Get distribution $\delta U_t^{1,j} = \Gamma_t^{1,j}(P_t^{1,j}) $ and select action $U_t^{1,j} = \rand(\mathcal{U}_t^{1,j},\delta U_t^{1,j},\rv{K}_t^{1,j})$

  \ENDFOR
%   \UNTIL{$noChange$ is $true$}
\end{algorithmic}
\end{algorithm}
{{\begin{theorem}[Proof in App. \ref{strategyproof}]\label{strategy}
Under Assumption \ref{onesidecontrolassum}, the strategy $\vct{g}^{1*}$ defined in Algorithm \ref{alg:example} is a min-max strategy for Team 1 in the original game $\gm{G}$.
\end{theorem}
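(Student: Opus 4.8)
The plan is to prove that $g^{1*}$ attains the upper value, i.e.\ that $\max_{g^2 \in \G^2} J(g^{1*},g^2) = S^u(\gm{G})$. The inequality $\max_{g^2} J(g^{1*},g^2) \geq \min_{g^1}\max_{g^2} J(g^1,g^2) = S^u(\gm{G})$ is immediate since $g^{1*}$ is one admissible choice of $g^1$, so it suffices to establish the reverse bound $J(g^{1*},g^2) \leq S^u(\gm{G})$ for \emph{every} $g^2 \in \G^2$. My strategy is to exhibit, for each such $g^2$, a virtual-player-2 strategy $\tilde{\chi}^2 \in \tilde{\H}^2$ in $\gm{G}_e$ whose cost against $\tilde{\chi}^{1*}$ exactly equals $J(g^{1*},g^2)$, and then to invoke the min-max optimality of $\tilde{\chi}^{1*}$ (Theorem \ref{stratthm1}) together with the identity $S^u(\gm{G}_e) = S^u(\gm{G})$ (Theorem \ref{onesidevalue}).

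First I would construct $\tilde{\chi}^2$ from $g^2$ by declaring virtual player 2's prescription at time $t$ to be the map whose $j$-th component is $\Gamma_t^{2,j}(\cdot) \doteq g_t^{2,j}(C_t,\cdot)$. This depends only on the common information $C_t$ (it ignores the prescription history $\Gamma_{1:t-1}^{1:2}$) and each component lies in $\mathcal{F}(\P_t^{2,j},\Delta\U_t^{2,j})$, so $\tilde{\chi}^2$ is a legitimate element of $\tilde{\H}^2$. Next I would record the consequence of Assumption \ref{onesidecontrolassum} that underlies everything: the forward recursion $\pi_{t+1} = F_t(\pi_t,\Xi_t^1(\pi_t),Z_{t+1})$ used in Definition \ref{stratdef} does \emph{not} involve $\Gamma_t^2$, so $\pi_t$ is a deterministic function of the common information $C_t$ alone (given the fixed strategy $\tilde{\chi}^{1*}$). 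This is exactly what renders Algorithm \ref{alg:example} well-defined: every Player $j$ in Team 1 reconstructs the identical $\Pi_t$ from $C_t$ and applies the $j$-th component $\Gamma_t^{1,j}$ of $\Xi_t^1(\Pi_t)$ to its private information, so $g_t^{1,j*}$ is a valid control law depending only on $(C_t,P_t^{1,j})$.

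The core step is a coupling/induction argument showing that the joint law of $(\rv{X}_t,\rv{P}_t^{1:2},\rv{C}_t,\rv{U}_t^{1:2})$ under $(g^{1*},g^2)$ in $\gm{G}$ coincides with its law under $(\tilde{\chi}^{1*},\tilde{\chi}^2)$ in $\gm{G}_e$. Both games are built on the same primitive random variables and the same i.i.d.\ variables $\rv{K}_t^{i,j}$, so I would put them on a common probability space and induct on $t$. The base case holds because the initial distribution and $C_1$ (hence $\pi_1(C_1)$) agree. For the inductive step I would verify, conditioned on a common realization of $C_t$, that: (i) the prescriptions coincide, since $\Gamma_t^1 = \Xi_t^1(\Pi_t)$ with the same deterministic $\Pi_t = \pi_t(C_t)$ in both games, while $\Gamma_t^{2,j}(\cdot) = g_t^{2,j}(C_t,\cdot)$ in both by construction; (ii) the action-generating distributions $\delta\rv{U}_t^{i,j}$ are therefore identical functions of $(C_t,P_t^{i,j})$, so the actions $\rv{U}_t^{i,j}$ share the same conditional law; (iii) the common dynamics $f_t$, observation maps $h_t^{i,j}$, and the common-information update $\zeta_{t+1}$ (Assumption \ref{infevolve}) then produce identical conditional laws for $\rv{X}_{t+1},\rv{Y}_{t+1}^{1:2},\rv{C}_{t+1}$; and (iv) the CIB update $F_t$ of Lemma \ref{infstate} propagates the synchronization $\Pi_{t+1} = \pi_{t+1}(C_{t+1})$. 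This closes the induction, and since both costs are expectations of $\sum_{t=1}^T c_t(\rv{X}_t,\rv{U}_t^{1},\rv{U}_t^2)$ under matching laws, we obtain $J(g^{1*},g^2) = {\mathcal{J}}(\tilde{\chi}^{1*},\tilde{\chi}^2)$.

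Chaining the pieces then gives, for every $g^2 \in \G^2$,
$J(g^{1*},g^2) = {\mathcal{J}}(\tilde{\chi}^{1*},\tilde{\chi}^2) \leq \max_{\tilde{\chi}^2} {\mathcal{J}}(\tilde{\chi}^{1*},\tilde{\chi}^2) = S^u(\gm{G}_e) = S^u(\gm{G})$,
where the inequality is the min-max optimality of $\tilde{\chi}^{1*}$ (Theorem \ref{stratthm1}) and the final equality is Theorem \ref{onesidevalue}. Taking the maximum over $g^2$ and combining with the trivial lower bound yields $\max_{g^2} J(g^{1*},g^2) = S^u(\gm{G})$, which is the claim. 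I expect the main obstacle to be the inductive distributional-equivalence step (iii)--(iv): one must track not merely the marginal of $(\rv{X}_t,\rv{P}_t^{1:2})$ but the \emph{common information} $C_t$ itself, so that the deterministic map $\pi_t(\cdot)$ stays synchronized across the two games. This is precisely where Assumption \ref{onesidecontrolassum} is indispensable, since without one-sided control the CIB would depend on $\Gamma_t^2$ and could not be reconstructed by Team 1 from common information alone, breaking the validity of Algorithm \ref{alg:example} and the coupling at once.
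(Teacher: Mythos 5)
Your proof is correct, but it is organized differently from the paper's own proof in Appendix \ref{strategyproof}. The paper never couples $\gm{G}$ to $\gm{G}_e$ directly: it routes through the intermediate game $\gm{G}_v$ of Appendix \ref{virtgameproof}, defines the common-information strategy $\chi^{1*}\in\H^1$ that mirrors Algorithm \ref{alg:example}, notes that $(\chi^{1*},\chi^{2})=\varrho(\tilde{\chi}^{1*},\chi^{2})$ for every $\chi^{2}\in\H^2$ (this rests on exactly your key observation that, under Assumption \ref{onesidecontrolassum}, the prescriptions of $\tilde{\chi}^{1*}$ are reconstructible from common information alone), and then runs a sandwich argument $S^u(\gm{G}_v)\geq S^u(\gm{G}_e)=\sup_{\tilde{\chi}^2}\mathcal{J}(\tilde{\chi}^{1*},\tilde{\chi}^2)\geq\sup_{\chi^2\in\H^2}\mathcal{J}(\chi^{1*},\chi^2)\geq S^u(\gm{G}_v)$ whose collapse proves \emph{simultaneously} that $S^u(\gm{G}_v)=S^u(\gm{G}_e)$ (which, via Lemma \ref{virtlemma}, is Theorem \ref{onesidevalue}) and that $\chi^{1*}$ is min-max in $\gm{G}_v$; the bijection $\mathcal{M}^1$ of Lemma \ref{virtlemma} then transfers optimality to $g^{1*}$ in $\gm{G}$. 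You instead (i) fuse the paper's two correspondences (Lemma \ref{virtlemma}'s $\mathcal{M}$ between $\gm{G}$ and $\gm{G}_v$, and Lemma \ref{evolequi}'s $\varrho$ between $\gm{G}_v$ and $\gm{G}_e$) into a single direct pathwise coupling of $\gm{G}$ with $\gm{G}_e$, and (ii) invoke Theorem \ref{onesidevalue} as a black box to close the chain. Step (ii) is legitimate and non-circular, since the paper's proof of Theorem \ref{onesidevalue} makes no appeal to Theorem \ref{strategy}; given the rest of the paper, your route is therefore shorter and more modular, while the paper's route extracts both theorems from one computation — indeed the sandwich argument essentially \emph{is} the proof of Theorem \ref{onesidevalue}, so a self-contained version of your proof would end up reproducing it. One small point of rigor in your induction: to keep $C_t$ (and hence the reconstructed $\Pi_t$) synchronized across the two games, you should assert pathwise equality of the actions — which holds because $\rand$ is a deterministic function of the distribution and $K_t^{i,j}$ — rather than mere equality of conditional laws; alternatively, run the induction on joint distributions throughout, as the paper does in its proof of Claim 1 in Appendix \ref{structlemmaproof}.
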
}}

\section{A Special Case and Numerical Experiments}\label{specialcases}
Consider an instance of Game $\gm{G}$ in which Team 1 has two players and Team 2 has only one player.
%\begin{equation}
%\rv{X}_{t+1} = f_t(\rv{X}_t, \rv{U}_t^{1}, \rv{U}_t^{2}, \rv{W}_t^s).
%\end{equation}
%Note that the state evolu controlled by both players. 
%At each time $t$, player 1 knows the state exactly. Thus,
%\begin{equation}
%\rv{Y}_t^1 = \rv{X}_t,
%\end{equation}
%and the second player's observation process $\rv{Y}_t^2$ is as described earlier in Section \ref{sec:probform}, i.e.
%\begin{equation}
%\rv{Y}_t^2 = h_t^2(\rv{X}_t, \rv{U}_{t-1}^{1:2},\rv{W}_t^2).
%\end{equation}
At each time $t$, Player 1 in Team 1 observes the state perfectly, i.e. $Y_t^{1,1} = X_t$, but the player in Team 2 gets an imperfect observation $Y^2_t$ defined as in \eqref{obseq}.
Player 1 has complete information: at each time $t$, it knows the entire state, observation and action histories of all the players. The player in Team 2 has partial information: at each time $t$, it knows its observation history $Y^2_{1:t}$ and action histories of all the players. Player 2 in Team 1 has the same information as that of the player in Team 2.
%\begin{remark}
%In our original model in Section \ref{sec:sys}, the observation $\rv{Y}_t^2$ depends only on the state $\rv{X}_t$. However, we allow the observation to depend on the first player's previous action $\rv{U}_{t-1}^1$. To be consistent with the original model, we can considered an expanded state $\rv{X}^e_t \doteq \{\rv{X}_t, \rv{U}^1_{t-1}\}$ such that $\rv{Y}_t^2 = h_t^2(\rv{X}_t^e,\rv{W}_t^2).$ We later show that this state expansion does not affect the dynamic programming characterization of the value function.
%\end{remark}
Thus, the total information available to each player at $t$ is as follows:
\begin{align*}
&\rv{I}_t^{1,1} = \{\rv{X}_{1:t}, \rv{Y}_{1:t}^{2},\rv{U}_{1:t-1}^{1:2}\}; &I_t^{1,2} = \rv{I}_t^2 = \{\rv{Y}_{1:t}^{2},\rv{U}_{1:t-1}^{1:2}\}.
\end{align*}
Clearly, $\rv{I}_t^2 \subseteq \rv{I}_t^{1,1}$. The  common and private information for this game can be written as follows: $\rv{C}_t = \rv{I}_t^2$,  $\rv{P}_t^{1,1} = \{\rv{X}_{1:t}\}$ and $P_t^{1,2} = \rv{P}_t^2 = \varnothing$. The increment in common information at time $t$ is $\rv{Z}_t = \{\rv{Y}_t^2,\rv{U}_{t-1}^{1:2}\}$.
In the game described above, the private information in $P_t^{1,1}$ includes the entire state history. However, Player 1 in Team 1 can ignore the past states $X_{1:t-1}$ without loss of optimality.
\begin{lemma}[Proof in App. \ref{structlemmaproof}]\label{structlemma}
There exists a min-max strategy $g^{1*}$ such that the control law $g^{1,1*}_t$ at time $t$ uses only $\rv{X}_t$ and  $\rv{I}_t^2$ to select $\delta U_t^{1,1}$, i.e.,
$\delta U^{1,1}_t = g^{1,1*}_t(X_t,I^2_t).$
\end{lemma}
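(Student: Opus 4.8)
The plan is to leverage the min-max strategy characterization of Theorem~\ref{strategy} and to show that, for this particular information structure, the optimal prescriptions produced by the dynamic program~\eqref{minequa} for Player~$(1,1)$ can be taken to depend on its private information $P^{1,1}_t = X_{1:t}$ only through the current state $X_t$. Since this instance has $P^2_t = \varnothing$, it satisfies Assumption~\ref{onesidecontrolassum}, so Theorem~\ref{strategy} guarantees that Algorithm~\ref{alg:example} yields a min-max strategy for Team~1, in which $\delta U^{1,1}_t = \Gamma^{1,1}_t(X_{1:t})$ with $\Gamma^1_t = \Xi^1_t(\Pi_t)$. Because $P^{1,1}_t = X_{1:t}$, the CIB $\Pi_t$ is just a belief over the state history; let $\mu_t$ denote its marginal on the current state $X_t$. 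The crux is to prove, by backward induction on $t$, the structural claim that (i) $V^u_t(\pi_t)$ depends on $\pi_t$ only through $\mu_t$, and (ii) the minimization in~\eqref{minequa} admits a minimizer whose $(1,1)$-component depends only on $x_t$.

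For the inductive step I would first note that the cost $c_t$ and the transition $f_t$ depend on the private history only through $X_t$, and that $Y^2_{t+1}$ and the increment $Z_{t+1}=\{Y^2_{t+1},U^{1:2}_t\}$ are conditionally determined by $(X_t,U^{1:2}_t,X_{t+1})$. The key computation is to insert the update rule $F_t$ from Lemma~\ref{infstate} into~\eqref{uppercost} and marginalize over the past states $X_{1:t-1}$. Writing $\bar\gamma^{1,1}_t(\cdot\mid x_t)$ for the posterior-averaged action law
\begin{align*}
\bar\gamma^{1,1}_t(u\mid x_t) \doteq \frac{\sum_{x_{1:t-1}} \pi_t(x_{1:t-1},x_t)\,\gamma^{1,1}_t(u\mid x_{1:t-1},x_t)}{\sum_{x_{1:t-1}} \pi_t(x_{1:t-1},x_t)},
\end{align*}
I expect to show that both the $X_{t+1}$-marginal of $\Pi_{t+1}=F_t(\pi_t,\gamma^{1:2}_t,Z_{t+1})$ and the conditional law of $Z_{t+1}$ given $(\pi_t,\gamma^{1:2}_t)$ depend on $\gamma^{1,1}_t$ only through $\bar\gamma^{1,1}_t$ and on $\pi_t$ only through $\mu_t$; here the likelihood factor $\gamma^{1,1}_t(u^{1,1}_t\mid x_{1:t})$ appearing in the Bayesian update collapses under the sum over $x_{1:t-1}$ into $\mu_t(x_t)\,\bar\gamma^{1,1}_t(u^{1,1}_t\mid x_t)$. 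Combined with the induction hypothesis that $V^u_{t+1}$ depends only on the $X_{t+1}$-marginal, this makes the whole quantity $w^u_t(\pi_t,\gamma^1_t,\gamma^2_t)$ in~\eqref{uppercost} depend on $\gamma^{1,1}_t$ only through $\bar\gamma^{1,1}_t$ (the component $\gamma^{1,2}_t$ is a single distribution since $P^{1,2}_t=\varnothing$).

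The reduction then follows because the set of achievable posterior averages $\{\bar\gamma^{1,1}_t : \gamma^{1,1}_t \in \mathcal{B}^{1,1}_t\}$ equals $\mathcal{F}(\X_t,\Delta\U^{1,1}_t)$, and any prescription depending only on $x_t$ is its own posterior average; hence the minimum over all prescriptions equals the minimum over those depending only on $x_t$, which simultaneously establishes parts (i) and (ii) and lets me choose $\Xi^1_t$ with a $(1,1)$-component that is a function of $x_t$ alone. Plugging these reduced prescriptions into Algorithm~\ref{alg:example}, and using that $\Pi_t$ is a deterministic function of $C_t = I^2_t$ along the forward recursion, I conclude $\delta U^{1,1}_t = \Gamma^{1,1}_t(X_t) = g^{1,1*}_t(X_t,I^2_t)$. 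I anticipate the \emph{main obstacle} to be the careful verification that the $X_{t+1}$-marginal of $F_t(\pi_t,\gamma^{1:2}_t,Z_{t+1})$ genuinely factors through $\bar\gamma^{1,1}_t$: this is exactly where the Markov structure of the state and the conditional independence of $X_{t+1},Y^2_{t+1}$ from $X_{1:t-1}$ given $X_t$ are indispensable, and where any subtlety in the precise form of $F_t$ and $\tilde c_t$ (equation~\eqref{tildec} and the construction in Appendix~\ref{infstateproof}) must be reconciled with the marginalization argument.
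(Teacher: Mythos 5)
Your proposal is correct, but it takes a genuinely different route from the paper's proof. The paper argues entirely inside the original game $\gm{G}$, with no dynamic-programming machinery: given an \emph{arbitrary} Team-1 strategy $g^1=(g^{1,1},g^{1,2})$, it defines $\bar g^{1,1}_t(x_t,\iota^2_t;\cdot)$ as the conditional law of $U^{1,1}_t$ given $(X_t,I^2_t)$ induced by $g^1$ --- well defined independently of the opponent's strategy because, from Team 2's viewpoint, the system is a POMDP whose state belief is strategy-independent (Lemma \ref{claimJ1}) --- and then shows by forward induction that $(g^{1,1},g^{1,2})$ and $(\bar g^{1,1},g^{1,2})$ induce the same joint law of $(X_t,U^1_t,U^2_t,I^2_t)$ against \emph{every} $g^2$ (Claim \ref{claim:one}); applying this to a min-max strategy finishes the proof. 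You instead apply the general CIB machinery (Theorem \ref{strategy}) to the \emph{un-reduced} game --- which is legitimate and non-circular, since that game also satisfies Assumption \ref{infevolve} and, because Team 2 has no private information \emph{and} its action $U^2_t$ is part of $Z_{t+1}$, Assumption \ref{onesidecontrolassum} as well --- and then prove by backward induction that $V^u_t$ factors through the current-state marginal $\mu_t$ of the CIB and that the min in \eqref{minequa} is attained by prescriptions whose $(1,1)$-component is state feedback. Your key factorization does hold: the stage cost, the law of $Z_{t+1}$, and the $X_{t+1}$-marginal of $F_t(\pi_t,\gamma^{1:2}_t,z_{t+1})$ all depend on $(\pi_t,\gamma^{1,1}_t)$ only through $(\mu_t,\bar\gamma^{1,1}_t)$, by the Markov property of $X_t$ and the fact that $c_t$, $f_t$ and the kernel of $Y^2_{t+1}$ do not involve $X_{1:t-1}$. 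In terms of what each approach buys: the paper's argument is more elementary and proves something stronger --- \emph{every} Team-1 strategy has a payoff-equivalent reduced-form counterpart against all opponent strategies simultaneously --- which is precisely why the paper can shrink the private information \emph{before} ever invoking the DP; your argument is heavier but yields, as a by-product, the structural fact that the upper value function depends on the CIB only through its current-state marginal, and it exhibits the reduced min-max strategy directly as a DP selector.

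Two loose ends you should tighten. First, $\bar\gamma^{1,1}_t(\cdot\mid x_t)$ is undefined when $\mu_t(x_t)=0$; note that $w^u_t$ puts zero weight on such $x_t$, so the minimum over all prescriptions still equals the minimum over state-feedback ones. Second, Theorem \ref{strategy} is invoked for a specific measurable selector $\Xi^1_t$; you must check that your reduced minimizer can be chosen measurably in $\pi_t$ (equivalently in $\mu_t$), which follows from the same continuity-plus-measurable-selection argument as in Lemma \ref{equiexistlemma}, using that $\pi_t\mapsto\mu_t$ is continuous.
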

The lemma above implies that, for the purpose of characterizing the value of the game and a min-max strategy for Team 1, we can restrict player 1's information structure to be $I^{1,1}_t = \{X_t,I^2_t\}$.
Thus, the common and private information become:  $\rv{C}_t = \rv{I}_t^2$, $\rv{P}_t^{1,1} = \{\rv{X}_{t}\}$ and $\rv{P}_t^2 = P_t^{1.1} = \varnothing$. 
We refer to this game with reduced private information as Game $\gm{H}$. The corresponding  expanded virtual game is denoted by $\gm{H}_e$. A general methodology for reducing private information in decentralized team and game problems can be found in \cite{tavafoghi2018sufficient}.
% \begin{remark}
% When Player 1 in Team 1 does not observe the state $X_t$ fully, then Player 1's information can be reduced to $I_t^2$ and its belief on the state $X_t$. In that case, the CIB is essentially a belief on Player 1's belief on the state. Methods that employ this kind of \emph{second order} beliefs can be found in \cite{gensbittel2014existence,xie2020optimally}.
% \end{remark}
The information structure in $\gm{H}$ is a special case of the first information structure in Section \ref{infexample}, and thus satisfies Assumption \ref{onesidecontrolassum}. Therefore, using the dynamic program in Section \ref{onesidedp}, we can obtain the value function $V_1^u$ and the min-max strategy $g^{1*}$. 
% We have the following corollary due to Theorem \ref{strategy}.
% \begin{corollary}
% The upper value $S^u(\gm{H}) = \E[V_1^u(\Pi_1)]$ and $g^{1*}$ is a min-max strategy in game $\gm{H}$.
% \end{corollary}

\paragraph{Numerical Experiments}
We consider a particular example of game $\gm{H}$ described above. In this example, there are two entities ($l$ and $r$) that can potentially be attacked and at any given time, exactly one of the entities is vulnerable. Player 1 of Team 1 knows which of the two entities is vulnerable whereas all the other players do not have this information. Player 2 of Team 1 can choose to \emph{defend} one of the entities. The attacker in Team 2 can either launch a \emph{blanket attack} on both entities or launch a \emph{targeted attack} on one of the entities. When the attacker launches a blanket attack, the damage incurred by the system is minimal if Player 2 in Team 1 happens to be defending the vulnerable entity and the damage is substantial otherwise. When the attacker launches a targeted attack on the vulnerable entity, the damage is substantial irrespective of the defender's position. But if the attacker targets the invulnerable entity, the attacker becomes passive and cannot attack for some time. Thus, launching a targeted attack involves high risk for the attacker. The state of the attacker (active or passive) and all the players' actions are public information. The system state $X_t$ thus has two components, the hidden state ($l$ or $r$) and the state of the attacker ($a$ or $p$). For convenience, we will denote the states $(l,a)$ and $(r,a)$ with 0 and 1 respectively.

The only role of Player 1 in Team 1 in this game is to signal the hidden state using two available actions $\alpha$ and $\beta$. The main challenge is that both the defender and the attacker can see Player 1's actions. Player 1 needs to signal the hidden state to some extent so that its teammate's defense is effective under blanket attacks. However, if too much information is revealed, the attacker can exploit it to launch a targeted attack and cause significant damage. In this example, the key is to design a strategy that can balance between these two contrasting goals of \emph{signaling} and \emph{secrecy}. A precise description of this model is provided in Appendix \ref{numappend}.

% In game $\gm{H}$, player 1 in Team 1 knows the state whereas the player 2 in Team 1 does not. Hence, in order to coordinate effectively with player 2, player 1 needs to use an appropriate signaling mechanism so that player 2 can infer the hidden state $X_t$ using its information. However, the adversary in Team 2 also has the same information as player 2 in Team 1. If player 1 in Team 1 reveals too much information about the state $X_t$, then the adversary can potentially exploit it to inflict severe damage upon Team 1. In our example, the key is to design a strategy that can balance between these two contrasting goals of signaling and secrecy. The precise details about the model are provided in Appendix \ref{numappend}.

\begin{figure}[]
\centering
  \subfigure[][An estimate of the value function $V_1^u(\cdot)$.]{
    \includegraphics[width=0.9\columnwidth]{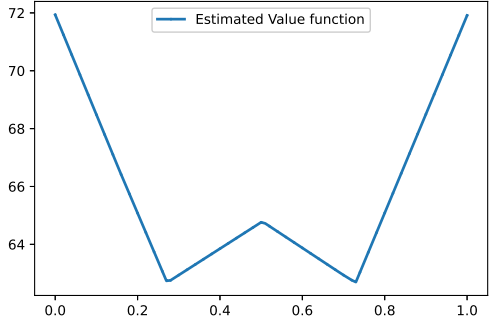}\label{valuefunc}}
  \subfigure[][Prescriptions at $t=1$ for Player 1 in Team 1.]{
    \includegraphics[width=0.9\columnwidth]{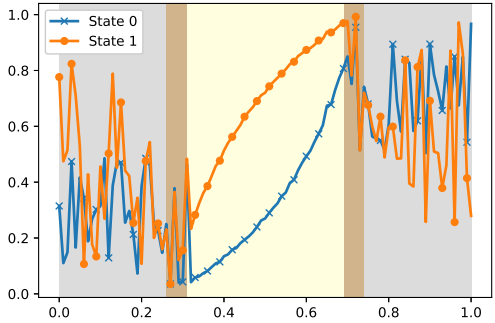}\label{strategy_fig}}
    \caption{In these plots, the $x$-axis represents $\pi_1(0)$ and we restrict our attention to those beliefs where $\pi_1(0)+\pi_1(1) = 1$, i.e. when the attacker is active. {In Figure \ref{strategy_fig}, the blue and red curves respectively depict the Bernoulli probabilities associated with the distributions $\gamma_1^{1,1}(0)$ and $\gamma_1^{1,1}(1)$, where $\gamma_1^{1,1}$ is Player 1's prescription in Team 1.}}
    \label{somefig}
    \vspace{-0.15in}
\end{figure}

%
% \begin{figure}[]
% \begin{center}
% \centerline{\includegraphics[width=0.5\columnwidth]{fig_15_est.eps}}
% \caption{An estimate of the value function $V_1^u(\cdot)$. In this plot, the $x$-axis represents $\pi_1(0)$ and we restrict our attention to those beliefs where $\pi_1(0)+\pi_1(1) = 1$, i.e. when the attacker is active.}
% \label{valuefunc}
% \end{center}
% \vskip -0.3in
% \end{figure}
%
%
% \begin{figure}[]
% \begin{center}
% \centerline{\includegraphics[width=0.5\columnwidth]{fig_strat.eps}}
% \caption{In this plot, the $x$-axis represents $\pi_1(0)$ and we restrict our attention to those beliefs where $\pi_1(0)+\pi_1(1) = 1$, i.e. when the attacker is active. The blue and red curves respectively depict the Bernoulli probabilities associated with the distributions $\gamma_1^{1,1}(0)$ and $\gamma_1^{1,1}(1)$, where $\gamma_1^{1,1}$ is Player 1's prescription in Team 1.}
% \label{strategy_fig}
% \end{center}
% \vskip -0.3in
% \end{figure}

% \begin{figure}[]
% \begin{center}
% \centerline{\includegraphics[width=0.5\columnwidth]{fig_strat.eps}}
% \caption{An estimate of the value function $V_1^u(\cdot)$. In this plot, the $x$-axis represents $\pi_1(0)$ and we restrict our attention to those beliefs where $\pi_1(0)+\pi_1(1) = 1$, i.e. when the attacker is active.}
% \label{valuefunc}
% \end{center}
% \vskip -0.3in
% \end{figure}
In order to solve this problem, we used the approximate DP approach discussed in Appendix \ref{dpsolve}. The value function $V_1^u(\cdot)$ thus obtained is shown in Figure \ref{valuefunc}. 
The tension between signaling and secrecy can be seen in the shape of the value function in Figure \ref{valuefunc}. When the CIB $\pi_1(0) = 0.5$, the value function is concave in its neighborhood and decreases as we move away from 0.5. This indicates that in these belief states, revealing the hidden state to some extent is preferable. However, as the belief goes further away from 0.5, the value function starts increasing at some point. This indicates that the adversary has too much information and is using it to inflict damage upon the system. Figure \ref{strategy_fig} depicts Player 1's prescriptions leading to non-trivial signaling patterns at various belief states. Notice that the distributions $\gamma_1^{1,1}(0)$ and $\gamma_1^{1,1}(1)$ for hidden states 0 and 1 are quite distinct when $\pi_1(0) = 0.5$ (indicating significant signaling) and are nearly identical when $\pi_1(0) = 0.72$ (indicating negligible signaling). A more detailed discussion on our experimental results can be found in Appendix \ref{numappend}.

\section{Conclusions}\label{conc}
We considered a general model of stochastic zero-sum games between two competing decentralized teams and provided bounds on their upper and lower values in the form of CIB based dynamic programs. When game has a value, our bounds coincide with the value. We identified several instances of this game model (including previously considered models) in which the CIB is controlled only by one of the teams (say the minimizing team). For such games, we also provide a characterization of the min-max strategy. Under this strategy, each player only uses the current CIB and its private information to select its actions. The sufficiency of the CIB and private information for optimality can potentially be exploited to design efficient strategies in various problems. Finally, we proposed a computational approach for approximately solving the CIB based DPs. There is significant scope for improvement in our computational approach. Tailored forward exploration heuristics for sampling the belief space and adding a policy network can improve the accuracy and tractability of our approach.

%%%%%%%%%%%%%%%%%%%%%%%%%%%%%%%%%%%%%%%%%%%%%%%%%%%%%%%%%%%%%%%%%%%%%%%%%%%%%%%%

\bibliographystyle{IEEEbib}
\bibliography{refs}
\newpage
\onecolumn

\appendices
\section{Virtual Game $\gm{G}_v$ and Proof of Theorem \ref{origvirt}}\label{virtgameproof}

\subsection{Virtual Game $\gm{G}_v$}
In order to establish the connection between the original game $\gm{G}$ and the expanded game $\gm{G}_e$, we construct an intermediate virtual game $\gm{G}_v$. Virtual game $\gm{G}_v$ is very similar to the expanded game $\gm{G}_e$. The only difference between games $\gm{G}_v$ and $\gm{G}_e$ is that the virtual players in Game $\gm{G}_v$ have access only to the common information $C_t$ at time $t$. On the other hand, the virtual players in Game $\gm{G}_e$ have access to the common information $C_t$ \emph{as well as} the prescription history $\Gamma_{1:t-1}^{1:2}$ at time $t$. We formally define virtual game $\gm{G}_v$ below.

\paragraph{Model and Prescriptions} Consider virtual player $i$ associated with Team $i$, $i=1,2$. At each time $t$, virtual player $i$ selects a prescription $\Gamma^{i}_t = (\Gamma_t^{i,1},\dots,\Gamma_t^{i,N_i}) \in \mathcal{B}_t^i$, where $\mathcal{B}_t^i$ is the set of prescriptions defined in Section \ref{expandedgame}. 
Once virtual player $i$ selects its prescription, the action $U^{i,j}_t$ is randomly generated according to the distribution $\Gamma_t^{i,j}(\rv{P}_t^{i,j})$. More precisely, the system dynamics for this game are given by:

\begin{align}
\label{virdyn1}\rv{X}_{t+1} &= f_t(\rv{X}_t,\rv{U}_t^{1:2},\rv{W}_t^s)\\
\label{virdyn2}\rv{P}^i_{t+1} &= \xi_{t+1}^i(\rv{P}_t^{1:2},\rv{U}_t^{1:2},\rv{Y}_{t+1}^{1:2}) & i = 1,2,\\
\label{virdyn3}\rv{Y}_{t+1}^{i,j} &= h_{t+1}^{i,j}(\rv{X}_{t+1},\rv{U}_t^{1:2},\rv{W}_{t+1}^{i,j}) & i = 1,2; j = 1,\dots,N_i,\\
\label{virdyn4}\rv{U}_t^{i,j} &= \rand(\U_t^{i,j},\Gamma_t^{i,j}(\rv{P}_t^{i,j}),\rv{K}_t^{i,j}) & i = 1,2; j = 1,\dots,N_i\\
\label{virdyn5}\rv{Z}_{t+1} &= \zeta_{t+1}(\rv{P}_t^{1:2},\rv{U}_t^{1:2},\rv{Y}_{t+1}^{1:2}),
\end{align}
where the functions $f_t,h_t^{i,j}$, $\xi_t^i,\rand$ and $\zeta_{t}$  are the same as in $\gm{G}$.

%For a prescription $\gamma_t^i$, the probability of selecting an action $u_t^i$ when the private information is $p_t^i$ is denoted by $\gamma_t^i(p_t^i ; u_t^i)$.

%% Note that equations (\ref{virdyn1}-\ref{virdyn4}) describes a well-defined dynamic system.

\paragraph{Strategies} In virtual game $\gm{G}_v$, virtual players use the common information $\rv{C}_t$ to select their prescriptions at time $t$. The $i$th virtual player selects its prescription according to a control law $\chi_t^i$, i.e. $\Gamma_t^i = \chi_t^i(\rv{C}_t)$.
%\begin{align}
%\Gamma_t^i = \chi_t^i(\rv{C}_t).
%\end{align}
 For virtual player $i$, the collection of control laws over the entire time horizon $\vct{\chi}^i = (\chi_1^i,\dots,\chi_T^i)$ is referred to as its control strategy. Let   $\mathcal{H}_t^i$ be the set of all possible control laws for virtual player $i$ at time $t$  and let  $\mathcal{H}^i$ be the set of all possible control strategies for virtual player $i$, i.e. $\mathcal{H}^i = \mathcal{H}_1^i \times \dots \times \mathcal{H}_T^i$. 
The total cost associated with the game for a strategy profile $(\vct{\chi}^1,\vct{\chi}^2)$ is
\begin{align}
\mathcal{J}(\vct{\chi}^1,\vct{\chi}^2)=\E^{(\vct{\chi}^1,\vct{\chi}^2)}\left[\sum_{t=1}^T c_t(\rv{X}_t,\rv{U}_t^{1},\rv{U}_t^2)\right], \label{eq:virtualJv}
\end{align}
where the function $c_t$ is the same as in Game $\gm{G}$.

Note that any strategy $\chi^i \in \H^i$ is equivalent to the strategy $\tilde{\chi}^i \in \tilde{\H}^i$ that satisfies the following condition: for each time $t$ and for each realization of common information $\vct{c}_t \in \C_t$,
\begin{align}
\tilde{\chi}_t^i(\vct{c}_t,\gamma_{1:t-1}^{1:2}) = \chi_t^i(\vct{c}_t) \quad \forall\;\gamma_{1:t-1}^{1:2} \in \mathcal{B}_{1:t-1}^{1:2}.
\end{align}
Hence, with slight abuse of notation, we can say that the strategy space $\H^i$ in the virtual game $\gm{G}_v$ is a subset of the strategy space $\tilde{\H}^i$ in the expanded game $\gm{G}_e$.

The following lemma establishes a connection between the original game $\gm{G}$ and the virtual game $\gm{G}_v$ constructed above. 

\begin{lemma}\label{virtlemma}
Let $S^u(\gm{G}_v)$ and $S^l(\gm{G}_v)$ be, respectively, the upper and lower values of the virtual game $\gm{G}_v$. Then, \[S^l(\gm{G}) = S^l(\gm{G}_v) ~~\mbox{and}~~ S^u(\gm{G}) = S^u(\gm{G}_v).\] 
Consequently, if a Team Nash equilibrium exists in the original game $\gm{G}$, then $S(\gm{G})=S^l(\gm{G}_v) = S^u(\gm{G}_v).$ 
\end{lemma}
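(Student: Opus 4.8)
The plan is to exhibit a cost-preserving bijection between the strategy spaces of games $\gm{G}$ and $\gm{G}_v$, and then transfer the min-max and max-min operations through it. First I would set up the correspondence between individual strategies and prescription strategies via partial evaluation (currying) on the common information. Given a strategy $g^i = (g_1^i,\dots,g_T^i) \in \G^i$ in game $\gm{G}$, where each $g_t^{i,j}:\mathcal{C}_t\times\P_t^{i,j}\to\Delta\U_t^{i,j}$, I define a prescription strategy $\chi^i$ by setting, for each realization $c_t$, $\chi_t^i(c_t) \doteq \Gamma_t^i = (\Gamma_t^{i,1},\dots,\Gamma_t^{i,N_i})$ with $\Gamma_t^{i,j}(\cdot) \doteq g_t^{i,j}(c_t,\cdot)$. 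Since $\Gamma_t^{i,j}\in\mathcal{B}_t^{i,j}=\mathcal{F}(\P_t^{i,j},\Delta\U_t^{i,j})$, this is a legitimate prescription, the resulting $\chi^i$ depends only on $\rv{C}_t$ (so $\chi^i\in\H^i$), and the map $g^i\mapsto\chi^i$ is a bijection from $\G^i$ onto $\H^i$; its inverse simply reads off $g_t^{i,j}(c_t,p_t^{i,j})$ as the $j$-th component of $\chi_t^i(c_t)$ evaluated at $p_t^{i,j}$.

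The technical core is to show that corresponding strategy profiles induce identical costs, i.e. $J(g^1,g^2)=\mathcal{J}(\chi^1,\chi^2)$, which I would establish by a coupling argument. I would run both games on a single probability space sharing the same primitive random variables $\rv{X}_1,\rv{W}_t^s,\rv{W}_t^{i,j}$ and the same uniform randomizers $\rv{K}_t^{i,j}$, and prove by induction on $t$ that the trajectories $(\rv{X}_{1:t},\rv{U}_{1:t-1}^{1:2},\rv{Y}_{1:t}^{1:2},\rv{P}_t^{1:2},\rv{C}_t)$ are almost surely identical in the two games. The inductive step hinges on the fact that the action generation coincides: in $\gm{G}$ player $(i,j)$ draws $\rv{U}_t^{i,j}=\rand(\U_t^{i,j},g_t^{i,j}(\rv{C}_t,\rv{P}_t^{i,j}),\rv{K}_t^{i,j})$, while in $\gm{G}_v$ it draws $\rv{U}_t^{i,j}=\rand(\U_t^{i,j},\Gamma_t^{i,j}(\rv{P}_t^{i,j}),\rv{K}_t^{i,j})$ with $\Gamma_t^{i,j}(\rv{P}_t^{i,j})=g_t^{i,j}(\rv{C}_t,\rv{P}_t^{i,j})$ by construction of $\chi^i$. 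Since the distributions fed to $\rand$ and the randomizer $\rv{K}_t^{i,j}$ agree, the actions agree; and since the two games share the same transition, observation, private-information, and common-information maps $f_t,h_t^{i,j},\xi_{t+1}^i,\zeta_{t+1}$ (equations \eqref{virdyn1}--\eqref{virdyn5}), all downstream variables at time $t+1$ agree as well. Hence $\sum_{t=1}^T c_t(\rv{X}_t,\rv{U}_t^1,\rv{U}_t^2)$ is pathwise identical, and taking expectations gives $J(g^1,g^2)=\mathcal{J}(\chi^1,\chi^2)$.

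Since this bijection maps $\G^1$ onto $\H^1$ and $\G^2$ onto $\H^2$ while preserving the cost, the optimizations defining the upper and lower values may be carried out over either parameterization interchangeably:
\[
S^u(\gm{G})=\min_{g^1}\max_{g^2}J(g^1,g^2)=\min_{\chi^1}\max_{\chi^2}\mathcal{J}(\chi^1,\chi^2)=S^u(\gm{G}_v),
\]
and identically $S^l(\gm{G})=S^l(\gm{G}_v)$. The concluding statement is then immediate: if a Team Nash equilibrium exists in $\gm{G}$, the discussion following Definition \ref{valdef} gives $S^l(\gm{G})=S^u(\gm{G})=S(\gm{G})$, which combined with the two equalities just established yields $S(\gm{G})=S^l(\gm{G}_v)=S^u(\gm{G}_v)$.

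The main obstacle is the inductive coupling in the second step: one must couple the two games on the \emph{same} randomizers $\rv{K}_t^{i,j}$, not merely match marginal distributions, so that equality of the fed distributions $g_t^{i,j}(\rv{C}_t,\rv{P}_t^{i,j})=\Gamma_t^{i,j}(\rv{P}_t^{i,j})$ propagates to pathwise equality of the actions and hence of the entire trajectory. Matching distributions alone would demand an extra measure-theoretic argument, whereas the pathwise coupling renders the identity of costs transparent.
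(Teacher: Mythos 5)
Your proposal is correct and follows essentially the same route as the paper: the same currying bijection $g_t^{i,j}(c_t,\cdot) \mapsto \Gamma_t^{i,j}$ between $\G^i$ and $\H^i$, the same cost-preservation identity $J(g^1,g^2)=\mathcal{J}(\chi^1,\chi^2)$, and the same transfer of the $\inf$/$\sup$ operations through the bijection. The only difference is that where the paper cites Appendix A of \cite{nayyar2014common} for the cost identity, you supply the pathwise coupling proof yourself (sharing primitives and randomizers $\rv{K}_t^{i,j}$ and inducting on the trajectory), which is exactly the style of argument the paper itself uses in its Lemma~\ref{evolequi} relating $\gm{G}_v$ and $\gm{G}_e$.
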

\begin{proof}
For a given strategy $g^i \in \mathcal{G}^i$, let us define a strategy $\chi^i \in \H^i$ in the following manner. For each time $t$, each instance of common information $c_t \in \C_t$ and player $j = 1,\dots,N_i$, let
\begin{align}
    \gamma_t^{i,j} \doteq g_t^{i,j}(c_t,\cdot),
\end{align}
and let $\gamma_t^i \doteq (\gamma_t^{i,1},\dots,\gamma_t^{i,N_i})$. Note that the partial function $g_t^{i,j}(c_t,\cdot)$ is a mapping from $\mathcal{P}_t^{i,j}$ to $\Delta \U_t^{i,j}$. Let
\begin{align}
    \chi_t^i(c_t) \doteq \gamma_t^i.
\end{align}
We will denote this correspondence between strategies in Game $\gm{G}$ and  $\gm{G}_v$ with $\mathcal{M}^i: \mathcal{G}^i \rightarrow \mathcal{H}^i$, $i=1,2$, i.e., $\chi^i = \mathcal{M}^i(g^i)$. One can easily verify that the mapping $\mathcal{M}^i$ is bijective. Further, for every $g^1 \in \G^1$ and $g^2 \in \G^2$, we have
\begin{align}
J(g^1,g^2) = \mathcal{J}(\mathcal{M}^1(g^1),\mathcal{M}^2(g^2)).
\end{align}
We refer the reader to Appendix A of \cite{nayyar2014common} for a proof of the above statement.

Therefore, for any strategy $g^1 \in \G^1$, we have
\begin{align}
\sup_{g^2\in \G^2}J(g^1,g^2) &= \sup_{g^2\in \G^2}\mathcal{J}(\mathcal{M}^1(g^1),\mathcal{M}^2(g^2))\\
&= \sup_{\chi^2\in \H^2}\mathcal{J}(\mathcal{M}^1(g^1),\chi^2).
\end{align}
Consequently,
\begin{align}
\inf_{g^1\in\G^1}\sup_{g^2\in \G^2}J(g^1,g^2) &= \inf_{g^1\in\G^1}\sup_{\chi^2\in \H^2}\mathcal{J}(\mathcal{M}^1(g^1),\chi^2)\\
&=  \inf_{\chi^1\in\H^1}\sup_{\chi^2\in \H^2}\mathcal{J}(\chi^1,\chi^2).
\end{align}
This implies that $S^u(\gm{G}) = S^u(\gm{G}_v)$. We can similarly prove that $S^l(\gm{G}) = S^l(\gm{G}_v)$.
{\begin{remark}\label{virtorigremark}
We can also show that a strategy $g^1$ is a min-max strategy in Game $\gm{G}$ if and only if $\mathcal{M}^1(g^1)$ is a min-max strategy in Game $\gm{G}_v$. A similar statement holds for max-min strategies as well.
\end{remark}}
\end{proof}

We will now establish the relationship between the upper and lower values of the virtual game $\gm{G}_v$ and the expanded virtual game $\gm{G}_e$. To do so, we define the following mappings between the strategies in games $\gm{G}_v$ and $\gm{G}_{e}$. To do so, we define the following mappings between the strategies in games $\gm{G}_v$ and $\gm{G}_{e}$. 

%\red{I changed the def; please check}
\begin{definition}\label{def:rho}
Let $\varrho^i: \tilde{\H}^1 \times  \tilde{\H}^2 \rightarrow \H^i$ be an operator that maps a strategy profile  $(\vct{\tilde{\chi}}^1,\vct{\tilde{\chi}}^2)$ in virtual game $\gm{G}_e$ to a strategy  $ \vct{{\chi}}^i$ for virtual player $i$ in  game $\gm{G}_v$ as follows: For $t=1,2,\ldots,T,$
%$\tilde{\chi}_t^i$ and $\tilde{\chi}_{1:t-1}^{1:2}$ to $\chi_t^i: \C_t \rightarrow \mathcal{B}_t^i$ such that
\begin{align}
\chi_t^i(\vct{c}_t) \doteq \tilde{\chi}_t^i(\vct{c}_t, \tilde{\gamma}_{1:t-1}^{1:2}),
\end{align}
where  $\tilde{\gamma}_s^j = \tilde{\chi}_s^j(\vct{c}_s,\tilde{\gamma}_{1:s-1}^{1:2})$ for every $1\leq s \leq t-1$ and $j = 1,2$.
%Also, denote $\vct{\chi}^i \doteq \varrho^i(\vct{\tilde{\chi}}^1,\vct{\tilde{\chi}}^2)$ where
%\begin{align}
%\vct{\chi}^i \doteq \left(\varrho_t^i(\tilde{\chi}_1^i), \dots, \varrho_t^i(\tilde{\chi}_T^i,\tilde{\chi}_{1:T-1}^{1:2})\right).
%\end{align}
We denote the ordered pair $(\varrho^1,\varrho^2)$ by $\varrho$.
\end{definition}
The mapping $\varrho$ is defined in such a way that the strategy profile  $(\vct{\tilde{\chi}}^1,\vct{\tilde{\chi}}^2)$  and the strategy profile  $\varrho(\vct{\tilde{\chi}}^1,\vct{\tilde{\chi}}^2)$ induce identical dynamics in the respective games $\gm{G}_e$ and $\gm{G}_v$.

\begin{lemma}\label{evolequi}
Let $(\vct{{\chi}}^1,\vct{{\chi}}^2)$ and $(\vct{\tilde{\chi}}^1,\vct{\tilde{\chi}}^2)$ be  strategy profiles for games $\gm{G}_v$ and $\gm{G}_{e}$, such that $\vct{\chi}^i = \varrho^i(\vct{\tilde{\chi}}^1,\vct{\tilde{\chi}}^2)$, $i=1,2$. Then, 
%\red{Can we exclude the red part??for player $i = 1,2$ and at each time $1\leq t \leq T$,
%\begin{align}
%\rv{X}_t &= \tilde{\rv{X}}_t & \rv{Y}_t = \tilde{\rv{Y}}_t^i \nonumber \\
%\rv{U}_t^i &= \tilde{\rv{U}}_t^i & \rv{P}_t^i = \tilde{\rv{P}}_t^i\nonumber\\
%\rv{C}_t &= \tilde{\rv{C}}_t & \Gamma_t^i = \tilde{\Gamma}_t^i,
%\end{align}
%with probability 1. Note that the variables with a tilde correspond to the expanded game $\gm{G}_e$. Consequently, we have}
\begin{align}
\mathcal{J}(\vct{{\chi}}^1,\vct{{\chi}}^2) = {\mathcal{J}}(\vct{\tilde{\chi}}^1,\vct{\tilde{\chi}}^2).
\end{align}
\end{lemma}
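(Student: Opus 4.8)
The plan is to couple the two games on a single probability space and show that the strategy profiles $(\vct{\chi}^1,\vct{\chi}^2)$ and $(\vct{\tilde{\chi}}^1,\vct{\tilde{\chi}}^2)$ produce pathwise identical trajectories of every system variable. Both games $\gm{G}_v$ and $\gm{G}_e$ are driven by the same primitive random variables $\rv{X}_1,\rv{W}_t^s,\rv{W}_t^{i,j}$ and the same randomization variables $\rv{K}_t^{i,j}$, and they use the identical maps $f_t,h_t^{i,j},\xi_t^i,\zeta_t$ and $\rand$. Hence it suffices to run both games on the same realization of these variables and verify, by forward induction on $t$, that the realizations of $\rv{X}_t,\rv{P}_t^{1:2},\rv{C}_t,\Gamma_t^{1:2},\rv{U}_t^{1:2}$ and $\rv{Y}_t^{1:2}$ coincide almost surely in the two games. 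Once this pathwise equality is in hand, the costs $\sum_{t=1}^T c_t(\rv{X}_t,\rv{U}_t^1,\rv{U}_t^2)$ agree sample-by-sample, so taking expectations yields $\mathcal{J}(\vct{\chi}^1,\vct{\chi}^2) = \mathcal{J}(\vct{\tilde{\chi}}^1,\vct{\tilde{\chi}}^2)$. This makes precise the informal assertion, noted just before the lemma, that $\varrho$ is built so that the two profiles induce identical dynamics.

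For the induction, the base case at $t=1$ is immediate: $\rv{X}_1$, the initial private information $\rv{P}_1^{1:2}$ and the initial common information $\rv{C}_1$ are functions of the shared primitive variables and are therefore equal in both games. The propagation of $\rv{X}_{t+1}$, $\rv{Y}_{t+1}^{1:2}$, $\rv{P}_{t+1}^{1:2}$ and $\rv{Z}_{t+1}$ (hence of $\rv{C}_{t+1} = \{\rv{C}_t,\rv{Z}_{t+1}\}$) is routine: each is obtained by applying the same fixed transformation from equations \eqref{virdyn1}--\eqref{virdyn5} to inputs that, by the induction hypothesis together with the shared noise variables, are identical across the two games. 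Likewise, once the prescriptions $\Gamma_t^{1:2}$ are shown to match, the actions $\rv{U}_t^{i,j} = \rand(\U_t^{i,j},\Gamma_t^{i,j}(\rv{P}_t^{i,j}),\rv{K}_t^{i,j})$ coincide because $\Gamma_t^{i,j}$, $\rv{P}_t^{i,j}$ and $\rv{K}_t^{i,j}$ all agree.

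The crux is the equality of the prescriptions $\Gamma_t^{1:2}$, and I expect this to be the only genuinely delicate point. The key observation is that, for the \emph{fixed} strategy $\vct{\tilde{\chi}}^i$, the realized prescription history in game $\gm{G}_e$ is a deterministic function of the common information alone: by forward induction $\Gamma_1^i = \tilde{\chi}_1^i(\rv{C}_1)$ is a function of $\rv{C}_1$, $\Gamma_2^i = \tilde{\chi}_2^i(\rv{C}_2,\Gamma_1^{1:2})$ is a function of $\rv{C}_2$, and in general $\Gamma_s^i$ is a function of $\rv{C}_s$ (using $\rv{C}_{s-1}\subseteq \rv{C}_s$). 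This function is precisely the reconstruction $\tilde{\gamma}_{1:s}^{1:2}$ appearing in Definition \ref{def:rho}, which is generated by the very same recursion $\tilde{\gamma}_s^j = \tilde{\chi}_s^j(\rv{C}_s,\tilde{\gamma}_{1:s-1}^{1:2})$. Consequently, in $\gm{G}_e$ we have $\Gamma_t^i = \tilde{\chi}_t^i(\rv{C}_t,\tilde{\gamma}_{1:t-1}^{1:2}) = \chi_t^i(\rv{C}_t)$, where the last equality holds because $\vct{\chi}^i = \varrho^i(\vct{\tilde{\chi}}^1,\vct{\tilde{\chi}}^2)$. Since $\rv{C}_t$ is identical in both games by the induction hypothesis, the prescriptions agree, which closes the inductive step and completes the argument; the remaining equalities follow mechanically from the shared dynamics.
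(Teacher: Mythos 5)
Your proposal is correct and follows essentially the same route as the paper's own proof: couple the two games on a common probability space (shared primitive and randomization variables), then argue by forward induction that all system variables, prescriptions, and actions coincide pathwise, so the costs agree sample-by-sample. If anything, you are slightly more explicit than the paper at the one delicate step—showing that the realized prescription history in $\gm{G}_e$ coincides with the reconstruction $\tilde{\gamma}_{1:t-1}^{1:2}$ in Definition \ref{def:rho}, which the paper justifies tersely by citing the construction of $\varrho^i$.
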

\begin{proof}
Let us consider the evolution of the virtual game $\gm{G}_v$ under the strategy profile $(\chi^1,\chi^2)$ and the expanded virtual game $\gm{G}_e$ under the strategy profile $(\tilde{\chi}^1,\tilde{\chi}^2)$. Let the primitive variables and the randomization variables $K_t^{i,j}$ in both games be identical. The variables such as the state, action and information variables in the expanded game $\gm{G}_e$ are distinguished from those in the virtual game $\gm{G}_v$ by means of a tilde. For instance, $X_t$ is the state in game $\gm{G}_v$ and $\tilde{X}_t$ is the state in game $\gm{G}_e$.

We will prove by induction that the system evolution in both these games is identical over the entire horizon. This is clearly true at the end of time $t=1$ because the state, observations and the common and private information variables are identical in both games. Moreover, since $\vct{\chi}^i = \varrho^i(\vct{\tilde{\chi}}^1,\vct{\tilde{\chi}}^2)$, $i=1,2$, the strategies $\chi^i_1$ and $\tilde{\chi}^i_1$ are identical by definition (see Definition \ref{def:rho}). Thus, the prescriptions and actions at $t=1$ are also identical.

For induction, assume that the system evolution in both games is identical until the end of time $t$. Then, $$\rv{X}_{t+1} = f_t(\rv{X}_t, \rv{U}_t^{1:2},\rv{W}_t^s) = f_t(\tilde{\rv{X}}_t, \tilde{\rv{U}}_t^{1:2},\rv{W}_t^s) = \tilde{\rv{X}}_{t+1}.$$ Using equations (\ref{obseq}), (\ref{privevol}) and (\ref{commonevol}), we can similarly argue that $\rv{Y}_{t+1}^i = \tilde{\rv{Y}}_{t+1}^i$, $\rv{P}_{t+1}^i = \tilde{\rv{P}}_{t+1}^i$ and $\rv{C}_{t+1} = \tilde{\rv{C}}_{t+1}$. Since $\vct{\chi}^i = \varrho^i(\vct{\tilde{\chi}}^1,\vct{\tilde{\chi}}^2)$, we also have 
\begin{align}
\tilde{\Gamma}_{t+1}^i &= \tilde{\chi}_{t+1}^i(\tilde{\rv{C}}_{t+1},\tilde{\Gamma}_{1:t}^{1:2}) \stackrel{a}{=} \chi_{t+1}^i(\tilde{\rv{C}}_{t+1}) \stackrel{b}{=} \Gamma_{t+1}^i.
\end{align}
Here, equality $(a)$ follows from the construction of the mapping $\varrho^i$ (see Definition \ref{def:rho}) and equality $(b)$ follows from the fact that $C_{t+1} = \tilde{C}_{t+1}$. Further, 
\begin{align}
\rv{U}_{t+1}^{i,j} = \rand(\U_t^{i,j},\Gamma_{t+1}^{i,j}(\rv{P}_{t+1}^{i,j}),\rv{K}_{t+1}^{i,j}) &= \rand(\U_t^{i,j},\tilde{\Gamma}_{t+1}^{i,j}(\tilde{\rv{P}}_{t+1}^{i,j}),\rv{K}_{t+1}^{i,j}) \\
&= \tilde{\rv{U}}_{t+1}^{i,j}.
\end{align}
Thus, by induction, the hypothesis is true for every $1\leq t \leq T$. This proves that the virtual and expanded games have identical dynamics under strategy profiles $(\vct{\chi}^1,\chi^2)$ and $(\tilde{\vct{\chi}}^1,\tilde{\chi}^2)$.

Since the virtual and expanded games have the same cost structure, having identical dynamics ensures that strategy profiles $(\vct{\chi}^1,\chi^2)$ and $(\tilde{\vct{\chi}}^1,\tilde{\chi}^2)$ have the same expected cost in games $\gm{G}_v$ and $\gm{G}_e$, respectively. Therefore, $\mathcal{J}(\vct{\chi}^1,\chi^2) = {\mathcal{J}}(\tilde{\vct{\chi}}^1,\tilde{\chi}^2)$.
%\begin{align}
%\mathcal{J}(\vct{\chi}) = \tilde{\mathcal{J}}(\tilde{\vct{\chi}}).
%\end{align}
\end{proof}

\subsection{Proof of Theorem \ref{origvirt}}\label{origvirtproof}
%\blue{I edited this proof. Please verify.}\\
%\blue{verified}

For any strategy $\chi^1 \in \H^1$, we have 
\begin{align}
\sup_{\tilde{\chi}^2 \in \tilde{\H}^2}\mathcal{J}({\chi}^1,\tilde{\chi}^2) \geq \sup_{{\chi}^2 \in {\H}^2}\mathcal{J}({\chi}^1,{\chi}^2), \label{eq:thm1c}
\end{align}
because $\H^2 \subseteq \tilde{\H}^2$. Further,
\begin{align}
\sup_{\tilde{\chi}^2 \in \tilde{\H}^2}\mathcal{J}({\chi}^1,\tilde{\chi}^2) &=\sup_{\tilde{\chi}^2 \in \tilde{\H}^2}\mathcal{J}(\varrho^1(\chi^1,\tilde{\chi}^2),\varrho^2(\chi^1,\tilde{\chi}^2)). \label{eq:thm1a} \\
&= \sup_{\tilde{\chi}^2 \in \tilde{\H}^2}\mathcal{J}({\chi}^1,\varrho^2(\chi^1,\tilde{\chi}^2))\\
&\leq \sup_{{\chi}^2 \in {\H}^2}\mathcal{J}({\chi}^1,{\chi}^2), \label{eq:thm1b}
\end{align} 
where the first equality is due to Lemma \ref{evolequi}, the second equality is because $\varrho^1(\chi^1,\tilde{\chi}^2) = \chi^1$ and the last inequality is due to the fact that $\varrho^2(\chi^1,\tilde{\chi}^2) \in {\H}^2$ for any $\tilde{\chi}^2 \in \tilde{\H}^2$.

%Notice that for any strategy $\tilde{\chi}^2 \in \tilde{\H}^2$, $\varrho^1(\chi^1,\tilde{\chi}^2) = \chi^1$. Therefore, using Lemma \ref{evolequi}, we have
%\begin{align}
%\sup_{\tilde{\chi}^2 \in \tilde{\H}^2}\mathcal{J}({\chi}^1,\tilde{\chi}^2) &= \sup_{\tilde{\chi}^2 \in \tilde{\H}^2}\mathcal{J}({\chi}^1,\varrho^2(\chi^1,\tilde{\chi}^2))\\
%&\leq \sup_{{\chi}^2 \in {\H}^2}\mathcal{J}({\chi}^1,{\chi}^2).
%\end{align}
Combining \eqref{eq:thm1c} and \eqref{eq:thm1b}, we obtain that
\begin{align}
\sup_{{\chi}^2 \in {\H}^2}\mathcal{J}({\chi}^1,{\chi}^2) = \sup_{\tilde{\chi}^2 \in \tilde{\H}^2}\mathcal{J}({\chi}^1,\tilde{\chi}^2). \label{eq:thm1d}
\end{align}
Now, 
%since $\H^1 \subseteq \tilde{\H}^1$, we have that
\begin{align}
S^u(\gm{G}_e) &:=\inf_{\tilde{\chi}^1 \in \tilde{\H}^1}\sup_{\tilde{\chi}^2 \in \tilde{\H}^2}\mathcal{J}(\tilde{\chi}^1,\tilde{\chi}^2) \\
\label{infsupineq} &\leq \inf_{{\chi}^1 \in {\H}^1}\sup_{\tilde{\chi}^2 \in \tilde{\H}^2}\mathcal{J}({\chi}^1,\tilde{\chi}^2)\\
&= \inf_{{\chi}^1 \in {\H}^1}\sup_{{\chi}^2 \in {\H}^2}\mathcal{J}({\chi}^1,{\chi}^2), \label{eq:thm1e}\\
&=: S^u(\gm{G}_v).
\end{align}
where the inequality (\ref{infsupineq}) is true since $\H^1 \subseteq \tilde{\H}^1$ and the   equality  in \eqref{eq:thm1e}  follows from \eqref{eq:thm1d}. 
Therefore, $S^u(\gm{G}_e) \leq S^u(\gm{G}_v)$. We can use similar arguments to show that $S^l(\gm{G}_v) \leq S^l(\gm{G}_e).$

\section{Proof of Lemma \ref{infstate}}\label{infstateproof}
{\begin{definition}[Notation for Prescriptions]\label{behavepres}
For virtual player $i$ at time $t$, let $\gamma^i = (\gamma^{i,1},\dots,\gamma^{i,N_i}) \in \mathcal{B}_t^i$ be a prescription. We use $\gamma^{i,j}(p_t^{i,j} ; u_t^{i,j})$ to denote the probability assigned to action $u_t^{i,j}$ by the distribution $\gamma^{i,j}(p_t^{i,j})$.
We will also use the following notation:
\begin{align*}
    \gamma^i(p_t^i;u_t^i) \doteq \prod_{j = 1}^{N_i}\gamma^{i,j}(p_t^{i,j};u_t^{i,j})~~\forall p_t^i\in \P_t^i,u_t^i \in \U_t^i.
\end{align*}
Here, $\gamma^i(p_t^i;u_t^i)$ is the probability assigned to a team action $u_t^i$ by the prescription $\gamma^i$ when the team's private information is $p_t^i$.
\end{definition}}
We begin with defining the following transformations for each time $t$. Recall that $\mathcal{S}_t$ is the set of all possible common information beliefs at time $t$ and $\mathcal{B}_t^i$ is the prescription space for virtual player $i$  at time $t$.
\begin{definition}\label{fdef}
\begin{enumerate}
\item Let $\mathscr{P}_t^j: \mathcal{S}_t \times \mathcal{B}_t^1 \times \mathcal{B}_t^2 \to \Delta(\mathcal{Z}_{t+1} \times \X_{t+1} \times \P_{t+1}^1 \times \P_{t+1}^2)$ be defined as
\begin{align}
\mathscr{P}_t^j(\pi_t,&\gamma_t^{1:2}; z_{t+1}, x_{t+1},p_{t+1}^{1:2}) \label{jointprob}\\
&:= \sum_{\vct{x}_t,\vct{p}^{1:2}_t,\vct{u}_t^{1:2}}\pi_t(\vct{x}_t,\vct{p}_t^{1:2})\gamma_t^1(\vct{p}_t^1;u_t^1)\gamma_t^2( \vct{p}_t^2; u_t^2)\Py[\vct{x}_{t+1}, \vct{p}_{t+1}^{1:2},\vct{z}_{t+1} \mid \vct{x}_t,\vct{p}_t^{1:2},\vct{u}_t^{1:2}].\label{condindepeq}
\end{align}
We will use $\mathscr{P}_t^j(\pi_t,\gamma_t^{1:2})$ as a shorthand for the probability distribution $\mathscr{P}_t^j(\pi_t,\gamma_t^{1:2}; \cdot, \cdot, \cdot)$.
The distribution $\mathscr{P}_t^j(\pi_t,\gamma_t^{1:2})$ can be viewed as a joint distribution over the variables $Z_{t+1},X_{t+1}, P_{t+1}^{1:2}$ if the distribution on $X_t, P^{1:2}_t$ is $\pi_t$ and prescriptions $\gamma^{1:2}_t$ are chosen by the virtual players at time $t$.
\item Let $\mathscr{P}_t^m: \mathcal{S}_t \times \mathcal{B}_t^1 \times \mathcal{B}_t^2 \to \Delta\mathcal{Z}_{t+1}$ be defined as
\begin{align}
\mathscr{P}_t^m(\pi_t,\gamma_t^{1:2}; z_{t+1}) = \sum_{x_{t+1},p_{t+1}^{1:2}} \mathscr{P}_t^j(\pi_t,\gamma_t^{1:2}; z_{t+1}, x_{t+1},p_{t+1}^{1:2}). \label{marginalprob}
\end{align}
{The distribution $\mathscr{P}_t^m(\pi_t,\gamma_t^{1:2})$ is the marginal distribution of the variable $Z_{t+1}$ obtained from the joint distribution $\mathscr{P}_t^j(\pi_t,\gamma_t^{1:2})$ defined above.}
\item Let $F_t: \mathcal{S}_t \times \mathcal{B}_t^1 \times \mathcal{B}_t^2 \times \mathcal{Z}_{t+1} \to \mathcal{S}_{t+1}$ be defined as
\begin{align}
F_t(\pi_t,\gamma_t^{1:2},z_{t+1})= 
\begin{cases}
\frac{\mathscr{P}_t^j(\pi_t,\gamma_t^{1:2};z_{t+1},\cdot,\cdot)}{\mathscr{P}_t^m(\pi_t,\gamma_t^{1:2};z_{t+1})} &\text{if } \mathscr{P}_t^m(\pi_t,\gamma_t^{1:2};z_{t+1}) > 0\\
G_t(\pi_t,\gamma_t^{1:2},z_{t+1}) & \text{otherwise},
\end{cases}
\end{align}
where $G_t: \mathcal{S}_t \times \mathcal{B}_t^1 \times \mathcal{B}_t^2 \times \mathcal{Z}_{t+1} \to \mathcal{S}_{t+1}$ can be any arbitrary measurable mapping. {One such mapping is the one that maps every element $\pi_t,\gamma_t^{1:2},z_{t+1}$ to the uniform distribution over the finite space $\mathcal{X}_{t+1} \times \P_{t+1}^1 \times \P_{t+1}^2$.}
\end{enumerate}
\end{definition}
Let the collection of transformations $F_t$ that can be constructed using the method described in Definition \ref{fdef} be denoted by $\mathscr{B}$. Note that the transformations $\mathscr{P}_t^j, \mathscr{P}_t^m$ and $F_t$ do not depend on the strategy profile $(\tilde{\chi}^1, \tilde{\chi}^2)$ because the term $\Py[\vct{x}_{t+1}, \vct{p}_{t+1}^{1:2},\vct{z}_{t+1} \mid \vct{x}_t,\vct{p}_t^{1:2},\vct{u}_t^{1:2}]$ in (\ref{condindepeq}) depends only on the system dynamics (see equations (\ref{virdyn1}) -- (\ref{virdyn5})) and not on the strategy profile $(\tilde{\chi}^1,\tilde{\chi}^2)$.

{Consider a strategy profile $(\tilde{\chi}^1, \tilde{\chi}^2)$.  Note that the number of possible realizations of common information and prescription history  under $(\tilde{\chi}^1, \tilde{\chi}^2)$  is finite. } Let $c_{t+1},\gamma_{1:t}^{1:2}$ be a realization of the common information and prescription history at time $t+1$ with non-zero probability of occurrence under  $(\tilde{\chi}^1, \tilde{\chi}^2)$. For this realization of virtual players' information, the common information based belief on the state and private information at time $t+1$ is given by
%\blue{should the P have strategies in superscript??}
\begin{align}
\nonumber&\pi_{t+1}(x_{t+1},p_{t+1}^{1:2}) \\
&= \nonumber\Py^{(\tilde{\chi}^1,\tilde{\chi}^2)}[X_{t+1} = \vct{x}_{t+1}, P_{t+1}^{1:2} = \vct{p}_{t+1}^{1:2} \mid \vct{c}_{t+1},\gamma_{1:t}^{1:2}]\\
&= \nonumber\Py^{(\tilde{\chi}^1,\tilde{\chi}^2)}[X_{t+1} = \vct{x}_{t+1}, P_{t+1}^{1:2} = \vct{p}_{t+1}^{1:2}\mid \vct{c}_{t},\gamma_{1:t-1}^{1:2},\vct{z}_{t+1},\gamma_t^{1:2}]\\
\label{b1}&= \frac{\Py^{(\tilde{\chi}^1,\tilde{\chi}^2)}[X_{t+1} = \vct{x}_{t+1}, P_{t+1}^{1:2} = \vct{p}_{t+1}^{1:2},Z_{t+1} = \vct{z}_{t+1} \mid \vct{c}_{t},\gamma_{1:t}^{1:2}]}{\Py^{(\tilde{\chi}^1,\tilde{\chi}^2)}[Z_{t+1} = \vct{z}_{t+1}\mid\vct{c}_t,\gamma_{1:t}^{1:2}]}.
\end{align}

Notice that the expression (\ref{b1}) is well-defined, that is, the denominator is non-zero because of our assumption that the realization $c_{t+1},\gamma_{1:t}^{1:2}$ has non-zero probability of occurrence. Let us consider the numerator in the expression (\ref{b1}). For convenience, we will denote it with $\Py^{(\tilde{\chi}^1,\tilde{\chi}^2)}[\vct{x}_{t+1}, \vct{p}_{t+1}^{1:2},\vct{z}_{t+1} \mid \vct{c}_{t},\gamma_{1:t}^{1:2}]$. We have
\begin{align}
\nonumber&\Py^{(\tilde{\chi}^1,\tilde{\chi}^2)}[\vct{x}_{t+1}, \vct{p}_{t+1}^{1:2},\vct{z}_{t+1} \mid \vct{c}_{t},\gamma_{1:t}^{1:2}]\\
&= \sum_{\vct{x}_t,\vct{p}^{1:2}_t,\vct{u}_t^{1:2}}\pi_t(\vct{x}_t,\vct{p}_t^{1:2})\gamma_t^1( \vct{p}_t^1;\vct{u}_t^1)\gamma_t^2( \vct{p}_t^2; {u}_t^2)\Py^{(\tilde{\chi}^1,\tilde{\chi}^2)}[\vct{x}_{t+1}, \vct{p}_{t+1}^{1:2},\vct{z}_{t+1} \mid \vct{c}_{t},\gamma_{1:t}^{1:2},\vct{x}_t,\vct{p}_t^{1:2},\vct{u}_t^{1:2}]\\
&= \label{condindepeq1}\sum_{\vct{x}_t,\vct{p}^{1:2}_t,\vct{u}_t^{1:2}}\pi_t(\vct{x}_t,\vct{p}_t^{1:2})\gamma_t^1(\vct{p}_t^1;u_t^1)\gamma_t^2( \vct{p}_t^2; u_t^2)\Py[\vct{x}_{t+1}, \vct{p}_{t+1}^{1:2},\vct{z}_{t+1} \mid \vct{x}_t,\vct{p}_t^{1:2},\vct{u}_t^{1:2}]\\
&= \mathscr{P}_t^j(\pi_t,\gamma_t^{1:2};z_{t+1}, x_{t+1},p_{t+1}^{1:2}),
\end{align}
{where $\pi_t$ is the common information belief on $X_t, P_t^1,P_t^2$ at time $t$ given the realization\footnote{Note that the belief $\Py^{(\tilde{\chi}^1,\tilde{\chi}^2)}[x_t, p_t^{1:2} \mid c_t,\gamma_{1:t-1}^{1:2}] = \Py^{(\tilde{\chi}^1,\tilde{\chi}^2)}[x_t, p_t^{1:2} \mid c_t,\gamma_{1:t}^{1:2}]$ because $\gamma_t^i = \tilde{\chi}_t^i( c_t,\gamma_{1:t-1}^{1:2})$, $i=1,2$. } $c_t,\gamma_{1:t-1}^{1:2}$ and $\mathscr{P}_t^j$ is as defined in Definition \ref{fdef}.}
%\begin{align}
%&\vct{X}_{t+1} = f_t(\vct{X}_t,\vct{U}_t^{1:2},\rv{W}_t^s)\\
%&\vct{Z}_{t+1} = \zeta_{t+1}(\vct{P}_t^{1:2},\vct{U}_t^{1:2},\vct{Y}_{t+1}^{1:2})\\
%%\vct{P}_{t+1}^1 &= \xi_{t+1}^1(\vct{P}_t^1,\vct{U}_t^1,\vct{Y}_{t+1}^1)\\
%%\vct{P}_{t+1}^2 &= \xi_{t+1}^2(\vct{P}_t^2,\vct{U}_t^2,\vct{Y}_{t+1}^2)\\
%&\vct{P}_{t+1}^i = \xi_{t+1}^i(\vct{P}_t^i,\vct{U}_t^i,\vct{Y}_{t+1}^i), &i = 1,2, \\
%%\vct{U}_t^1 &= \kappa(\gamma_t^1(\vct{P}_t^1),\rv{V}_t^1)\\
%%\vct{U}_t^2 &= \kappa(\gamma_t^1(\vct{P}_t^2),\rv{V}_t^2)\\
%&\vct{U}_t^i = \kappa(\Gamma_t^i(\vct{P}_t^i),\rv{V}_t^i), &i = 1,2, \\
%%\vct{y}_{t+1}^1 &= h_{t+1}^1(\vct{x}_{t+1},\rv{W}_{t+1}^1)\\
%%\vct{y}_{t+1}^2 &= h_{t+1}^2(\vct{x}_{t+1},\rv{W}_{t+1}^2)
%&\vct{Y}_{t+1}^i = h_{t+1}^i(\vct{X}_{t+1}, U_t^{1:2},\rv{W}_{t+1}^i), &i = 1,2.
%\end{align}
The equality in (\ref{condindepeq1}) is due to the structure of the system dynamics in game $\gm{G}_e$ described by equations (\ref{virdyn1}) -- (\ref{virdyn5}). Similarly, the denominator in (\ref{b1}) satisfies
\begin{align}
0 < \nonumber\Py^{(\tilde{\chi}^1,\tilde{\chi}^2)}[\vct{z}_{t+1} \mid \vct{c}_{t},\gamma_{1:t}^{1:2}] &= \sum_{x_{t+1},p_{t+1}^{1:2}} \mathscr{P}_t^j(\pi_t,\gamma_t^{1:2};z_{t+1}, x_{t+1},p_{t+1}^{1:2})\\
&= \mathscr{P}_t^m(\pi_t,\gamma_t^{1:2};z_{t+1}), \label{marginalprob}
\end{align}
where $\mathscr{P}_t^m$ is as defined is Definition \ref{fdef}. Thus, from equation (\ref{b1}), we have
\begin{equation}\label{beltrans}
\pi_{t+1} = \frac{\mathscr{P}_t^j(\pi_t,\gamma_t^{1:2};z_{t+1},\cdot,\cdot)}{\mathscr{P}_t^m(\pi_t,\gamma_t^{1:2},z_{t+1})} = F_t(\pi_t, \gamma_t^{1:2};\vct{z}_{t+1}),
\end{equation}
where $F_t$ is as defined in Definition \ref{fdef}. {Since the relation (\ref{beltrans}) holds for every realization $c_{t+1}, \gamma_{1:t}^{1:2}$ that has non-zero probability of occurrence under $(\tilde{\chi}^1,\tilde{\chi}^2)$,} we can conclude that  the common information belief $\Pi_t$ evolves \emph{almost surely} as
\begin{align}
\Pi_{t+1} = F_t(\Pi_t, \Gamma_t^{1:2},\vct{Z}_{t+1}), ~~ t \geq 1,
\end{align}
under the strategy profile $(\tilde{\chi}^1,\tilde{\chi}^2)$.
%where $F_t$ is the strategy-independent transformation defined in Definition \ref{fdef}.

The expected cost at time $t$ can be expressed as follows
\begin{align}
\E^{(\tilde{\vct{\chi}}^1,\tilde{\vct{\chi}}^2)}[c_t(X_t,U_t^1,U_t^2)]
& = \E^{(\tilde{\vct{\chi}}^1,\tilde{\vct{\chi}}^2)}[\E[c_t(X_t,U_t^1,U_t^2) \mid C_t,\Gamma_{1:t}^{1:2}]]\\
%&=\E^{(\tilde{\vct{\chi}}^1,\tilde{\vct{\chi}}^2)}[\E\left[c_t\left(\rv{X}_t,\kappa(\Gamma_t^i(\rv{P}_t^{i}),\rv{V}_t^i)\right) \mid \rv{C}_t,\Gamma_{1:t}^{1:2} \right]]\\
&=  \E^{(\tilde{\vct{\chi}}^1,\tilde{\vct{\chi}}^2)}[\tilde{c}_t(\Pi_t,\Gamma_t^1,\Gamma_t^2)],
\end{align}
where the function $\tilde{c}_t$ is as defined as
\begin{align}
\label{tildec}\tilde{c}_t(\pi,\gamma^1,\gamma^2) :=&\sum_{\vct{x}_t,\vct{p}_t^{1:2},\vct{u}_t^{1:2}}c_t(\vct{x}_t, \vct{u}_t^1,\vct{u}_t^2) \pi(\vct{x}_t,\vct{p}_t^1,\vct{p}_t^2)\gamma^1(\vct{p}_t^1 ; \vct{u}_t^1)\gamma^2(\vct{p}_t^2;\vct{u}_t^2).
\end{align}
Therefore, the total cost can be expressed as
\begin{align}
\E^{(\tilde{\vct{\chi}}^1,\tilde{\vct{\chi}}^2)}&\left[\sum_{t=1}^T c_t(\rv{X}_t,\rv{U}_t^{1},\rv{U}_t^2)\right]
%= \; &\E^{(\tilde{\vct{\chi}}^1,\tilde{\vct{\chi}}^2)}\left[\E\left[\sum_{t=1}^T c_t\left(\rv{X}_t,\kappa(\Gamma_t^i(\rv{P}_t^{i}),\rv{V}_t^i)\right) \mid \rv{C}_t,\Gamma_{1:t-1}^{1:2} \right]\right] \nonumber\\
 = \E^{(\tilde{\vct{\chi}}^1,\tilde{\vct{\chi}}^2)}\left[\sum_{t=1}^T \tilde{c}_t(\Pi_t,\Gamma_t^1,\Gamma_t^2)\right].
\end{align}
%The function $\tilde{c}_t$ for a given $\pi_t, \gamma_t^1,\gamma_t^2$ is
%\begin{align}
%\nonumber\sum_{\vct{x}_t,\vct{p}_t^1,\vct{p}_t^2,\vct{u}_t^1,\vct{u}_t^2}\pi_t(\vct{x}_t,\vct{p}_t^1,\vct{p}_t^2)\gamma_t^1(\vct{u}_t^1 &\mid \vct{p}_t^1)\gamma_t^2(\vct{u}_t^2 \mid \vct{p}_t^2)\\
%\label{tildec}& \times c_t(\vct{x}_t, \vct{u}_t^1,\vct{u}_t^2).
%\end{align}
%Note that the cost $\tilde{c}_t$ is a trilinear function in $\pi_t, \gamma_t^1$ and $\gamma_t^2$.

\section{Domain Extension and Proof of Lemma \ref{equiexistlemma}}\label{equiexistlemmaproof}
\subsection{The Max-min Dynamic Program}
The maximizing virtual player (virtual player 2) in game $\gm{G}_e$ solves the following dynamic program. Define $V^l_{T+1}(\pi_{T+1}) = 0$ for every  $\pi_{T+1}$. In a backward inductive manner, at each time $t \leq T$ and for each possible  common information belief $\pi_t$ and prescriptions $\gamma^1_t, \gamma^2_t$, define the lower cost-to-go function $w_t^l$ and the lower value function $V_t^l$ as
\begin{align}
\label{lowercost}w^l_t(\pi_t,\gamma_t^1,\gamma_t^2) 
&\doteq \tilde{c}_t(\pi_t,\gamma_t^1,\gamma_t^2) + \E[V^l_{t+1}(F_t(\pi_{t},\gamma_t^{1:2},\rv{Z}_{t+1}))\mid \pi_t,\gamma_t^{1:2}],\\
\label{maxequa}V_t^l(\pi_t) &\doteq \min_{{\gamma}_t^1} \max_{\gamma_t^2}w_t^l(\pi_t,\gamma_t^1,\gamma_t^2).
\end{align}

\subsection{Domain Extension}
Recall that $\mathcal{S}_t$ is the set of all probability distributions over the finite set $\mathcal{X}_{t} \times \mathcal{P}_{t}^1 \times \mathcal{P}_{t}^2$.
%{Let the common information based belief simplex at time $t$ be $\mathcal{S}_t$. 
Define 
\begin{align}
\bar{\mathcal{S}}_t := \{\alpha\pi_t : 0\leq\alpha\leq 1, \pi_t \in \mathcal{S}_t\}.
\end{align} 
%\blue{Discuss....\\}
The functions $\tilde{c}_t$ in (\ref{tildec}), $\mathscr{P}_t^j$ in (\ref{jointprob}), $\mathscr{P}_t^m$ in (\ref{marginalprob}) and $F_t$ in (\ref{beltrans}) were defined for any $\pi_t \in \mathcal{S}_t$. We will extend the domain of the argument $\pi_t$ in these functions to $\bar{\mathcal{S}}_t$ as follows. For any {$\gamma_t^i \in \mathcal{B}_t^i, i = 1,2$, $z_{t+1} \in \mathcal{Z}_{t+1}$,} $0 \leq \alpha \leq 1$ and $\pi_t \in {\mathcal{S}}_t$, let
\begin{enumerate}
\item $\tilde{c}_t(\alpha\pi_t,\gamma_t^1,\gamma_t^2) := \alpha\tilde{c}_t(\pi_t,\gamma_t^1,\gamma_t^2)$
\item $\mathscr{P}_t^j(\alpha\pi_t,\gamma_t^{1:2}) := \alpha \mathscr{P}_t^j(\pi_t,\gamma_t^{1:2})$
\item $\mathscr{P}_t^m(\alpha\pi_t,\gamma_t^{1:2}) := \alpha \mathscr{P}_t^m(\pi_t,\gamma_t^{1:2})$
\item $
F_t(\alpha\pi_t,\gamma_t^{1:2},z_{t+1}) := 
\begin{cases}
F_t(\pi_t,\gamma_t^{1:2},z_{t+1}) &\text{if } \alpha > 0\\
\bm{0} & \text{if } \alpha = 0,
\end{cases}
$
\end{enumerate}
where $\bm{0}$ is a zero-vector of size $|\mathcal{X}_{t} \times \mathcal{P}_{t}^1 \times \mathcal{P}_{t}^2|$.

% \blue{Notice that the definitions above }

%We can extend the domain of the argument $\pi_t$ in the function $\tilde{c}_t$ in (\ref{tildec}) to $\bar{\mathcal{S}}_t$ without changing its definition. Similarly, we can extend the domain of the argument $\pi_t$  in the transformations $\mathscr{P}_t^j$ and $\mathscr{P}_t^m$ to $\bar{\mathcal{S}}_t$ with the same definitions as in equations (\ref{jointprob}) and (\ref{marginalprob}). Using the definition in (\ref{beltrans}), the domain of the argument $\pi_t$ in the transformation $F_t$ can also be extended to $\bar{\mathcal{S}}_t$ with the following modification: for any pair of prescriptions $\gamma_{t}^{1:2}$ and common observation $z_{t+1}$,
%\begin{equation}
%F_t(\bm{0}, \gamma_t^{1:2},z_{t+1}) = \bm{0}.
%\end{equation}
Having extended the domain of the above  functions, we can also extend the domain of the argument $\pi_t$ in the functions $w_t^u(\cdot), w_t^l(\cdot), V_t^u(\cdot), V_t^l(\cdot)$ defined in the dynamic programs of  Section \ref{dpsec}. First, for any  $0 \leq \alpha \leq 1$ and  $\pi_{T+1} \in {\mathcal{S}}_{T+1}$, define $V^u_{T+1}(\alpha\pi_{T+1}) :=0$.  We can then define the following functions for every  $t \leq T$ in a backward inductive manner:  For any {$\gamma_t^i \in \mathcal{B}_t^i, i = 1,2$, } $0 \leq \alpha \leq 1$ and $\pi_t \in {\mathcal{S}}_t$, let
\begin{align}\label{eq:w_ext}
w^u_t (\alpha\pi_t,\gamma_t^1,\gamma_t^2) &:= \tilde{c}_t(\alpha\pi_t,\gamma_t^1,\gamma_t^2) + \sum_{z_{t+1}}\big[\mathscr{P}_t^m(\alpha\pi_t,\gamma_t^{1:2};z_{t+1})V^u_{t+1}(F_t(\alpha\pi_t,\gamma_t^{1:2},z_{t+1}))\big] \\
V^u_t(\alpha\pi_t) &:= \inf_{{\gamma}_t^1} \sup_{\gamma_t^2}w_t^u(\alpha\pi_t,\gamma_t^1,\gamma_t^2).
\end{align}
Note that when $\alpha=1$, the above definition of $w^u_t$ is equal to the definition of $w^u_t$ in equation (\ref{minequa}) of the dynamic program. We can similarly extend $w^l_t$ and $V^l_t$. 
%without modifying their definitions (\blue{In Section \ref{dpsec}, the second term is an expectation. Expectation for $\pi_t$ that is not a distribution is not well-defined. Should we replace the expectation in the definition with a sum in Section \ref{dpsec}?} \red{This might be a messy sum. Perhaps we should just explicitly extend them here?}). 
\begin{lemma}\label{scalinglemma}
For any constant $0 \leq \alpha \leq 1$ and any $\pi_t \in \bar{\mathcal{S}}_t$, we have $\alpha V_t^u(\pi_t) = V_t^u(\alpha \pi_t)$ and $\alpha V_t^l(\pi_t) = V_t^l(\alpha \pi_t)$.
%\begin{align}
%\alpha V_t^u(\pi_t) &= V_t^u(\alpha \pi_t)\\
%\alpha V_t^l(\pi_t) &= V_t^l(\alpha \pi_t).
%\end{align}
\end{lemma}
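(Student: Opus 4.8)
The plan is to prove the identity for the upper value $V_t^u$ by backward induction on $t$, running from $t=T+1$ down to $t=1$; the argument for $V_t^l$ is verbatim with $\min\max$ replaced by $\max\min$ throughout. I would first establish the scaling relation for arguments $\pi_t$ lying in the base simplex $\mathcal{S}_t$, and only then lift it to all of $\bar{\mathcal{S}}_t$. For the base case $t=T+1$, the extension sets $V^u_{T+1}(\alpha\pi_{T+1}):=0$, so both $\alpha V^u_{T+1}(\pi_{T+1})$ and $V^u_{T+1}(\alpha\pi_{T+1})$ equal $0$ and the identity holds trivially.

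For the inductive step, fix $t\le T$ and assume $V^u_{t+1}(\alpha\pi_{t+1})=\alpha V^u_{t+1}(\pi_{t+1})$ for all $\pi_{t+1}\in\mathcal{S}_{t+1}$ and $\alpha\in[0,1]$; in particular $V^u_{t+1}(\bm{0})=0$. The crux is to show that the extended cost-to-go is itself positively homogeneous, i.e. $w_t^u(\alpha\pi_t,\gamma_t^1,\gamma_t^2)=\alpha\, w_t^u(\pi_t,\gamma_t^1,\gamma_t^2)$ for every $\pi_t\in\mathcal{S}_t$, every prescription pair, and every $\alpha\in[0,1]$. When $\alpha>0$, I would substitute the three domain-extension rules $\tilde{c}_t(\alpha\pi_t,\cdot)=\alpha\tilde{c}_t(\pi_t,\cdot)$, $\mathscr{P}_t^m(\alpha\pi_t,\cdot)=\alpha\mathscr{P}_t^m(\pi_t,\cdot)$, and $F_t(\alpha\pi_t,\gamma_t^{1:2},z_{t+1})=F_t(\pi_t,\gamma_t^{1:2},z_{t+1})$ directly into the definition \eqref{eq:w_ext} of $w_t^u$. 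Since $F_t(\pi_t,\gamma_t^{1:2},z_{t+1})\in\mathcal{S}_{t+1}$ is an honest normalized distribution, $V^u_{t+1}$ is here evaluated on the simplex, so the induction hypothesis is not even needed in this case; the single factor $\alpha$ coming from $\mathscr{P}_t^m$ multiplies each summand and factors out cleanly.

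The delicate point, and the only place the induction hypothesis is actually used, is the boundary case $\alpha=0$, where $\alpha\pi_t=\bm{0}$ and $F_t(\bm{0},\cdot)=\bm{0}$. Here $\tilde{c}_t(\bm{0},\cdot)=0$ and $\mathscr{P}_t^m(\bm{0},\cdot)=0$, so each term of \eqref{eq:w_ext} is a product of $0$ with the \emph{finite} quantity $V^u_{t+1}(\bm{0})=0$ supplied by the hypothesis, giving $w_t^u(\bm{0},\cdot)=0=\alpha\, w_t^u(\pi_t,\cdot)$. Passing from $w_t^u$ to $V_t^u$ is then immediate: for $\alpha>0$ the positive scalar commutes with $\inf_{\gamma_t^1}\sup_{\gamma_t^2}$, so $V^u_t(\alpha\pi_t)=\alpha V^u_t(\pi_t)$, while for $\alpha=0$ both sides vanish. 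This settles the claim for $\pi_t\in\mathcal{S}_t$.

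To conclude for an arbitrary $\pi_t\in\bar{\mathcal{S}}_t$, I would write $\pi_t=\beta\pi_t'$ with $\pi_t'\in\mathcal{S}_t$ and $\beta\in[0,1]$, whence $\alpha\pi_t=(\alpha\beta)\pi_t'$ with $\alpha\beta\in[0,1]$; applying the simplex identity established above twice gives $V^u_t(\alpha\pi_t)=(\alpha\beta)V^u_t(\pi_t')=\alpha V^u_t(\beta\pi_t')=\alpha V^u_t(\pi_t)$, as required. The main obstacle is thus not conceptual but bookkeeping around the degenerate $\alpha=0$ argument: one must lean on the extended convention $F_t(\bm{0},\cdot)=\bm{0}$ together with the inductively-established finiteness $V^u_{t+1}(\bm{0})=0$ so that the indeterminate-looking product $0\cdot V^u_{t+1}(\bm{0})$ is unambiguously $0$, and one must restrict the scalar-commutes-with-$\inf\sup$ step to $\alpha>0$.
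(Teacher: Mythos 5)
Your proof is correct and takes the same route the paper intends: the paper's own ``proof'' is a single sentence asserting the result ``can be easily obtained from the above definitions of the extended functions,'' and your backward induction is exactly that unfolding, done carefully. The extra bookkeeping you supply --- handling the degenerate case $\alpha = 0$ via $V^u_{t+1}(\bm{0})=0$, and lifting the identity from $\mathcal{S}_t$ to all of $\bar{\mathcal{S}}_t$ by composing scalings --- is precisely what the paper leaves implicit, so there is nothing to reconcile.
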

\begin{proof}
{The proof can be easily obtained from the above definitions of the extended functions. }
\end{proof}

\subsection{Proof of Lemma \ref{equiexistlemma}}
We will first prove inductively that the function $w_t^u: \bar{\mathcal{S}}_t\times\mathcal{B}_t^1\times \mathcal{B}_t^2 \to \R$ is continuous. Let us as assume that value function $V_{t+1}^u$ is continuous for some $t \leq T$. Note that this assumption clearly holds at $t = T$. At time $t$, we have
\begin{align}
w^u_t(\pi_t,\gamma_t^1,\gamma_t^2) &= \tilde{c}_t(\pi_t,\gamma_t^1,\gamma_t^2) + \E[V^u_{t+1}(F_t(\pi_{t},\gamma_t^{1:2},\rv{Z}_{t+1}))\mid \pi_t,\gamma_t^{1:2}]\\
&=  \tilde{c}_t(\pi_t,\gamma_t^1,\gamma_t^2) +\sum_{z_{t+1}}\mathscr{P}_t^m( \pi_t,\gamma_{t}^{1:2};z_{t+1})V_{t+1}^u(F_t(\pi_{t},\gamma_t^{1:2},{z}_{t+1}))\\
&=  \tilde{c}_t(\pi_t,\gamma_t^1,\gamma_t^2) +\sum_{z_{t+1}}V_{t+1}^u(\mathscr{P}_t^j(\pi_{t},\gamma_t^{1:2};{z}_{t+1},\cdot,\cdot)),\label{simpval}
\end{align}
where the last inequality follows from the homogeneity property of the value functions in Lemma \ref{scalinglemma} and the structure of the belief update in \eqref{beltrans}. The first term in \eqref{simpval} is clearly continuous (see \ref{tildec}). Also, the transformation $\mathscr{P}_t^j(\pi_t,\gamma_t^{1:2};z_{t+1},\cdot,\cdot)$ defined in \eqref{jointprob} is a continuous function. Therefore, by our induction hypothesis, the composition $V_{t+1}^u(\mathscr{P}_t^j(\cdot))$ is continuous in $(\pi_t,\gamma_t^{1:2})$ for every common observation $z_{t+1}$. Thus, $w_t^u$ is continuous in its arguments. To complete our inductive argument, we need to show that $V_t^u$ is a continuous function and to this end, we will use the Berge's maximum theorem (Lemma 17.31) in \cite{guide2006infinite}. Since $w_t^u$ is continuous and $\mathcal{B}_t^2$ is compact, we can use the maximum theorem to conclude that the following function
\begin{align}\label{vtu}
v_t^u(\pi_t,\gamma_t^1) \doteq \sup_{\gamma_t^2}w^u_t(\pi_t,\gamma_t^1,\gamma_t^2),
\end{align}
is continuous in $\pi_t,\gamma_t^1$. Once again, we can use the maximum theorem to conclude that
\begin{align}
V_t^u(\pi_t) = \inf_{\gamma_t^1} v_t^u(\pi_t,\gamma_t^1)
\end{align}
is continuous in $\pi_t$. This concludes our inductive argument. Also, because of the continuity of $v_t^u$ in \eqref{vtu}, we can use the measurable selection condition (see Condition 3.3.2 in \cite{hernandez2012discrete}) to prove the existence of the measurable mapping $\Xi_t^1(\pi_t)$ as defined in Lemma \ref{equiexistlemma}. A similar argument can be made to establish the existence of a maxminimizer and a measurable mapping $\Xi_t^2(\pi_t)$ as defined in Lemma \ref{equiexistlemma}. This concludes the proof of Lemma \ref{equiexistlemma}.

\section{Proof of Theorem \ref{dp}}\label{dpproof}
 Let us first define a distribution $\tilde{\Pi}_t$ over the space $\mathcal{X}_t \times \P_t^1 \times \P_t^2$ in the following manner. The distribution $\tilde{\Pi}_t$, given $C_t,\Gamma_{1:t-1}^{1:2}$, is recursively obtained using the following relation
\begin{align}
\tilde{\Pi}_1(x_1,p_1^1,p_1^2) &= \Py[X_1 = x_1,P_1^1 = p_1^1,P_1^2 = p_1^2 \mid C_1 ] ~ \forall\; x_1,p_1^1,p_1^2,\\
\tilde{\Pi}_{\tau+1} &= F_\tau(\tilde{\Pi}_\tau, \Gamma_\tau^{1},\Gamma_\tau^2,\vct{Z}_{\tau+1}), ~~ \tau \geq 1,
\end{align}
where $F_\tau$ is as defined in Definition \ref{fdef} in Appendix \ref{infstateproof}. We refer to this distribution as the strategy-independent common information belief (SI-CIB).

Let $\tilde{\chi}^1 \in \tilde{\mathcal{H}}^1$ be any strategy for virtual player 1 in game $\gm{G}_e$. Consider the problem of obtaining virtual player 2's best response to the strategy $\tilde{\chi}^1$ with respect to the cost $\mathcal{J}(\tilde{\chi}^1 ,\tilde{\chi}^2)$ defined in \eqref{eq:virtualJ}. This problem can be formulated as a Markov decision process (MDP) with common information and prescription history $C_t,\Gamma_{1:t-1}^{1:2}$ as the state. The control action at time $t$ in this MDP is $\Gamma_t^2$, which is selected based on the information $C_t,\Gamma_{1:t-1}^{1:2}$ using strategy $\tilde{\chi}^2 \in \mathcal{H}^2$.
The evolution of the state $C_t,\Gamma_{1:t-1}^{1:2}$ of this MDP is as follows
\begin{align}
\{C_{t+1},\Gamma_{1:t}^{1:2}\} = \{C_t,Z_{t+1},\Gamma_{1:t-1}^{1:2}, \tilde{\chi}^1_t(C_t,\Gamma_{1:t-1}^{1:2}),\Gamma_t^2\},
\end{align}
where 
\begin{align}\label{zstatevol}
\Py^{(\tilde{\vct{\chi}}^1,\tilde{\vct{\chi}}^2)}[Z_{t+1} = z_{t+1} \mid C_t,\Gamma_{1:t-1}^{1:2}, \Gamma_t^2] = \mathscr{P}_t^m[\tilde{\Pi}_t,\Gamma_t^1,\Gamma_t^2;z_{t+1}],
\end{align}
almost surely. Here, $\Gamma_t^1 = \tilde{\chi}^1_t(C_t,\Gamma_{1:t-1}^{1:2})$ and the transformation $\mathscr{P}_t^m$ is as defined in Definition \ref{fdef} in Appendix \ref{infstateproof}.  Notice that due to Lemma \ref{infstate}, the common information belief $\Pi_t$ associated with any strategy profile ${(\tilde{\vct{\chi}}^1,\tilde{\vct{\chi}}^2)}$ is equal to $\tilde{\Pi_t}$ almost surely. This results in the state evolution equation in \eqref{zstatevol}.
The objective of this MDP is to maximize, for a given $\tilde{\chi}^1$, the following cost
\begin{align}
\E^{(\tilde{\vct{\chi}}^1,\tilde{\vct{\chi}}^2)}\left[\sum_{t=1}^T \tilde{c}_t(\tilde{\Pi}_t,\Gamma_t^1,\Gamma_t^2)\right],
\end{align}
where $\tilde{c}_t$ is as defined in equation (\ref{tildec}). Due to Lemma \ref{infstate}, the total expected cost defined above is equal to the cost ${\mathcal{J}}(\tilde{\vct{\chi}}^1,\tilde{\vct{\chi}}^2)$ defined in \eqref{eq:virtualJ}.

The MDP described above can be solved using the following dynamic program. For every realization of virtual players' information $c_{T+1},\gamma_{1:T}^{1:2}$, define $$V^{\tilde{\chi}^1}_{T+1}(c_{T+1},\gamma_{1:T}^{1:2}) := 0.$$
In a backward inductive manner, for each time $t \leq T$ and each realization $c_{t},\gamma_{1:t-1}^{1:2}$, define
\begin{align}
V^{\tilde{\chi}^1}_{t}(c_{t},\gamma_{1:t-1}^{1:2}) &:= \sup_{\gamma_t^2}[\tilde{c}_t(\tilde{\pi}_t,\gamma_t^1,\gamma_t^2)+ \E[V^{\tilde{\chi}^1}_{t+1}(c_{t},Z_{t+1},\gamma_{1:t}^{1:2}) \mid c_t,\gamma_{1:t}^{1:2}]]\label{thm3e1},
\end{align}
where $\gamma_t^1 = \tilde{\chi}^1_t(c_{t},\gamma_{1:t-1}^{1:2})$ and {$\tilde{\pi}_t$ is the SI-CIB associated with the information $c_{t},\gamma_{1:t-1}^{1:2}$}. Note that the measurable selection condition (see condition 3.3.2 in \cite{hernandez2012discrete}) holds for the dynamic program described above. Thus, the value functions $V^{\tilde{\chi}^1}_{t}(\cdot)$ are measurable and there exists a measurable best-response strategy for player 2 which is a solution to the dynamic program described above. Therefore, we have
\begin{align}
\label{dpcommon}\sup_{\tilde{\chi}^2}\mathcal{J}(\tilde{\chi}^1,\tilde{\chi}^2) = \E V^{\tilde{\chi}^1}_{1}(C_{1}).
\end{align}
\begin{claim}\label{upperclaim}
For any strategy $\tilde{\chi}^1 \in \tilde{\mathcal{H}}^1$ and for any realization of virtual players' information $c_{t},\gamma_{1:t-1}^{1:2}$, we have
\begin{align}
\label{uppervalineq}V^{\tilde{\chi}^1}_{t}(c_{t},\gamma_{1:t-1}^{1:2}) \geq V_t^u(\tilde{\pi}_t),
\end{align}
where $V_t^u$ is as defined in (\ref{minequa}) and $\tilde{\pi}_t$ is the SI-CIB belief associated with the instance $c_{t},\gamma_{1:t-1}^{1:2}$. As a consequence, we have
\begin{align}\label{upineq}
\sup_{\tilde{\chi}^2}\mathcal{J}(\tilde{\chi}^1,\tilde{\chi}^2) \geq \E V^{u}_{1}(\Pi_{1}).
\end{align}
\end{claim}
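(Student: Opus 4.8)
The plan is to establish the pointwise bound (\ref{uppervalineq}) by backward induction on $t$, and then obtain the aggregate bound (\ref{upineq}) by taking an expectation at $t=1$.

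For the base case at $t=T+1$, both $V^{\tilde{\chi}^1}_{T+1}$ and $V^u_{T+1}$ are identically zero, so (\ref{uppervalineq}) holds with equality. For the inductive step, I would assume that (\ref{uppervalineq}) holds at time $t+1$ for every realization $c_{t+1},\gamma_{1:t}^{1:2}$, with $\tilde{\pi}_{t+1}$ the corresponding SI-CIB. The key observation is that the SI-CIB associated with the extended information $(c_t,Z_{t+1},\gamma_{1:t}^{1:2})$ is exactly $F_t(\tilde{\pi}_t,\gamma_t^{1:2},Z_{t+1})$, directly from the recursive definition of the SI-CIB. Thus the induction hypothesis yields, pathwise in $Z_{t+1}$,
\begin{align*}
V^{\tilde{\chi}^1}_{t+1}(c_t,Z_{t+1},\gamma_{1:t}^{1:2}) \geq V^u_{t+1}(F_t(\tilde{\pi}_t,\gamma_t^{1:2},Z_{t+1})).
\end{align*}

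Next I would take conditional expectations over $Z_{t+1}$. By (\ref{zstatevol}) the conditional law of $Z_{t+1}$ given $(c_t,\gamma_{1:t}^{1:2})$ in the best-response MDP is $\mathscr{P}_t^m(\tilde{\pi}_t,\gamma_t^{1:2};\cdot)$, which is precisely the distribution used in the expectation defining $w^u_t$ when its belief argument is $\tilde{\pi}_t$. Substituting the inequality above into the dynamic program (\ref{thm3e1}) and comparing with the definition (\ref{uppercost}) of $w^u_t$ gives
\begin{align*}
V^{\tilde{\chi}^1}_t(c_t,\gamma_{1:t-1}^{1:2}) \geq \sup_{\gamma_t^2} w^u_t(\tilde{\pi}_t,\gamma_t^1,\gamma_t^2),
\end{align*}
where $\gamma_t^1 = \tilde{\chi}^1_t(c_t,\gamma_{1:t-1}^{1:2})$ is the prescription dictated by player 1's strategy. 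Since this $\gamma_t^1$ is merely one feasible choice, the right-hand side is at least $\min_{\gamma_t^1}\max_{\gamma_t^2} w^u_t(\tilde{\pi}_t,\gamma_t^1,\gamma_t^2) = V^u_t(\tilde{\pi}_t)$ by (\ref{minequa}), completing the induction.

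Finally, I would evaluate (\ref{uppervalineq}) at $t=1$. Since the SI-CIB at $t=1$ coincides with the CIB $\Pi_1$ by construction, this gives $V^{\tilde{\chi}^1}_1(C_1) \geq V^u_1(\Pi_1)$; taking expectations over $C_1$ and invoking (\ref{dpcommon}) yields $\sup_{\tilde{\chi}^2}\mathcal{J}(\tilde{\chi}^1,\tilde{\chi}^2) = \E V^{\tilde{\chi}^1}_1(C_1) \geq \E V^u_1(\Pi_1)$, which is (\ref{upineq}). I expect the main point requiring care to be the consistency of the $Z_{t+1}$-transition: the MDP kernel (\ref{zstatevol}), which is a priori conditioned on the full information history, must reduce to $\mathscr{P}_t^m$ evaluated at the SI-CIB (a reduction resting on Lemma \ref{infstate}), and it is precisely this identification that lets the two conditional expectations line up so that the induction closes cleanly.
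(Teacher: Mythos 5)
Your proof is correct and follows essentially the same route as the paper's: backward induction with the same base case, the same expansion of $V^{\tilde{\chi}^1}_t$ via (\ref{thm3e1}), the induction hypothesis applied pathwise in $Z_{t+1}$, and the observation that the fixed prescription $\gamma_t^1 = \tilde{\chi}^1_t(c_t,\gamma_{1:t-1}^{1:2})$ is a feasible choice in the min defining $V_t^u(\tilde{\pi}_t)$, followed by taking expectations at $t=1$ and invoking (\ref{dpcommon}). The paper compresses into single phrases the two consistency facts you rightly flag as needing care — that the SI-CIB of the extended history is $F_t(\tilde{\pi}_t,\gamma_t^{1:2},Z_{t+1})$ and that the kernel (\ref{zstatevol}) makes the conditional expectation coincide with the one in (\ref{uppercost}) — so your write-up is a more explicit rendering of the identical argument.
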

\begin{proof}
The proof is by backward induction. Clearly, the claim is true at time $t = T+1$. Assume that the claim is true for all times greater than $t$. Then we have
\begin{align*}
V^{\tilde{\chi}^1}_{t}(c_{t},\gamma_{1:t-1}^{1:2}) 
&= \sup_{\gamma_t^2}[\tilde{c}_t(\tilde{\pi}_t,\gamma_t^1,\gamma_t^2) + \E[V^{\tilde{\chi}^1}_{t+1}(c_{t},Z_{t+1},\gamma_{1:t}^{1:2}) \mid c_t,\gamma_{1:t}^{1:2}]]\\
&\geq \sup_{\gamma_t^2}[\tilde{c}_t(\tilde{\pi}_t,\gamma_t^1,\gamma_t^2)
 + \E[V^{u}_{t+1}(F_t(\tilde{\pi}_t,\gamma_t^{1:2},Z_{t+1})) \mid c_t,\gamma_{1:t}^{1:2}]]\\
&\geq V_t^u(\tilde{\pi}_t).
\end{align*}
The first equality follows from the definition in (\ref{thm3e1}) and the inequality after that follows from the induction hypothesis. The last inequality is a consequence of the definition of the value function $V_t^u$. This completes the induction argument.
Further, using Claim \ref{upperclaim} and the result in \eqref{dpcommon}, we have
\begin{align*}
\sup_{\tilde{\chi}^2}\mathcal{J}(\tilde{\chi}^1,\tilde{\chi}^2) = \E V^{\tilde{\chi}^1}_{1}(C_{1}) \geq \E V^{u}_{1}(\tilde{\Pi}_{1}) = \E V^{u}_{1}(\Pi_{1}).
\end{align*}
\end{proof}

We can therefore say that
\begin{align}\label{suone}
S^u(\gm{G}_e) &= \inf_{\tilde{\chi}^1}\sup_{\tilde{\chi}^2}\mathcal{J}(\tilde{\chi}^1,\tilde{\chi}^2) \geq \inf_{\tilde{\chi}^1}\E V^{u}_{1}(\Pi_{1}) = \E V^{u}_{1}(\Pi_{1}).
\end{align}
Further, for the strategy $\tilde{\chi}^{1*}$ defined in Definition \ref{stratdef}, the inequalities (\ref{uppervalineq}) and (\ref{upineq}) hold with equality. We can prove this using an inductive argument similar to the one used to prove Claim \ref{upperclaim}. Therefore, we have
\begin{align}\label{sutwo}
S^u(\gm{G}_e) &= \inf_{\tilde{\chi}^1}\sup_{\tilde{\chi}^2}\mathcal{J}(\tilde{\chi}^1,\tilde{\chi}^2) \leq \sup_{\tilde{\chi}^2}\mathcal{J}(\tilde{\chi}^{1*},\tilde{\chi}^2)  = \E V^{\tilde{\chi}^{1*}}_{1}(C_{1})  = \E V^{u}_{1}(\Pi_{1}).
\end{align}
Combining \eqref{suone} and \eqref{sutwo}, we have
\begin{align*}
S^u(\gm{G}_e) = \E V^{u}_{1}(\Pi_{1}).
\end{align*}
Thus, the inequality in \eqref{sutwo} holds with equality which leads us to the result that the strategy $\tilde{\chi}^{1*}$ is a min-max strategy in game $\gm{G}_e$.
A similar argument can be used to show that
\begin{align*}
S^l(\gm{G}_e) = \E V^{l}_{1}(\Pi_{1}),
\end{align*}
and that the strategy $\tilde{\chi}^{2*}$ defined in Definition \ref{stratdef} is a max-min strategy in game $\gm{G}_e$.

\section{Information Structures}\label{additionalinf}

{\subsection{Team 2 does not control the state}
Consider an instance of Game $\gm{G}$ in which the state evolution and the players' observations are given by
\begin{align*}
&\rv{X}_{t+1} = f_t(\rv{X}_t, \rv{U}_t^1,\rv{W}_t^s);&\rv{Y}_t^{i,j} = h_t^{i,j}(\rv{X}_t, \rv{U}_{t-1}^1,\rv{W}_t^{i,j}).
\end{align*}
Further, let the information structure of the players be such that the common and private information  evolve as
\begin{align}
\rv{Z}_{t+1} &= C_{t+1}\setminus C_t = \zeta_{t+1}(\rv{P}_t^{1:2},\rv{U}_t^{1},\rv{Y}_{t+1}^{1:2})\\
\rv{P}^{i}_{t+1} &= \xi_{t+1}^{i}(\rv{P}_t^{1:2},\rv{U}_t^{1},\rv{Y}_{t+1}^{1:2}).
\end{align}
In this model, Team 2's actions do not affect the system evolution and Team 2's past actions are not used by any of the players to select their current actions.
Any information structure that satisfies these conditions satisfies Assumption \ref{onesidecontrolassum}, see Appendix \ref{infproof3} for a proof.}

\subsection{Global and local states}
Consider a system in which the system state $X_t = (X_t^0,X_t^1,X_t^2)$ comprises of a global state $X_t^0$ and a local state $X_t^i = (X_t^{i,1},\dots,X_t^{i,N_i})$ for Team $i = 1,2$. The state evolution is given by
\begin{align}
\rv{X}^0_{t+1} &= f_t^1(\rv{X}_t^0, X_t^1, \rv{U}_t^1, U_t^2,\rv{W}_t^{s,1})\\
\rv{X}^1_{t+1} &= f_t^2(\rv{X}_t^0, X_t^1, \rv{U}_t^1, U_t^2,\rv{W}_t^{s,2})\\
\rv{X}^2_{t+1} &= f_t^3(\rv{X}_t^0, \rv{U}_t^1, U_t^2,\rv{W}_t^{s,3}).
%(\rv{X}^0_{t+1},\rv{X}^1_{t+1},\rv{X}^2_{t+1}) = (f_t^1(\rv{X}_t^0, X_t^1, \rv{U}_t^1, U_t^2,\rv{W}_t^{s,1}),f_t^1(\rv{X}_t^0, X_t^1, \rv{U}_t^1, U_t^2,\rv{W}_t^{s,2}),f_t^1(\rv{X}_t^0, X_t^1, \rv{U}_t^1, U_t^2,\rv{W}_t^{s,3})).
\end{align}
Note that Team 2's current local state does not affect the state evolution. Further, we have
\begin{align}
C_t &=\{X_{1:t}^0,U_{1:t-1}^1,U_{1:t-1}^2\}\\
P_t^{1,j} &= \{X_t^{1,j}\} \quad \forall j = 1,\dots,N_1\\
P_t^{2,j} &= \{X_t^{2,j}\} \quad \forall j = 1,\dots,N_2.
\end{align}
Under this system dynamics and information structure, we can prove that Assumption \ref{onesidecontrolassum} holds. The proof of this is provided in Appendix \ref{infproof4}.

\section{Information Structures that Satisfy Assumption \ref{onesidecontrolassum}: Proofs}\label{infproof}
For each model in \ref{infexample}, we will accordingly construct transformations $Q_t: \mathcal{S}_t \times \mathcal{B}_t^{1} \times \mathcal{Z}_{t+1} \to \R^{|\mathcal{X}_{t+1} \times \mathcal{P}_{t+1}^{1:2}|}$ and $R_t:\mathcal{S}_t \times \mathcal{B}_t^{1} \times \mathcal{Z}_{t+1} \to \R$ at each time $t$ such that
\begin{subequations}
\begin{align}
\label{constra} \mathscr{P}_t^m(\pi_t,\gamma_t^{1:2};z_{t+1}) &\leq R_t(\pi_t,\gamma_t^{1},z_{t+1}) \quad \forall \pi_t,\gamma_t^{1:2},z_{t+1},\\
\label{constrb}\frac{\mathscr{P}_t^j(\pi_t,\gamma_t^{1:2},z_{t+1},\cdot)}{\mathscr{P}_t^m(\pi_t,\gamma_t^{1:2};z_{t+1})} &= \frac{Q_t(\pi_t,\gamma_t^{1},z_{t+1})}{R_t(\pi_t,\gamma_t^{1},z_{t+1})} \quad \forall \mathscr{P}_t^m(\pi_t,\gamma_t^{1:2};z_{t+1}) > 0.
\end{align}
\end{subequations}
Note that the transformations $Q_t$ and $R_t$ do \emph{not} make use of virtual player 2's prescription $\gamma_t^2$. Following the methodology in Definition \ref{fdef}, we define $F_t$ as
\begin{align}
\label{tempfdef}F_t(\pi_t,\gamma_t^{1:2},z_{t+1})&= 
\begin{cases}
\frac{\mathscr{P}_t^j(\pi_t,\gamma_t^{1:2},z_{t+1},\cdot)}{\mathscr{P}_t^m(\pi_t,\gamma_t^{1:2};z_{t+1})} &\text{if } \mathscr{P}_t^m(\pi_t,\gamma_t^{1:2};z_{t+1}) > 0\\
G_t(\pi_t,\gamma_t^{1:2},z_{t+1}) & \text{otherwise},
\end{cases}
\end{align}
where the transformation $G_t$ is chosen to be
\begin{align}
G_t(\pi_t,\gamma_t^{1:2},z_{t+1})&= 
\begin{cases}
\frac{Q_t(\pi_t,\gamma_t^{1},z_{t+1})}{R_t(\pi_t,\gamma_t^{1},z_{t+1})} &\text{if } R_t(\pi_t,\gamma_t^{1},z_{t+1}) > 0\\
\mathscr{U}(\X_{t+1}\times\mathcal{P}_{t+1}^{1:2}) & \text{otherwise},
\end{cases}
\end{align}
where $\mathscr{U}(\X_{t+1}\times \mathcal{P}_{t+1}^{1:2})$ is the uniform distribution over the space $\X_{t+1}\times\mathcal{P}_{t+1}^{1:2}$. Since the transformations $Q_t$ and $R_t$ satisfy \eqref{constra} and \eqref{constrb}, we can simplify the expression for the transformation $F_t$ in (\ref{tempfdef}) to obtain the following
\begin{align}
F_t(\pi_t,\gamma_t^{1:2},z_{t+1})&= 
\begin{cases}
\frac{Q_t(\pi_t,\gamma_t^{1},z_{t+1})}{R_t(\pi_t,\gamma_t^{1},z_{t+1})} &\text{if } R_t(\pi_t,\gamma_t^{1},z_{t+1}) > 0\\
\mathscr{U}(\X_{t+1}\times \mathcal{P}_{t+1}^{1:2}) & \text{otherwise}.
\end{cases}
\end{align}
This concludes the construction of an update rule $F_t$ in the class $\mathscr{B}$ that does not use virtual player 2's prescription $\gamma_t^2$. We will now describe the the construction of the transformations $Q_t$ and $R_t$ for each information structure in Section \ref{infexample}.

\subsection{All players in Team 2 have the same information}\label{infproof1}
In this case, Team 2 does not have any private information and any instance of the common observation $z_{t+1}$ includes Team 2's action at time $t$ (denote it with $\hat{u}_t^2$). The corresponding transformation $\mathscr{P}_t^j$ (see Definition \ref{fdef}) has the following form.
\begin{align}
&\mathscr{P}_t^j(\pi_t,\gamma_t^{1:2};z_{t+1},x_{t+1},p_{t+1}^1) \\
&= \sum_{\vct{x}_t,\vct{p}^{1:2}_t,\vct{u}_t^{1:2}}\pi_t(\vct{x}_t,\vct{p}_t^{1:2})\gamma_t^1(\vct{p}_t^1;u_t^1)\gamma_t^2( \vct{p}_t^2; u_t^2)\Py[\vct{x}_{t+1}, \vct{p}_{t+1}^{1:2},\vct{z}_{t+1} \mid \vct{x}_t,\vct{p}_t^{1:2},\vct{u}_t^{1:2}]\\
&= \gamma_t^2( \varnothing; \hat{u}_t^2)\sum_{\vct{x}_t,\vct{p}^{1}_t,\vct{u}_t^{1}}\pi_t(\vct{x}_t,\vct{p}_t^{1})\gamma_t^1(\vct{p}_t^1;u_t^1)\Py[\vct{x}_{t+1}, \vct{p}_{t+1}^{1},\vct{z}_{t+1} \mid \vct{x}_t,\vct{p}_t^{1},\vct{u}_t^{1},\hat{u}_t^2]\\
\label{qdef1}&=:  \gamma_t^2( \varnothing; \hat{u}_t^2)Q_t(\pi_t,\gamma_t^{1},z_{t+1};x_{t+1},p_{t+1}^1).
\end{align}
Here, we use the fact that Team 2 does not have any private information and $\hat{u}_t^2$ is a part of the common observation $z_{t+1}$. Similarly, the corresponding transformation $\mathscr{P}_t^m$ (see Definition \ref{fdef}) has the following form.
\begin{align}
\mathscr{P}_t^m(\pi_t,\gamma_t^{1:2};z_{t+1}) &=  \gamma_t^2( \varnothing; \hat{u}_t^2)\sum_{x_{t+1},p_{t+1}^1} Q_t(\pi_t,\gamma_t^{1},z_{t+1};x_{t+1},p_{t+1}^1)\\
\label{rdef1}&=:  \gamma_t^2( \varnothing; \hat{u}_t^2) R_t(\pi_t,\gamma_t^{1},z_{t+1}).
\end{align}
Using the results \eqref{qdef1} and \eqref{rdef1}, we can easily conclude that the transformations $Q_t$ and $R_t$ defined above satisfy the conditions \eqref{constra} and \eqref{constrb}.

\subsection{Team 2's observations become common information with a delay of one-step}\label{infproof2}
In this case, any instance of the common observation $z_{t+1}$ includes Team 2's private information at time $t$ (denote it with $\hat{p}_t^2$) and Team 2's action at time $t$ (denote it with $\hat{u}_t^2$). The corresponding transformation $\mathscr{P}_t^j$ (see Definition \ref{fdef}) has the following form.
\begin{align}
&\mathscr{P}_t^j(\pi_t,\gamma_t^{1:2};z_{t+1},x_{t+1},p_{t+1}^{1:2}) \\
&= \sum_{\vct{x}_t,\vct{p}^{1:2}_t,\vct{u}_t^{1:2}}\pi_t(\vct{x}_t,\vct{p}_t^{1:2})\gamma_t^1(\vct{p}_t^1;u_t^1)\gamma_t^2( \vct{p}_t^2; u_t^2)\Py[\vct{x}_{t+1}, \vct{p}_{t+1}^{1:2},\vct{z}_{t+1} \mid \vct{x}_t,\vct{p}_t^{1:2},\vct{u}_t^{1:2}]\\
&= \gamma_t^2( \hat{p}_t^2; \hat{u}_t^2)\sum_{\vct{x}_t,\vct{p}^{1}_t,\vct{u}_t^{1}}\pi_t(\vct{x}_t,\vct{p}_t^{1},\hat{p}_t^2)\gamma_t^1(\vct{p}_t^1;u_t^1)\Py[\vct{x}_{t+1}, \vct{p}_{t+1}^{1:2},\vct{z}_{t+1} \mid \vct{x}_t,\vct{p}_t^{1},\hat{p}_t^2,\vct{u}_t^{1},\hat{u}_t^2]\\
\label{qdef2}&=: \gamma_t^2( \hat{p}_t^2; \hat{u}_t^2)Q_t(\pi_t,\gamma_t^{1},z_{t+1};x_{t+1},p_{t+1}^{1:2}).
\end{align}
Here, we use the fact that both $\hat{p}_t^2$ and $\hat{u}_t^2$ are part of the common observation $z_{t+1}$. Similarly, the corresponding transformation $\mathscr{P}_t^m$ (see Definition \ref{fdef}) has the following form.
\begin{align}
\mathscr{P}_t^m(\pi_t,\gamma_t^{1:2};z_{t+1}) &=   \gamma_t^2( \hat{p}_t^2; \hat{u}_t^2)\sum_{x_{t+1},p_{t+1}^{1:2}} Q_t(\pi_t,\gamma_t^{1},z_{t+1};x_{t+1},p_{t+1}^{1:2})\\
\label{rdef2}&=:  \gamma_t^2( \hat{p}_t^2; \hat{u}_t^2) R_t(\pi_t,\gamma_t^{1},z_{t+1}).
\end{align}
Using the results \eqref{qdef2} and \eqref{rdef2}, we can conclude that the transformations $Q_t$ and $R_t$ defined above satisfy the conditions \eqref{constra} and \eqref{constrb}.

\subsection{Team 2 does not control the state}\label{infproof3}
In this case, the corresponding transformation $\mathscr{P}_t^j$ (see Definition \ref{fdef}) has the following form.
\begin{align}
&\mathscr{P}_t^j(\pi_t,\gamma_t^{1:2};z_{t+1},x_{t+1},p_{t+1}^{1:2})  \\
&= \sum_{\vct{x}_t,\vct{p}^{1:2}_t,\vct{u}_t^{1:2}}\pi_t(\vct{x}_t,\vct{p}_t^{1:2})\gamma_t^1(\vct{p}_t^1;u_t^1)\gamma_t^2( \vct{p}_t^2; u_t^2)\Py[\vct{x}_{t+1}, \vct{p}_{t+1}^{1:2},\vct{z}_{t+1} \mid \vct{x}_t,\vct{p}_t^{1:2},\vct{u}_t^{1:2}]\\
\label{u2indep}&= \sum_{\vct{x}_t,\vct{p}^{1:2}_t,\vct{u}_t^{1:2}}\pi_t(\vct{x}_t,\vct{p}_t^{1:2})\gamma_t^1(\vct{p}_t^1;u_t^1)\gamma_t^2( \vct{p}_t^2; u_t^2)\Py[\vct{x}_{t+1}, \vct{p}_{t+1}^{1:2},\vct{z}_{t+1} \mid \vct{x}_t,\vct{p}_t^{1:2},\vct{u}_t^{1}]\\
&=\label{sum1}\sum_{\vct{x}_t,\vct{p}^{1:2}_t,\vct{u}_t^{1}}\pi_t(\vct{x}_t,\vct{p}_t^{1:2})\gamma_t^1(\vct{p}_t^1;u_t^1)\Py[\vct{x}_{t+1}, \vct{p}_{t+1}^{1:2},\vct{z}_{t+1} \mid \vct{x}_t,\vct{p}_t^{1:2},\vct{u}_t^{1}]\\
\label{qdef3}&=:Q_t(\pi_t,\gamma_t^{1},z_{t+1};x_{t+1},p_{t+1}^{1:2}).
\end{align}
Note that \eqref{u2indep} holds because $u_t^2$ does not influence the evolution of state, common and private information and \eqref{sum1} follows by summing over $u_t^2$. Similarly, the corresponding transformation $\mathscr{P}_t^m$ (see Definition \ref{fdef}) has the following form.
\begin{align}
\mathscr{P}_t^m(\pi_t,\gamma_t^{1:2};z_{t+1}) &=  \sum_{x_{t+1},p_{t+1}^{1:2}} Q_t(\pi_t,\gamma_t^{1},z_{t+1};x_{t+1},p_{t+1}^{1:2})\\
\label{rdef3}&=:  R_t(\pi_t,\gamma_t^{1},z_{t+1}).
\end{align}
Using the results \eqref{qdef3} and \eqref{rdef3}, we can conclude that the transformations $Q_t$ and $R_t$ satisfy the conditions \eqref{constra} and \eqref{constrb}.
\subsection{Global and local states}\label{infproof4}
In this case, the private information variables of each player are part of the system state $X_t$ because $P_t^1 = X_t^1$ and $P_t^2 = X_t^2$. Therefore, the common information belief $\Pi_t$ is formed only on the system state. 
%We will inductively prove that the common information belief on the state
%\begin{align}
%\Pi_t(x_t^0, x_t^1,x_t^2) = \Py[X_t^0 = x_t^0, X_t^1 = x_t^1 \mid C_t, \Gamma_{1:t-1}^{1:2}] \Py[X_t^2 = x_t^2 \mid X_t^0 =x_t^0, C_t, \Gamma_{1:t-1}^{1:2}],
%\end{align}
%almost surely.
Let us first define a collection of beliefs with a particular structure
\begin{align} 
\mathcal{S}'_t \doteq \left\{\pi \in \mathcal{S}_t: \pi(x_0,x_1,x_2) = \mathbbm{1}_{\hat{x}}(x_0)\pi^{1|0}(x_1\mid x_0)\pi^2(x_2) ~~\forall x_0,x_1,x_2\right\},
\end{align}
where $\pi^{1|0}$ and $\pi^2$ denote the respective conditional and marginal distributions formed using the joint distribution $\pi$, and $\hat{x}$ is some realization of the global state $X_t^0$. Under the dynamics and information structure in this case, we can show that the belief $\pi_t$ computed recursively using the transformation $F_t$ as in \eqref{beltrans} always lies in $\mathcal{S}'_t$ (for an appropriate initial distribution on the state $X_1$). Therefore, we restrict our attention to beliefs in the restricted set $\mathcal{S}_t'$. Let $\pi_t \in \mathcal{S}'_t$ such that
\begin{align}
    \pi_t(x_0,x_1,x_2) = \mathbbm{1}_{\hat{x}}(x_0)\pi_t^{1|0}(x_1\mid x_0)\pi_t^2(x_2)~~\forall x_0,x_1,x_2.
\end{align}

In this case, any instance of the common observation $z_{t+1}$ comprises of the global state $\hat{x}_{t+1}^0$ and players' actions (denoted by $\hat{u}_t^1,\hat{u}_t^2$ for Teams 1 and 2 respectively). The corresponding transformation $\mathscr{P}_t^j$ (see Definition \ref{fdef}) has the following form.
\begin{align}
&\mathscr{P}_t^j(\pi_t,\gamma_t^{1:2};z_{t+1},x_{t+1})  \\
&= \sum_{\vct{x}_t,\vct{u}_t^{1:2}}\pi_t(\vct{x}_t)\gamma_t^1(\vct{x}_t^1;u_t^1)\gamma_t^2( \vct{x}_t^2; u_t^2)\Py[\vct{x}_{t+1},\vct{z}_{t+1} \mid \vct{x}_t,\vct{u}_t^{1:2}]\\
%&=\sum_{\vct{x}_t,\vct{u}_t^{1:2}}\pi^{0,1}_t(\vct{x}^0_t,x_t^1)\pi_t^2(x_t^2 \mid x_t^0)\gamma_t^1(\vct{x}_t^1;u_t^1)\gamma_t^2( \vct{x}_t^2; u_t^2)\Py[\vct{x}^0_{t+1},x_{t+1}^1,u_t^{1:2} \mid \vct{x}^0_t,x_t^1,\vct{u}_t^{1:2}]\Py[\vct{x}^2_{t+1} \mid \vct{x}^0_t,\vct{u}_t^{1:2}]\\
%&= \left(\sum_{\vct{x}^1_t}\pi^{0,1}_t(\vct{x}^0_t,x_t^1)\gamma_t^1(\vct{x}_t^1;u_t^1)\Py[\vct{x}^0_{t+1},x_{t+1}^1,u_t^{1:2} \mid \vct{x}^0_t,x_t^1,\vct{u}_t^{1:2}]\right) \left(\sum_{\vct{x}^2_t}\pi_t^2(x_t^2 \mid x_t^0)\gamma_t^2( \vct{x}_t^2; u_t^2)\Py[\vct{x}^2_{t+1} \mid \vct{x}^0_t,\vct{u}_t^{1:2}]\right)\\
\label{probeq}&= \left(\sum_{x_t^0,{x}^1_t}\mathbbm{1}_{\hat{x}}(x_t^0)\pi^{1|0}_t(x_t^1\mid x_t^0)\gamma_t^1(\vct{x}_t^1;\hat{u}_t^1)\Py[\vct{x}_{t+1},\vct{z}_{t+1} \mid \vct{x}^0_t,x_t^1,\hat{u}_t^{1:2}]\right) \left(\sum_{\vct{x}^2_t}\pi_t^2(x_t^2 )\gamma_t^2( \vct{x}_t^2; \hat{u}_t^2)\right)\\
\label{probeq}&= \left(\sum_{{x}^1_t}\pi^{1|0}_t(x_t^1\mid \hat{x})\gamma_t^1(\vct{x}_t^1;\hat{u}_t^1)\Py[\vct{x}_{t+1},\vct{z}_{t+1} \mid \hat{x},x_t^1,\hat{u}_t^{1:2}]\right) \left(\sum_{\vct{x}^2_t}\pi_t^2(x_t^2 )\gamma_t^2( \vct{x}_t^2; \hat{u}_t^2)\right)\\
\label{qdef4}&=: Q_t(\pi_t,\gamma_t^{1},z_{t+1};x_{t+1})\left(\sum_{\vct{x}^2_t}\pi_t^2(x_t^2 )\gamma_t^2( \vct{x}_t^2; \hat{u}_t^2)\right).
\end{align}
Similarly, the corresponding transformation $\mathscr{P}_t^m$ (see Definition \ref{fdef}) has the following form.
\begin{align}
\mathscr{P}_t^m(\pi_t,\gamma_t^{1:2};z_{t+1}) &=  \left(\sum_{\vct{x}^2_t}\pi_t^2(x_t^2 )\gamma_t^2( \vct{x}_t^2; \hat{u}_t^2)\right)\sum_{x_{t+1}} Q_t(\pi_t,\gamma_t^{1},z_{t+1};x_{t+1})\\
\label{rdef4}&=:  \left(\sum_{\vct{x}^2_t}\pi_t^2(x_t^2 )\gamma_t^2( \vct{x}_t^2; \hat{u}_t^2)\right)R_t(\pi_t,\gamma_t^{1},z_{t+1}).
\end{align}
Using the results \eqref{qdef4} and \eqref{rdef4}, we can conclude that the transformations $Q_t$ and $R_t$ satisfy the conditions \eqref{constra} and \eqref{constrb}.

% Further, using the fact that
% \begin{align}
% \Py[\vct{x}_{t+1},\vct{z}_{t+1} \mid \vct{x}^0_t,x_t^1,\vct{u}_t^{1:2}] = \mathbbm{1}_{\hat{x}_{t+1}^0}(x_{t+1}^0)\Py[x_{t+1}^1,u_t^{1:2} \mid \vct{x}^0_{t+1},\vct{x}^0_t,x_t^1,\vct{u}_t^{1:2}]\Py[\vct{x}_{t+1}^2 \mid \vct{x}^0_t,\vct{u}_t^{1:2}]
% \end{align}
% in \eqref{probeq}, we can easily verify that $\pi_{t+1} \in \mathcal{S}_{t+1}'$.

\section{Proof of Theorem \ref{strategy}}
\label{strategyproof}
Let $\tilde{\chi}^{1*} \in \tilde{\H}^1$ be the min-max strategy for virtual player 1 in the expanded virtual game $\gm{G}_e$ as described in Section \ref{onesidedp}. Note that the strategy $\tilde{\chi}^{1*}$ uses only the common information $c_t$ and virtual player 1's past prescriptions $\gamma_{1:t-1}^1$. This is because the CIB update $F_t$ in Assumption \ref{onesidecontrolassum} does not depend on virtual player 2's prescription $\gamma_t^2$. Let us define a strategy $\chi^{1*} \in \H^1$ for virtual player 1 in the virtual game $\gm{G}_v$ defined in Appendix \ref{virtgameproof}. At each time $t$ and for each instance $c_t \in \C_t$,
\begin{align}
\chi_t^{1*}(\vct{c}_t) \doteq \Xi_t^1(\pi_t).
\end{align}
Here, $\Xi_t^1$ is the mapping obtained by solving the min-max dynamic program (see Lemma \ref{equiexistlemma}) and $\pi_t$ is computed using the following relation
\begin{align}
\pi_1(x_1,p_1^1,p_1^2) &= \Py[X_1 = x_1, P_1^1 = p_1^1,P_1^2=p_1^2 \mid C_1 = c_1] ~ \forall \; x_1,p_1^1,p_1^2\\
\pi_{\tau + 1} &= F_\tau(\pi_\tau, \Xi_\tau^1(\pi_\tau),z_{\tau+1}), ~  1 \leq \tau < t,
\end{align}
where $F_t$ is the belief update transformation in Assumption \ref{onesidecontrolassum}. Note that the prescription $\chi^{1*}_t(c_t)$ is the same as the one obtained in the ``Get prescription" step in Algorithm \ref{alg:example} for common information $c_t$ in the $t$-th iteration.

Using Definition \ref{def:rho}, we have
\begin{equation}\label{structeq}
\vct{\chi}^{1*}  = \varrho^1(\tilde{\vct{\chi}}^{1*},\tilde{\vct{\chi}}^{2})
\end{equation}
for any strategy $\tilde{\vct{\chi}}^{2} \in \tilde{H}^2$.
Based on this observation and the fact that for a given $\chi^2 \in \mathcal{H}^2$, $\varrho^2(\tilde{\chi}^1,\chi^2) = \chi^2$ for every $\tilde{\chi}^1 \in \tilde{\mathcal{H}}^1$, we have
\begin{equation}
(\vct{\chi}^{1*},\vct{\chi}^{2}) = \varrho(\tilde{\vct{\chi}}^{1*},{\vct{\chi}}^{2}).\label{anychi}
\end{equation}
Further, due to Theorem \ref{origvirt}, we have
\begin{align}
    S^u(\gm{G}_v) &\geq S^u(\gm{G}_e) \\
    &\stackrel{a}{=} \sup_{\tilde{\chi}^{2}\in \tilde{\mathcal{H}}^2}{\mathcal{J}}(\tilde{\vct{\chi}}^{1*},{\tilde{\chi}}^{2}) \\
    &\stackrel{b}{\geq} \sup_{{\chi}^{2}\in {\mathcal{H}}^2}{\mathcal{J}}(\tilde{\vct{\chi}}^{1*},{{\chi}}^{2}) \\
    &\stackrel{c}{=} \sup_{{\chi}^{2}\in {\mathcal{H}}^2}{\mathcal{J}}({\vct{\chi}}^{1*},{{\chi}}^{2}) \\
    &\geq S^u(\gm{G}_v).
\end{align}
where the equality in $(a)$ is because $\tilde{\vct{\chi}}^{1*}$ is a min-max strategy of $\gm{G}_e$. Inequality  $(b)$ holds because $\mathcal{H}^2 \subseteq \tilde{\mathcal{H}}^2$. Equality in $(c)$ is a consequence of the result in \eqref{anychi} and Lemma \ref{evolequi} in Appendix \ref{virtgameproof}. The last inequality simply follows from the definition of the upper value of the virtual game. Therefore, all the inequalities in the display above must hold with equality. Hence, ${\vct{\chi}}^{1*}$ must be a min-max strategy of game $\gm{G}_v$ and $S^u(\gm{G}_v) = S^u(\gm{G}_e)$.

From Lemma \ref{virtlemma} in Appendix \ref{virtgameproof}, we know that $S^u(\gm{G}_v) = S^u(\gm{G})$ and thus, $S^u(\gm{G}) = S^u(\gm{G}_e)$. Further, we can verify from the description of the strategy $g^{1*}$ in Algorithm \ref{alg:example} that $\chi^{1*} = \mathcal{M}^1(g^{1*})$, where $\mathcal{M}^1$ is the transformation defined in the proof of Lemma \ref{virtlemma} in Appendix \ref{virtgameproof}. Based on the argument in Remarks \ref{virtorigremark} in Appendix \ref{virtgameproof}, we can conlude that the strategy $g^{1*}$ is a min-max strategy in Game $\gm{G}$.

\section{Solving the DP: Methods and Challenges}\label{dpsolve}
In the previous section, we provided a dynamic programming characterization of the upper value $S^u(\gm{G})$ and a min-max strategy $g^{1*}$ in the original game $\gm{G}$. We now describe an approximate dynamic programming methodology \cite{bertsekas1996neuro} that can be used to compute them. The methodology we propose offers broad guidelines for a computational approach. We do not make any claim about the degree of approximation achieved by the proposed methodology. In Section \ref{dpstruct}, we discuss some structural properties of the cost-to-go and value functions in the dynamic program that may be useful for computational simplification. We illustrate our approach and the challenges involved with the help of an example in Section \ref{specialcases}.

At each time $t$, let $\hat{V}_{t+1}(\pi_{t+1},\theta_{t+1})$ be a an approximate representation of the upper value function $V_{t+1}^u(\pi_{t+1})$ in a suitable parametric form, where $\theta_{t+1}$ is a vector representing the parameters. We proceed as follows.

\paragraph{Sampling the belief space} At time $t$, we sample a set $\mathscr{S}_t$ of belief points from the set $\mathcal{S}_t$ (the set of all CIBs). A simple sampling approach would be to uniformly sample the space $\mathcal{S}_t$. Using the arguments in Appendix 5 of \cite{kartik2020upper}, we can say that the value function $V_t^u$ is uniformly continuous in the CIB $\pi_t$. Thus, if the set $\mathscr{S}_t$ is sufficiently large, computing the value function only at the belief points in $\mathscr{S}_t$ may be enough for obtaining an $\epsilon$-approximation of the value function. It is likely that the required size of $\mathscr{S}_t$ to ensure an $\epsilon$-approximation will be prohibitively large.
For solving POMDPs, more intelligent strategies for sampling the belief points, known as forward exploration heuristics, exist in literature \cite{smith2004heuristic,horak2017heuristic,xie2020optimally}. While these methods rely on certain structural properties of the value functions (such as convexity), it may be possible to adapt them for our dynamic program and make the CIB sampling process more efficient. A precise understanding of such exploration heuristics and the relationship between the approximation error $\epsilon$ and the associated computational burden is needed. This is a problem for future work.
\paragraph{Compute value function at each belief point}
Once we have a collection of points $\mathscr{S}_t$, we can then approximately compute the value $V_t^u(\pi_t)$ for each $\pi_t \in \mathscr{S}_t.$ For each belief vector $\pi_t \in \mathscr{S}_t$, we will compute $\bar{V}_t(\pi_t)$ which is given by
\begin{align} 
\hat{w}_t(\pi_t,\gamma_t^1,\gamma_t^2) &\doteq \tilde{c}_t(\pi_t,\gamma_t^1,\gamma_t^2) +\E[\hat{V}_{t+1}(F_t(\pi_{t},\gamma_t^{1},\rv{Z}_{t+1}),\theta_{t+1})\mid \pi_t,\gamma_t^{1:2}]\\
%\hat{w}_t(\pi_t,\gamma_t^1,\gamma_t^2) &\doteq \tilde{c}_t(\pi_t,\gamma_t^1,\gamma_t^2) + \sum_{\vct{z}_{t+1}}\gamma_t^2(\vct{u}_t^2)\hat{V}_{t+1}\left({\vct{Q}_{t}(\pi_t,\gamma_t^1,\vct{z}_{t+1})},\theta_{t+1}\right)\\
\bar{V}_t(\pi_t) &\doteq \min_{{\gamma}_t^1}\max_{\gamma_t^2}\hat{w}_t(\pi_t,\gamma_t^1,\gamma_t^2).\label{barv}
\end{align}
% Here, the the prescriptions $\gamma_t^1$ and $\gamma_t^2$ are parameterized using the factored form in Definition \ref{behavepres}, i.e., for each player $j$ in Team $i$ and for each instance of private information $p_t^{i,j}$, we have a set of parameters representing a distribution over the action space $\U_t^{i,j}$. This form was also used in \cite{foerster2019bayesian}.
For a given belief point $\pi_t$, one approach for solving the min-max problem in \eqref{barv} is to use the Gradient Descent Ascent (GDA) method \cite{lin2020gradient,heusel2017gans}. This can however lead to local optima because in general, the cost $\hat{w}_t$ is neither convex nor concave in the respective prescriptions. In some cases such as when Team 2 has only one player, the inner maximizing problem in \eqref{barv} can be substantially simplified and we will discuss this in Section \ref{dpstruct}.

\paragraph{Interpolation} For solving a particular instance of the min-max problem in \eqref{barv}, we need an estimate of the value function $V_{t+1}^u$. Generally, knowing just the value of the function at different points may be insufficient and we may need additional information like the first derivatives/gradients especially when we are using gradient based methods like GDA. In that case, it will be helpful to choose an appropriate parametric form (such as neural networks) for $\hat{V}_t$ that has desirable properties like continuity or differentiability.
The parameters $\theta_t$ can be obtained by solving the following regression problem.
\begin{align}\label{regression}
\min_{\theta_t}\sum_{\pi_t \in \mathscr{S}_t} (\hat{V}_t(\pi_t,\theta_t) - \bar{V}_t(\pi_t))^2.
\end{align}
Standard methods such as gradient descent can be used to solve this regression problem.

\subsection{Structural Properties of the DP}\label{dpstruct}
We will now prove that under Assumption \ref{onesidecontrolassum}, the cost-to-go function $w_t^u$ (see \eqref{uppercost}) in the min-max dynamic program defined in Section \ref{onesidedp} is linear in virtual player 2's prescription in its product form $\gamma_t^2(p_t^2;u_t^2)$ (see Definition \ref{behavepres} in Appendix \ref{infstateproof}).
\begin{lemma}\label{linearlemma}
{The cost-to-go $w_t^u(\pi_t,\gamma_t^1,\gamma_t^2)$ is linear in $\gamma_t^2(p_t^2;u_t^2)$ (see Definition \ref{behavepres} in Appendix \ref{infstateproof}), i.e., }there exists a function $a_t:\P_t^2 \times \U_t^2 \times \mathcal{S}_t \times \mathcal{B}_t^1 \to \R$ such that
\begin{align*}
    w^u_t(\pi_t,\gamma_t^1,\gamma_t^2)= \sum_{p_t^2,u_t^2}\pi_t(p_t^2)\gamma_t^2(p_t^2;u_t^2)a_t(p_t^2,u_t^2,\pi_t,\gamma_t^1).
\end{align*}
Here, $\pi_t(p_t^2)$ denotes the marginal probability of $p_t^2$ with respect to the CIB $\pi_t$.
\end{lemma}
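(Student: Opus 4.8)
The plan is to show that $w_t^u$ is a linear function of the collection of numbers $\{\gamma_t^2(p_t^2;u_t^2)\}_{p_t^2,u_t^2}$ (treating the product-form weights of Definition \ref{behavepres} as the variables), and then to read off the coefficients and factor out the marginal $\pi_t(p_t^2)$. I start from the definition \eqref{uppercost} and write the conditional expectation explicitly through the marginal $\mathscr{P}_t^m$ of Definition \ref{fdef}:
\[
w_t^u(\pi_t,\gamma_t^1,\gamma_t^2) = \tilde{c}_t(\pi_t,\gamma_t^1,\gamma_t^2) + \sum_{z_{t+1}} \mathscr{P}_t^m(\pi_t,\gamma_t^{1:2};z_{t+1})\, V_{t+1}^u\big(F_t(\pi_t,\gamma_t^{1:2},z_{t+1})\big).
\]
The crucial consequence of Assumption \ref{onesidecontrolassum} is that the argument of $V_{t+1}^u$ is $F_t(\pi_t,\gamma_t^{1},z_{t+1})$, which does not depend on $\gamma_t^2$. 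Hence the only way $\gamma_t^2$ enters the second term is through the scalar weights $\mathscr{P}_t^m(\pi_t,\gamma_t^{1:2};z_{t+1})$.

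Next I argue that both terms are linear in $\gamma_t^2(p_t^2;u_t^2)$. From the definition \eqref{tildec}, $\tilde{c}_t$ is manifestly a linear combination of the factors $\gamma_t^2(p_t^2;u_t^2)$, with coefficient (for a fixed pair $(p_t^2,u_t^2)$) equal to $\sum_{x_t,p_t^1,u_t^1} c_t(x_t,u_t^1,u_t^2)\,\pi_t(x_t,p_t^1,p_t^2)\,\gamma_t^1(p_t^1;u_t^1)$. Likewise, from Definition \ref{fdef}, $\mathscr{P}_t^m(\pi_t,\gamma_t^{1:2};z_{t+1})$ is a linear combination of the $\gamma_t^2(p_t^2;u_t^2)$ with coefficient $\sum_{x_t,p_t^1,u_t^1} \pi_t(x_t,p_t^1,p_t^2)\,\gamma_t^1(p_t^1;u_t^1)\,\Py[z_{t+1}\mid x_t,p_t^1,p_t^2,u_t^1,u_t^2]$, where $\Py[z_{t+1}\mid\cdots] = \sum_{x_{t+1},p_{t+1}^{1:2}}\Py[x_{t+1},p_{t+1}^{1:2},z_{t+1}\mid\cdots]$. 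Since $V_{t+1}^u(F_t(\pi_t,\gamma_t^1,z_{t+1}))$ is a constant with respect to $\gamma_t^2$, multiplying by it and summing over $z_{t+1}$ preserves linearity.

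Finally, I collect the coefficient of $\gamma_t^2(p_t^2;u_t^2)$ across both terms and observe that every summand carries a factor $\pi_t(x_t,p_t^1,p_t^2)$. Writing $\pi_t(x_t,p_t^1,p_t^2) = \pi_t(p_t^2)\,\pi_t(x_t,p_t^1\mid p_t^2)$ whenever $\pi_t(p_t^2)>0$, I pull the marginal $\pi_t(p_t^2)$ out of the coefficient and set
\begin{align*}
a_t(p_t^2,u_t^2,\pi_t,\gamma_t^1) \doteq \sum_{x_t,p_t^1,u_t^1} \pi_t(x_t,p_t^1\mid p_t^2)\,\gamma_t^1(p_t^1;u_t^1)\Big[ & c_t(x_t,u_t^1,u_t^2) \\
& + \sum_{z_{t+1}} \Py[z_{t+1}\mid x_t,p_t^1,p_t^2,u_t^1,u_t^2]\, V_{t+1}^u\big(F_t(\pi_t,\gamma_t^1,z_{t+1})\big) \Big].
\end{align*}
When $\pi_t(p_t^2)=0$, the entire coefficient already vanishes and the product $\pi_t(p_t^2)\,\gamma_t^2(p_t^2;u_t^2)\,a_t(\cdots)$ is zero regardless of $a_t$, so I define $a_t$ arbitrarily there. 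Substituting back yields exactly the claimed representation. There is no genuine obstacle in this argument; the only points requiring care are the bookkeeping of indices when collecting coefficients and the explicit handling of the $\pi_t(p_t^2)=0$ case so that $a_t$ remains well defined.
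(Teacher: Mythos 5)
Your proof is correct and follows essentially the same route as the paper's: expand the expectation as a sum over $z_{t+1}$ weighted by $\mathscr{P}_t^m$, use Assumption \ref{onesidecontrolassum} so that $F_t$ (and hence $V_{t+1}^u$'s argument) is independent of $\gamma_t^2$, and invoke linearity of $\tilde{c}_t$ and $\mathscr{P}_t^m$ in $\gamma_t^2(p_t^2;u_t^2)$. The only difference is that you make explicit what the paper leaves as ``rearrange terms and appropriately define $a_t$'' — namely the closed-form coefficient, the factorization $\pi_t(x_t,p_t^1,p_t^2)=\pi_t(p_t^2)\,\pi_t(x_t,p_t^1\mid p_t^2)$, and the degenerate case $\pi_t(p_t^2)=0$ — which is a welcome level of detail.
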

\begin{proof}
We have
\begin{align}
w^u_t(\pi_t,\gamma_t^1,\gamma_t^2) &= \tilde{c}_t(\pi_t,\gamma_t^1,\gamma_t^2) + \E[V^u_{t+1}(F_t(\pi_{t},\gamma_t^{1:2},\rv{Z}_{t+1}))\mid \pi_t,\gamma_t^{1:2}]\\
&\label{wu}=  \tilde{c}_t(\pi_t,\gamma_t^1,\gamma_t^2) +\sum_{z_{t+1}}\mathscr{P}_t^m( \pi_t,\gamma_{t}^{1:2};z_{t+1})V_{t+1}^u(F_t(\pi_{t},\gamma_t^{1},{z}_{t+1}))\\
&\label{oneshoteam2}\doteq \sum_{p_t^2,u_t^2}\pi_t(p_t^2)\gamma_t^2(p_t^2;u_t^2)a_t(p_t^2,u_t^2,\pi_t,\gamma_t^1),
\end{align}
where the last equality uses the fact that the belief update $F_t$ does not depend on $\gamma_t^2$ and both $\tilde{c}_t$ (see \eqref{tildec}) and $\mathscr{P}_t^m$ (see \eqref{marginalprob}) are linear in $\gamma_t^2(p_t^2;u_t^2)$. We then rearrange terms in \eqref{wu} and appropriately define $a_t$ to obtain \eqref{oneshoteam2}. Here, $\pi_t(p_t^2)$ represents the marginal distribution of the second player's private information with respect to the common information based belief $\pi_t$.
\end{proof}
\begin{remark}
When there is only one player in Team 2, then the prescription $\gamma_t^2$ coincides with its product form in Definition \ref{behavepres}. Then the cost-to-go $w_t^u(\pi_t,\gamma_t^1,\gamma_t^2)$ is linear in $\gamma_t^2$ as well.
\end{remark}

At any given time $t$, a \emph{pure} prescription for virtual player 2 in Game $\gm{G}_e$ is a prescription $(\gamma^{2,1},\dots,\gamma^{2,N_2}) \in \mathcal{B}_t^2$ such that for every $j$ and every instance $p_t^{2,j} \in \P_t^{2,j}$, the distribution $\gamma^{2,j}(p_t^{2,j})$ is degenerate. Let $\mathcal{Q}_t^2$ be the set of all such {pure prescriptions} at time $t$ for virtual player 2. 
\begin{lemma}\label{finitelemma}
Under Assumption \ref{onesidecontrolassum}, for any given $\pi_t \in \mathcal{S}_t$ and $\gamma_t^1 \in \mathcal{B}_t^1$ the cost-to-go function in \eqref{uppercost} satisfies
\begin{align*}
    &\max_{\gamma_t^2 \in \mathcal{B}_t^2}w^u_t(\pi_t,\gamma_t^1,\gamma_t^2)= \max_{\gamma_t^2 \in \mathcal{Q}_t^2}w^u_t(\pi_t,\gamma_t^1,\gamma_t^2).
\end{align*}
\end{lemma}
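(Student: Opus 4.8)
The plan is to combine the linearity established in Lemma \ref{linearlemma} with the observation that a function which is separately affine in each player's conditional action distribution attains its maximum at a pure prescription. Since $\mathcal{Q}_t^2 \subseteq \mathcal{B}_t^2$, we immediately have $\max_{\gamma_t^2 \in \mathcal{Q}_t^2} w_t^u \leq \max_{\gamma_t^2 \in \mathcal{B}_t^2} w_t^u$, so the content of the lemma is the reverse inequality: for every $\gamma_t^2 \in \mathcal{B}_t^2$ we must produce a pure prescription $\hat{\gamma}_t^2 \in \mathcal{Q}_t^2$ with $w_t^u(\pi_t,\gamma_t^1,\hat{\gamma}_t^2) \geq w_t^u(\pi_t,\gamma_t^1,\gamma_t^2)$.

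First I would record the key structural fact. Writing out the product form $\gamma_t^2(p_t^2;u_t^2) = \prod_{j=1}^{N_2}\gamma^{2,j}(p_t^{2,j};u_t^{2,j})$ and substituting it into the representation of Lemma \ref{linearlemma}, the cost-to-go $w_t^u$ becomes a polynomial in the entries $\{\gamma^{2,j}(p_t^{2,j};u_t^{2,j})\}$ in which each factor $\gamma^{2,j}(\cdot)$ appears to the first power. Consequently, if we fix every conditional distribution except $\gamma^{2,j}(p_t^{2,j})$ for a single pair $(j,p_t^{2,j})$, the map $\gamma^{2,j}(p_t^{2,j}) \mapsto w_t^u(\pi_t,\gamma_t^1,\gamma_t^2)$ is affine over the simplex $\Delta\mathcal{U}_t^{2,j}$. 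The crucial point enabling this is Assumption \ref{onesidecontrolassum}: because the belief update $F_t$ does not depend on $\gamma_t^2$, the coefficients $a_t(p_t^2,u_t^2,\pi_t,\gamma_t^1)$ (equivalently, the value-to-go factors $V_{t+1}^u(F_t(\pi_t,\gamma_t^1,z_{t+1}))$ and the cost coefficients from $\tilde{c}_t$ in \eqref{wu}) are constants with respect to $\gamma_t^2$; without this, $\gamma_t^2$ would enter nonlinearly through $V_{t+1}^u$ and the affine structure would be lost.

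Given separate affineness, I would round $\gamma_t^2$ to a pure prescription one coordinate at a time. Enumerate the finitely many pairs $(j,p_t^{2,j})$. Starting from $\gamma_t^2$, at each pair the restricted map is affine over the simplex $\Delta\mathcal{U}_t^{2,j}$, hence attains its maximum at a vertex, i.e. at a degenerate distribution placing all mass on some $u_t^{2,j}$; replacing the current $\gamma^{2,j}(p_t^{2,j})$ by such a vertex does not decrease $w_t^u$. Since modifying one coordinate leaves the affineness in every other coordinate intact, I can process all pairs in turn; after finitely many replacements every conditional distribution is degenerate, so the resulting prescription $\hat{\gamma}_t^2$ lies in $\mathcal{Q}_t^2$ and satisfies $w_t^u(\pi_t,\gamma_t^1,\hat{\gamma}_t^2) \geq w_t^u(\pi_t,\gamma_t^1,\gamma_t^2)$. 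Taking the supremum over $\gamma_t^2 \in \mathcal{B}_t^2$ yields $\max_{\gamma_t^2 \in \mathcal{B}_t^2} w_t^u \leq \max_{\gamma_t^2 \in \mathcal{Q}_t^2} w_t^u$, and since $\mathcal{Q}_t^2$ is finite the latter maximum is attained; combined with the trivial inequality this proves the claim.

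The main obstacle is the justification of the separate-affineness step, which is precisely where Assumption \ref{onesidecontrolassum} is indispensable. I expect the bookkeeping of verifying that each $\gamma^{2,j}(p_t^{2,j})$ appears linearly---keeping track of which terms in the sum over $p_t^2$ actually involve a given conditional distribution---to be the only delicate part; the rounding argument itself is the standard fact that a multilinear function over a product of simplices is maximized at a product of vertices.
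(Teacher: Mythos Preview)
Your argument is correct and rests on the same structural fact as the paper's proof, namely the linearity in $\gamma_t^2(p_t^2;u_t^2)$ furnished by Lemma~\ref{linearlemma} under Assumption~\ref{onesidecontrolassum}. The difference is one of framing rather than substance: the paper interprets the inner maximization as a single-stage decentralized team problem with $N_2$ players, state $P_t^2$ distributed as $\pi_t$, observation $P_t^{2,j}$ for player $j$, and reward $a_t$, and then invokes the standard result that deterministic strategies suffice in single-stage team problems. Your coordinate-wise rounding argument---exploiting separate affineness in each $\gamma^{2,j}(p_t^{2,j})$ over $\Delta\mathcal{U}_t^{2,j}$---is precisely the elementary proof of that invoked result, spelled out in full. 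The paper's version is more conceptual and makes the team-theoretic interpretation explicit; yours is self-contained and avoids an external reference, at the cost of hiding the underlying decision-theoretic picture. Both are valid, and neither has a gap.
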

\begin{proof}
Using Lemma \ref{linearlemma}, for a fixed value of CIB $\pi_t$ and virtual player 1's prescription $\gamma_t^1$, the inner maximization problem in \eqref{minequa} can be viewed as a single-stage team problem. In this team problem, the are $N_2$ players and the state of the system is $P_t^2$ which is distributed according to $\pi_t$. Player $j$'s information in this team is $P_t^{2,j}$. Using this information, the player selects a distribution $\delta U_t^{2,j}$ which is then used to randomly generate an action $U_t^{2,j}$. The reward for this team is given by $a_t$ defined in Lemma \ref{linearlemma}. It can be easily shown that for a single-stage team problem, we can restrict to deterministic strategies without loss of optimality. It is easy to see that the set of all deterministic strategies in the single-stage team problem described above corresponds to the collection of pure prescriptions $\mathcal{Q}_t^2$. This concludes the proof of Lemma \ref{finitelemma}.
\end{proof}

Lemma \ref{finitelemma} allows us to transform a non-concave maximization problem over the continuous space $\mathcal{B}_t^2$ in \eqref{minequa} and \eqref{barv} into a maximization problem over a finite set $\mathcal{Q}_t^2$. This is particularly useful when players in Team 2 do not have any private information because in that case, $\mathcal{Q}_t^2$ has the same size as the action space $\U_t^{2}$. Other structural properties and simplifications are discussed in Appendix \ref{linearlemma}.

\section{Proof of Lemma \ref{structlemma}}\label{structlemmaproof}

%Suppose $h^1$ is any equilibrium strategy for player 1 (we know one exists due to Proposition \ref{kuhnremark}). Since we are dealing with a zero-sum game, we know that (a)  $h^1$ achieves the infimum in $\inf_{g^1 \in \mathcal{G}^1}\sup_{g^2 \in \mathcal{G}^2}J(g^1,g^2)$, (b) any strategy achieving the infimum in the above inf-sup problem will be an equilibrium strategy for player 1 \cite{osborne1994course}. 

For proving Lemma \ref{structlemma}, it will be helpful to split Team 1's strategy $g^1$ into two parts $(g^{1,1},g^{1,2})$ where $g^{1,1} = (g^{1,1}_1,\dots,g^{1,1}_T)$ and $g^{1,2} = (g^{1,2}_1,\dots,g^{1,2}_T)$. The set of all such strategies for Player $j$ in Team 1 will be denoted by $\mathcal{G}^{1,j}$.
We will prove the lemma using the following claim.
\begin{claim}\label{claim:one}
Consider any arbitrary strategy $g^{1} = (g^{1,1},g^{1,2})$ for Team 1. Then there exists a strategy $\bar{g}^{1,1}$ for Player 1 in Team 1 such that, for each $t$, $\bar{g}^{1,1}_t$ is a function  of $X_t$ and $I^2_t$ and 
\[
J((\bar{g}^{1,1},g^{1,2}),g^2) = J((g^{1,1},g^{1,2}), g^2), ~~\forall g^2 \in \mathcal{G}^2.
\]

%\emph{for every strategy $g^2 \in \mathcal{G}^2$.}
\end{claim}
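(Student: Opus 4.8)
The plan is to build $\bar g^{1,1}$ directly from $g^{1,1}$ by averaging out the past states it conditions on, and then to verify by a forward induction that this replacement leaves the law of all cost-relevant variables unchanged \emph{for every} $g^2$. Concretely, for each time $t$ and each realization $(x_t,i^2_t)$ occurring with positive probability, I would set
\[
\bar g^{1,1}_t(x_t,i^2_t)(\cdot) \doteq \Py^{(g^1,g^2)}\!\left[U^{1,1}_t=\cdot \mid X_t=x_t,\, I^2_t=i^2_t\right],
\]
and define $\bar g^{1,1}_t$ arbitrarily on zero-probability realizations (these do not affect the cost). By construction $\bar g^{1,1}_t$ uses only $X_t$ and $I^2_t$, as required.

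The first and most delicate step is to show that this definition does \emph{not} depend on the choice of $g^2$ (nor on $g^{1,2}$), so that a single $\bar g^{1,1}$ serves against all opponents. Since $U^{1,1}_t = g^{1,1}_t(X_{1:t},I^2_t)$, we have $\bar g^{1,1}_t(x_t,i^2_t) = \sum_{x_{1:t-1}} g^{1,1}_t(x_{1:t},i^2_t)\,\Py^{(g^1,g^2)}[X_{1:t-1}=x_{1:t-1}\mid x_t,i^2_t]$, so it suffices to prove that the posterior $\Py^{(g^1,g^2)}[X_{1:t-1}\mid X_t,I^2_t]$ is invariant to $g^2$ and $g^{1,2}$. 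I would establish this by writing out the causal factorization of the joint law of $(X_{1:t},I^2_t)$, where $I^2_t=(Y^2_{1:t},U^{1:2}_{1:t-1})$. The only strategy-dependent factors are the action-generation kernels; those of Team 2 and of Player 2 in Team 1 contribute $\prod_s g^2_s(i^2_s)(u^2_s)$ and $\prod_s g^{1,2}_s(i^2_s)(u^{1,2}_s)$. Because each $i^2_s$ is a fixed sub-vector of $i^2_t$ and each $u^2_s,u^{1,2}_s$ is a coordinate of $i^2_t$, these factors are constants in the free variable $x_{1:t-1}$ and therefore cancel against the normalizing denominator in Bayes' rule. What survives are the prior on $X_1$, the kernels induced by $f_t$ and $h^{2}_t$, and the factor $\prod_s g^{1,1}_s(x_{1:s},i^2_s)(u^{1,1}_s)$, none of which involves $g^2$ or $g^{1,2}$. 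Hence the posterior, and thus $\bar g^{1,1}$, depends only on $g^{1,1}$ and the system model.

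With $\bar g^{1,1}$ fixed, I would finish by a forward induction showing that the profiles $(g^{1,1},g^{1,2},g^2)$ and $(\bar g^{1,1},g^{1,2},g^2)$ induce the same distribution of $(X_t,I^2_t,U^{1,1}_t,U^{1,2}_t,U^2_t)$ at every $t$, for every $g^2$. The induction hypothesis is that $(X_t,I^2_t)$ has the same law under both profiles. Given $(X_t,I^2_t)$, the action $U^{1,1}_t$ has conditional law $\bar g^{1,1}_t(X_t,I^2_t)$ under both profiles --- by construction under the original profile and by definition under the new one --- while $U^{1,2}_t$ and $U^2_t$ are drawn from $g^{1,2}_t(I^2_t)$ and $g^2_t(I^2_t)$ using fresh randomization seeds, hence are conditionally independent of $U^{1,1}_t$ given $(X_t,I^2_t)$; thus the joint conditional law of $(U^{1,1}_t,U^{1,2}_t,U^2_t)$ given $(X_t,I^2_t)$ coincides under both profiles. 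Since $X_{t+1}=f_t(X_t,U^{1:2}_t,W^s_t)$ and $I^2_{t+1}=(I^2_t,Y^2_{t+1},U^{1:2}_t)$ with $Y^2_{t+1}=h^{2}_{t+1}(X_{t+1},U^{1:2}_t,W^2_{t+1})$ are functions of $(X_t,I^2_t,U^{1:2}_t)$ and of primitive noises whose laws are strategy-independent and independent of the past, the law of $(X_{t+1},I^2_{t+1})$ is preserved, closing the induction. As each stage cost $c_t(X_t,U^1_t,U^2_t)$ is a function of $(X_t,U^{1:2}_t)$, the expected stage costs --- and hence $J$ --- agree under the two profiles for every $g^2$, which is exactly the claim. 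The main obstacle is the second step: establishing that the signaling posterior $\Py[X_{1:t-1}\mid X_t,I^2_t]$ is invariant to the adversary's strategy, since this invariance is precisely what allows the single replacement $\bar g^{1,1}$ to work uniformly over all $g^2\in\mathcal{G}^2$.
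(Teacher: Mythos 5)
Your proposal is correct and follows essentially the same route as the paper's proof: define $\bar g^{1,1}_t$ as the conditional law of $U^{1,1}_t$ given $(X_t, I^2_t)$, show that this conditional law (equivalently, the posterior on $X_{1:t-1}$ given $(X_t,I^2_t)$) does not depend on the opposing strategy, and close with a forward induction showing the joint law of $(X_t, I^2_t, U^{1:2}_t)$ is preserved for every $g^2$. The only difference is how the invariance step is justified---you prove it directly via the causal Bayes factorization and cancellation of the $g^2$- and $g^{1,2}$-dependent factors, whereas the paper defines the reference belief $\Psi_t$ through an open-loop strategy $h^2$ and invokes the standard POMDP fact that the belief on the state is independent of the controller's strategy---which amounts to the same underlying computation.
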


Suppose that the above claim is true. Let $(h^{1,1},h^{1,2})$ be a min-max strategy for Team 1.
Due to Claim \ref{claim:one}, there exists a strategy $\bar{h}^{1,1}$ for Player 1 in Team 1 such that, for each $t$, $\bar{h}^{1,1}_t$ is a function only of $X_t$ and $I^2_t$ and 
\[
J((\bar{h}^{1,1},h^{1,2}),g^2) = J((h^{1,1},h^{1,2}), g^2),
\]
\emph{for every strategy $g^2 \in \mathcal{G}^2$.} Therefore, we have that 
\begin{align*}
\sup_{g^2 \in \mathcal{G}^2}J((\bar{h}^{1,1},h^{1,2}),g^2) = \sup_{g^2 \in \mathcal{G}^2}J((h^{1,1},h^{1,2}),g^2) = \inf_{g^1 \in \mathcal{G}^1}\sup_{g^2 \in \mathcal{G}^2}J(g^1,g^2).
\end{align*} 
Thus, $(\bar{h}^{1,1},h^{1,2})$ is a min-max strategy for Team 1 wherein Player 1 uses only the current state and Player 2's information.

{\emph{Proof of Claim \ref{claim:one}:}} We now proceed to prove Claim \ref{claim:one}. 
Consider any arbitrary strategy $g^1 = (g^{1,1},g^{1,2})$ for Team 1.  Let $\iota_t^2 = \{u_{1:t-1}^{1:2},y_{1:t}^2\}$ be a realization of Team 2's information $I_t^2$ (which is the same as Player 2's information in Team 1). Define the distribution $\Psi_t(\iota_t^2)$ over the space $(\prod_{\tau = 1}^t\X_{\tau} )\times\U_{t}^{1,1}$ as follows:
\begin{align*}
\Psi_t(\iota_t^2; x_{1:t},u_{t}^{1,1}) \doteq \Py^{g^1,h^2}[X_{1:t},U^{1,1}_{t} = (x_{1:t},u_{t}^{1,1}) \mid I_t^2 = \iota_t^2],
\end{align*}
%\begin{align*}
%&\Psi_t(\iota_t^2; x_{1:t},u_{1:t}^1) \doteq \notag \\
%~~&\Py^{(g^1,openloop)}[X_{1:t},U^1_{1:t} = (x_{1:t},u_{1:t}^1) \mid I_t^2 = \iota_t^2],
%\end{align*}
if $\iota_t^2$ is \emph{feasible}, that is $\Py^{g^1,h^2}[I_t^2 = \iota_t^2] > 0$, under the \emph{open-loop} strategy $h^2 \doteq (u^2_{1:t-1})$ for Team 2. Otherwise, define $\Psi_t(\iota_t^2; x_{1:t},u_{t}^{1,1})$ to be the uniform
%\begin{align}
%&\Psi_t(\iota_t^2; x_{1:t},u_{1:t}^1) \\
%&= \begin{cases}
%\Py^{g^1,h^2}[X_{1:t},U_{1:t} = x_{1:t},u_{1:t}^1 \mid I_t^2 = \iota_t^2] & \text{if $\iota_t^2$ is feasible under}\\
%&\text{some open-loop $h^2$}\\
%\mathscr{U}(\cdot) & \text{otherwise},
%\end{cases}
%\end{align}
 distribution over the space $(\prod_{\tau = 1}^t\X_{\tau} )\times\U_{t}^{1,1}$.

\begin{lemma}\label{claimJ1}
Let $g^1$ be Team 1's strategy and let $g^2$ be an arbitrary strategy for Team 2. Then for any realization $x_{1:t},u_{t}^{1,1}$ of the variables $X_{1:t},U^{1,1}_{t}$, we have
\begin{align*}
\Py^{g^1,g^2}[X_{1:t},U^{1,1}_{t} = (x_{1:t},u_{t}^{1,1}) \mid I_t^2] = \Psi_t(I_t^2; x_{1:t},u_{t}^{1,1}),
\end{align*}
almost surely.
\end{lemma}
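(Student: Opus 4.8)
The plan is to prove the identity by factorizing the joint law of the relevant trajectory variables and observing that every factor contributed by Team 2's strategy $g^2$ is a function only of variables contained in $\iota_t^2$, so it cancels when forming the conditional law. First I would write out the joint probability $\Py^{g^1,g^2}[x_{1:t}, u_{1:t-1}^{1:2}, y_{1:t}^2, u_t^{1,1}]$ using the chain rule along the temporal order dictated by the dynamics \eqref{statevol}--\eqref{obseq} and the action-generation rule \eqref{rand1}. Because the primitive random variables are mutually independent, this joint law factors into (i) the prior on $X_1$, the transition kernels $\Py[x_{\tau+1}\mid x_\tau,u_\tau^{1:2}]$ and the observation kernels $\Py[y_\tau^2\mid x_\tau,u_{\tau-1}^{1:2}]$, none of which depend on either strategy; (ii) Team 1's control-law factors $g_\tau^{1,1}(\cdot;u_\tau^{1,1})$ and $g_\tau^{1,2}(\iota_\tau^2;u_\tau^{1,2})$ together with the time-$t$ factor $g_t^{1,1}(\cdot;u_t^{1,1})$; and (iii) Team 2's control-law factors $\prod_{\tau=1}^{t-1} g_\tau^2(\iota_\tau^2;u_\tau^2)$.

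Collecting the strategy-independent kernels together with Team 1's factors into a term $A$ and Team 2's factors into a term $B$, one obtains the product form $\Py^{g^1,g^2}[x_{1:t},u_t^{1,1},\iota_t^2] = A\cdot B$. The crucial structural observation, which I would establish using the information structure of Game $\gm{H}$, is that $B = \prod_{\tau=1}^{t-1} g_\tau^2(\iota_\tau^2;u_\tau^2)$ depends only on the realized actions and observations $(u_{1:t-1}^{1:2}, y_{1:t}^2)$ --- all of which are components of the conditioning realization $\iota_t^2$ --- and in particular is independent of $x_{1:t}$ and $u_t^{1,1}$. Consequently, when I form the conditional probability $\Py^{g^1,g^2}[x_{1:t},u_t^{1,1}\mid \iota_t^2] = \frac{A\cdot B}{\sum_{\bar{x}_{1:t},\bar{u}_t^{1,1}} \bar{A}\cdot B}$, where $\bar{A}$ denotes $A$ evaluated at the summation variables, the factor $B$ cancels between numerator and denominator, leaving $\frac{A}{\sum_{\bar{x}_{1:t},\bar{u}_t^{1,1}} \bar{A}}$, an expression that involves $g^1$ and the realized values but not $g^2$ at all.

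To finish, I would apply exactly the same factorization and cancellation under the profile $(g^1,h^2)$, where $h^2=(u_{1:t-1}^2)$ is the open-loop strategy used in the definition of $\Psi_t$. Under $h^2$, Team 2's factor $B$ is replaced by an indicator that equals $1$ on the realized action sequence, so the cancellation again produces $\frac{A}{\sum_{\bar{x}_{1:t},\bar{u}_t^{1,1}} \bar{A}}$; hence $\Psi_t(\iota_t^2;x_{1:t},u_t^{1,1})$ equals the very same strategy-independent ratio, yielding the claimed equality. The main obstacle is the almost-sure / feasibility bookkeeping: I must check that every $\iota_t^2$ occurring with positive probability under $(g^1,g^2)$ also satisfies $\sum_{\bar{x}_{1:t},\bar{u}_t^{1,1}} \bar{A} > 0$, which is precisely the condition that $\iota_t^2$ is feasible under $(g^1,h^2)$, so that $\Psi_t$ is given by the genuine conditional-probability formula rather than by its uniform-distribution fallback. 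Since positive probability under $(g^1,g^2)$ forces both $B>0$ and $\sum_{\bar{x}_{1:t},\bar{u}_t^{1,1}}\bar{A}>0$, this feasibility transfer holds and the equality is valid almost surely.
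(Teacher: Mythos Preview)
Your argument is correct. The paper's own proof is considerably shorter: it simply observes that from Team~2's viewpoint the system is a POMDP with state $S_t=(X_{1:t},U_t^{1,1},I_t^2)$, observation $Y_t^2$, and action $U_t^2$, and then invokes the standard POMDP fact that the controller's belief over the state given its own information is independent of the controller's strategy (citing \cite{kumar2015stochastic}). Your proof is essentially the explicit factorization-and-cancellation argument that underlies that standard fact: you write out the joint law, isolate the $g^2$-dependent multiplicative block $B=\prod_{\tau<t}g_\tau^2(\iota_\tau^2;u_\tau^2)$, note it is measurable with respect to $\iota_t^2$, and cancel. What you gain is a self-contained proof that avoids the need to verify that the situation really is a POMDP in the sense required by the cited reference, and your feasibility/almost-sure bookkeeping (showing $\sum\bar A>0$ whenever $\iota_t^2$ has positive probability under $(g^1,g^2)$, so the non-uniform branch of $\Psi_t$ applies) is more carefully spelled out than in the paper. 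What the paper's approach buys is brevity and a conceptual explanation of \emph{why} the cancellation occurs.
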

\begin{proof}
{From Team 2's perspective, the system evolution can be seen in the following manner. The system state at time $t$ is $S_t = (X_{1:t},U_{t}^{1,1},I^2_t)$. Team 2 obtains a partial observation $Y_{t}^2$ of the state at time $t$. Using information $\{Y_{1:t}^2,U_{1:t-1}^{1:2}\}$, Team 2 then selects an action $U_t^2$. The state then evolves in a controlled Markovian manner (with dynamics that depend on $g^1$). Thus, from Team 2's perspective,  this system is a partially observable Markov decision process (POMDP). The claim in the Lemma then follows from the standard result in POMDPs that the belief on the state given the player's information does not depend on the player's strategy \cite{kumar2015stochastic}.}
\end{proof}

For any instance $\iota_t^2$ of Team 2's information $I_t^2$, define the distribution $\Phi_t(\iota_t^2)$ over the space $\X_t \times \U_t^{1,1}$ as follows
\begin{align}
\Phi_t(\iota_t^2; x_t, u_t^{1,1}) = {\sum_{x_{1:t-1}}\Psi_t(\iota_t^2;x_{1:t},u_{t}^{1,1})}.\label{eq:phi}
\end{align}
Define strategy $\bar{g}^{1,1}$ for Player 1 in Team 1 such that for any realization $x_t, \iota_t^2$ of state $X_t$ and Player 2's information $I_t^2$ at time $t$, the probability of selecting an action $u_t^{1,1}$ at time $t$ is
%\begin{align}
%\bar{h}_t^1(x_t,\iota_{t}^2;u_t^1) &= \sum_{\bar{p}_t^1}\Py^{h^1}[\bar{p}_t^1 \mid x_t,\iota_t^2]h^1_t(\iota_t^1; u_t^1)\\
%&=\sum_{\bar{p}_t^1}\Py^{h^1}[\bar{p}_t^1 \mid x_t,\iota_t^2]\Py^{h^1}[u_t^1 \mid \iota_t^1],
%\end{align}
\begin{align}\label{eq:defgbar}
\bar{g}_t^{1,1}(x_t,\iota_{t}^2;u_t^{1,1}) \doteq 
\begin{cases}
\frac{\Phi_t(\iota_t^2; x_t,u_t^{1,1})}{\sum_{u_t^{1,1'}}\Phi_t(\iota_t^2; x_t,u_t^{1,1'})} & \text{if } \sum_{u_t^{1,1'}}\Phi_t(\iota_t^2; x_t,u_t^{1,1'}) > 0\\
\mathscr{U}(\cdot) & \text{otherwise},
\end{cases}
\end{align}
where $\mathscr{U}(\cdot)$ denotes the uniform distribution over the action space $\U_t^{1,1}$. Notice that the construction of the strategy $\bar{g}^{1,1}$ does not involve Team 2's strategy $g^2$.

\begin{lemma}\label{aseqstrat}
For any strategy $g^2$ for Team 2, we have
\begin{align*}
\Py^{(g^1,g^2)}[U_t^{1,1} = u_t^{1,1} \mid X_t, I_t^2] = \bar{g}_t^{1,1}(X_t,I_{t}^2;u_t^{1,1})
\end{align*}
almost surely for every $u_t^{1,1} \in \U_t^{1,1}$.
\end{lemma}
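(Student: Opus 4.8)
The plan is to establish the claimed identity on the set of realizations $(x_t,\iota_t^2)$ of $(X_t,I_t^2)$ that are \emph{feasible} under the fixed profile $(g^1,g^2)$, i.e. those with $\Py^{g^1,g^2}[X_t = x_t, I_t^2 = \iota_t^2] > 0$. Since the conditional probability on the left-hand side is only determined up to almost-sure equivalence, verifying the equality for every feasible realization suffices. I would therefore fix one such $(x_t,\iota_t^2)$ and an arbitrary $u_t^{1,1} \in \U_t^{1,1}$, and compute the relevant conditional law explicitly.

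The first and only substantive step is to invoke Lemma \ref{claimJ1}, which gives, for every $x_{1:t}$ whose last coordinate equals $x_t$,
\[
\Py^{g^1,g^2}[X_{1:t} = x_{1:t}, U_t^{1,1} = u_t^{1,1} \mid I_t^2 = \iota_t^2] = \Psi_t(\iota_t^2; x_{1:t}, u_t^{1,1}).
\]
This is the crux: it identifies the quantity $\Psi_t$, which was defined through the auxiliary \emph{open-loop} strategy $h^2$, with the genuine conditional law under the actual (arbitrary) opponent strategy $g^2$, thereby making the whole construction opponent-independent. Summing both sides over $x_{1:t-1}$ and invoking the definition \eqref{eq:phi} of $\Phi_t$ yields
\[
\Py^{g^1,g^2}[X_t = x_t, U_t^{1,1} = u_t^{1,1} \mid I_t^2 = \iota_t^2] = \Phi_t(\iota_t^2; x_t, u_t^{1,1}),
\]
and a further summation over $u_t^{1,1}$ gives $\Py^{g^1,g^2}[X_t = x_t \mid I_t^2 = \iota_t^2] = \sum_{u_t^{1,1'}} \Phi_t(\iota_t^2; x_t, u_t^{1,1'})$.

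Next I would divide the penultimate display by the last one. Because $(x_t,\iota_t^2)$ was chosen feasible, the denominator $\Py^{g^1,g^2}[X_t = x_t \mid I_t^2 = \iota_t^2]$ is strictly positive, so the elementary definition of conditional probability gives
\[
\Py^{g^1,g^2}[U_t^{1,1} = u_t^{1,1} \mid X_t = x_t, I_t^2 = \iota_t^2] = \frac{\Phi_t(\iota_t^2; x_t, u_t^{1,1})}{\sum_{u_t^{1,1'}} \Phi_t(\iota_t^2; x_t, u_t^{1,1'})}.
\]
Since the normalizing sum is positive, this ratio is precisely the first branch of the definition \eqref{eq:defgbar}, so the right-hand side equals $\bar{g}_t^{1,1}(x_t,\iota_t^2; u_t^{1,1})$. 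As this holds for every feasible $(x_t,\iota_t^2)$ and every $u_t^{1,1}$, the asserted equality holds almost surely.

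I expect the argument to be essentially routine bookkeeping once Lemma \ref{claimJ1} is in hand. The one point demanding care is ensuring we never fall into the degenerate uniform branch of \eqref{eq:defgbar}: restricting attention to feasible realizations forces $\sum_{u_t^{1,1'}} \Phi_t(\iota_t^2; x_t, u_t^{1,1'}) = \Py^{g^1,g^2}[X_t = x_t \mid I_t^2 = \iota_t^2]$ to be strictly positive, which keeps us in the first branch throughout and makes the match with $\bar{g}_t^{1,1}$ exact.
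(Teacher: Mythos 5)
Your proposal is correct and follows essentially the same route as the paper's proof: restrict to realizations of positive probability under $(g^1,g^2)$, apply Lemma \ref{claimJ1} to identify the conditional law with $\Psi_t$, marginalize via \eqref{eq:phi} to obtain $\Phi_t$, and divide (Bayes' rule) to land exactly in the first branch of \eqref{eq:defgbar}. The only difference is cosmetic: you make explicit the intermediate joint-conditional step and the almost-sure-equivalence justification that the paper leaves implicit.
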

\begin{proof}
Let $x_t, \iota_t^2$ be a realization that has a non-zero probability of occurrence under the strategy profile $(g^1,g^2)$. Then using Lemma \ref{claimJ1}, we have
\begin{align}
\Py^{(g^1,g^2)}[X_{1:t},U^{1,1}_{t} = (x_{1:t},u_{t}^{1,1}) \mid \iota_t^2] = \Psi_t(\iota_t^2; x_{1:t},u_{t}^{1,1}), \label{eq:lemma14}
\end{align}
for every realization $x_{1:t-1}$ of states $X_{1:t-1}$ and $u_{t}^{1,1}$ of action $U_{t}^{1,1}$. Summing over all $x_{1:t-1},u^{1,1}_{t}$ and using \eqref{eq:phi} and \eqref{eq:lemma14},  we have
\begin{align}
 \Py^{(g^1,g^2)}[X_t = x_t \mid I_t^2 = \iota_t^2] =\sum_{u_t^{1,1}}\Phi_t(\iota_t^2; x_t,u_t^{1,1}).
\end{align}
The left hand side of the above equation is positive since $x_t,i^2_t$ is a realization of positive probability under the strategy profile $(g^1,g^2)$.

Using Bayes' rule,  \eqref{eq:phi}, \eqref{eq:defgbar} and \eqref{eq:lemma14}, we obtain
\begin{align}
\nonumber\Py^{g^1,g^2}[U_t^{1,1} = u_t^{1,1} \mid X_t = x_t, I_t^2 = \iota_t^2]
&= \frac{\Phi_t(\iota_t^2; x_t,u_t^{1,1})}{\sum_{u_t^{1,1'}}\Phi_t(\iota_t^2; x_t,u_t^{1,1'})}= \bar{g}_t^{1,1}(x_t,\iota_{t}^2;u_t^{1,1}) .
\end{align}
This concludes the proof of the lemma.
\end{proof}

Let us define $\bar{g}^1 = (\bar{g}^{1,1},g^{1,2})$, where $\bar{g}^{1,1}$ is as defined in \eqref{eq:defgbar}. We can now show that the strategy $\bar{g}^1$ satisfies
\[
J(\bar{g}^1,g^2) = J(g^1, g^2),
\]
for every strategy $g^2 \in \mathcal{G}^2$. 
Because of the structure of the cost function in \eqref{eq:totalcost}, it is sufficient to show that for each time $t$, the random variables   $(X_t,U_t^1,U_t^2,I^2_t)$ have the same joint distribution under strategy profiles $(g^1,g^2)$ and $(\bar{g}^1,g^2)$. We prove this by induction. It is easy to verify that at time $t=1$, $(X_1,U_1^1,U_1^2,I^2_1)$ have the same joint distribution under strategy profiles $(g^1,g^2)$ and $(\bar{g}^1,g^2)$.

Now assume that at time $t$, 
\begin{align}
\label{jointeq}\Py^{g^1,g^2}[x_t,u_t^1,u_t^2,\iota_t^2] = \Py^{\bar{g}^1,g^2}[x_t,u_t^1,u_t^2,\iota_t^2],
\end{align}
 for any realization of state, actions and Team 2's information $x_t,u_t^1,u_t^2, \iota_t^2$. Let $\iota_{t+1}^2 = (\iota_t^2,u_t^{1:2},y_{t+1}^2)$. Then we have
\begin{align}
\Py^{g^1,g^2}[x_{t+1},\iota_{t+1}^2] &= \nonumber\sum_{\bar{x}_t}\Py[x_{t+1},y_{t+1}^2 \mid \bar{x}_t,u_t^{1:2},\iota_t^2]\Py^{g^1,g^2}[\bar{x}_t,u_t^{1:2},\iota_t^2]\\
&= \label{indhyp}\sum_{\bar{x}_t}\Py[x_{t+1},y_{t+1}^2 \mid \bar{x}_t,u_t^{1:2},\iota_t^2]\Py^{\bar{g}^1,g^2}[\bar{x}_t,u_t^{1:2},\iota_t^2]\\
&=\Py^{\bar{g}^1,g^2}[x_{t+1},\iota_{t+1}^2].\label{indhyp2}
\end{align}
The equality in (\ref{indhyp}) is due to the induction hypothesis. Note that the conditional distribution $\Py[x_{t+1},\iota_{t+1}^2 \mid {x}_t,{u}_t^1,u_t^2,\iota_t^2]$ does not depend on players' strategies (see equations (\ref{statevol}) and (\ref{obseq})).
 
At $t+1$, for any realization $x_{t+1},u_{t+1}^1,u_{t+1}^2,\iota_{t+1}^2$ that has non-zero probability of occurrence under the strategy profile $(g^1,g^2)$, we have
\begin{align}
\label{constarg}&\Py^{g^1,g^2}[x_{t+1},u_{t+1}^1,u_{t+1}^2,\iota_{t+1}^2] \\
&= \Py^{g^1,g^2}[x_{t+1},\iota_{t+1}^2]g_t^2(\iota_{t+1}^2;u_{t+1}^2)g_t^{1,2}(\iota_{t+1}^2;u_{t+1}^{1,2})\Py^{g^1,g^2}[u_{t+1}^{1,1}\mid x_{t+1},\iota_{t+1}^2]\\
&= \Py^{g^1,g^2}[x_{t+1},\iota_{t+1}^2]g_t^2(\iota_{t+1}^2;u_{t+1}^2)g_t^{1,2}(\iota_{t+1}^2;u_{t+1}^{1,2})\bar{g}_t^{1,1}(x_{t+1},\iota_{t+1}^2;u_{t+1}^{1,1})\label{constarg1}\\
&= \Py^{\bar{g}^1,g^2}[x_{t+1},\iota_{t+1}^2]g_t^2(\iota_{t+1}^2;u_{t+1}^2)g_t^{1,2}(\iota_{t+1}^2;u_{t+1}^{1,2})\bar{g}_t^{1,1}(x_{t+1},\iota_{t+1}^2;u_{t+1}^{1,1})\label{constarg3}\\
&= \Py^{\bar{g}^1,g^2}[x_{t+1},\iota_{t+1}^2]g_t^2(\iota_{t+1}^2;u_{t+1}^2)g_t^{1,2}(\iota_{t+1}^2;u_{t+1}^{1,2})\Py^{\bar{g}^1,g^2}[u_{t+1}^{1,1}\mid x_{t+1},\iota_{t+1}^2]\label{constarg2}\\
&= \Py^{\bar{g}^1,g^2}[x_{t+1},u_{t+1}^1,u_{t+1}^2,\iota_{t+1}^2]\label{constarg4},
\end{align}
where the equality in \eqref{constarg} is a consequence of the chain rule and the manner in which players randomize their actions. Equality in \eqref{constarg1} follows from Lemma \ref{aseqstrat} and the equality in \eqref{constarg3} follows from the result in \eqref{indhyp2}.
%\red{you have used lemma 14, please mention it....
%The last inequality follows from the construction of strategy $\bar{g}^1$.} 
%Consider the first term in \eqref{constarg2}, andCombining equations (\ref{constarg2}) and (\ref{indhyp}), we have
%\begin{align*}
%\Py^{g^1,g^2}[x_{t+1},u_{t+1}^1,u_{t+1}^2,\iota_{t+1}^2] = \Py^{\bar{g}^1,g^2}[x_{t+1},u_{t+1}^1,u_{t+1}^2,\iota_{t+1}^2].
%\end{align*}
Therefore, by induction, the equality in \eqref{jointeq} holds for all $t$. This concludes the proof of Claim \ref{claim:one}. \qed

\section{Numerical Experiments: Details}\label{numappend}

\subsection{Game Model}

\paragraph{States} There are two players in Team 1 and one player in Team 2. We will refer to Player 2 in Team 1 as the defender and the Player in Team 2 as the attacker. Player 1 in Team 1 will be referred to as the signaling player. The state space of the system is $\X_t = \{(l,a),(r,a),(l,p),(r,p)\}$. For convenience, we will denote these four states with $\{0,1,2,3\}$ in the same order. Recall that $l$ and $r$ represent the two entities and $a$ and $p$ denote the active and passive states of the attacker. Therefore, if the state $X_t = (l,a)$, this means that the vulnerable entity is $l$ and the attacker is active. Similar interpretation applies for other states.

\paragraph{Actions} Each player in Team 1 has two actions and let $\U_t^{1,1} = \U_t^{1,2} = \{\alpha,\beta\}$. Player 2's action $\alpha$ corresponds to defending entity $l$ and $\beta$ corresponds to defending entity $r$. Player 1's actions are meant only for signaling. The player (attacker) in Team 2 has three actions, $\U_t^{2,1} = \{\alpha,\beta,\mu\}$. For the attacker, $\alpha$ represents the targeted attack on entity $l$, $\beta$ represents the targeted attack on entity $r$ and $\mu$ represents the blanket attack on both entities.

\paragraph{Dynamics} In this particular example, the actions of the players in Team 1 do not affect the state evolution. Only the attacker's actions affect the state evolution. We consider this type of setup for simplicity and one can let the players in Team 1 control the state evolution as well. On the other hand, note that the evolution of the CIB is controlled only by Team 1 (the signaling player) and not by Team 2 (attacker). The transition probabilities are provided in Table \ref{trans}. We interpret these transitions below.

Whenever the attacker is passive, i.e. $X_t = 2$ or 3, then the attacker will remain passive with probability 0.7. In this case, the system state does not change. With probability 0.3, the attacker will become active. Given that the attacker becomes active, the next state $X_{t+1}$ may either be 0 or 1 with probability 0.5 each. In this state, even the attacker's actions do not affect the transitions.

When the current state $X_t = 0$, the following cases are possible:
\begin{enumerate}
    \item Attacker plays $\alpha$: In this case, the attacker launches a successful targeted attack. The next state $X_{t+1}$ is either 0 or 1 with probability 0.5.
    \item Attacker plays $\beta$: The next state $X_{t+1}$ is 2 with probability 1. Thus, the attacker becomes passive if it targets the invulnerable entity.
    \item Attacker plays $\mu$: The state does not change with probability $0.7$. With probability 0.3, $X_{t+1} = 2$.
\end{enumerate}
The attacker's actions can be interpreted similarly when the state $X_t = 1$.

\begin{table}[t]
\caption{The transition probabilities $\Py[X_{t+1} \mid X_t, U_t^{2}]$. Note that the dynamics do not depend on time $t$ and Team 1's actions.}
\label{trans}
%\vskip 0.15in
\begin{center}
\begin{small}
\begin{sc}
\begin{tabular}{lccccr}
\toprule
 & $X_t = 0$ & $X_t = 1$ & $X_t=2$ & $X_t=3$ \\
\midrule
$\alpha$   & (0.5,0.5,0.0,0.0)& (0.0,0.0,0.0,1.0)& (0.15,0.15,0.7,0.0)& (0.15,0.15,0.0,0.7)\\
$\beta$   & (0.0,0.0,1.0,0.0)& (0.5,0.5,0.0,0.0)& (0.15,0.15,0.7,0.0) & (0.15,0.15,0.0,0.7)\\
$\mu$   & (0.7,0.0,0.3,0.0)& (0.0,0.7,0.0,0.3)& (0.15,0.15,0.7,0.0) & (0.15,0.15,0.0,0.7)\\
\bottomrule
\end{tabular}
\end{sc}
\end{small}
\end{center}
\vskip -0.1in
\end{table}

\paragraph{Cost} Player 1's actions in Team 1 do not affect the cost in this example (again for simplicity). The cost as a function of the state and the other players' actions is provided in Table \ref{costable}.

\begin{table}[t]
\caption{The cost $c(X_t,U_t^{1,2},U_t^{2,1})$. Note that the actions of Player 1 in Team 1 do not affect the cost. Each row corresponds to a pair of actions $(U_t^{1,2},U_t^{2,1})$ and each column corresponds to a state $X_t$.}
\label{costable}
%\vskip 0.15in
\begin{center}
\begin{small}
\begin{sc}
\begin{tabular}{lccccr}
\toprule
 & $X_t = 0$ & $X_t = 1$ & $X_t=2$ & $X_t=3$ \\
\midrule
$(\alpha,\alpha)$   & 15 & 0& 0& 0\\
$(\alpha,\beta)$   & 0& 15& 0 & 0\\
$(\alpha,\mu)$   & 10& 20& 0 & 0\\
$(\beta,\alpha)$   & 15 & 0& 0& 0\\
$(\beta,\beta)$   & 0& 15& 0 & 0\\
$(\beta,\mu)$   & 20& 10& 0 & 0\\
\bottomrule
\end{tabular}
\end{sc}
\end{small}
\end{center}
\vskip -0.1in
\end{table}

Note that when the attacker is passive, the system does not incur any cost. If the attacker launches a successful targeted attack, then the cost is 15. If the attacker launches a failed targeted attack, then the cost is 0. When the attacker launches a blanket attack, if the defender happens to be defending the vulnerable entity, the cost incurred is 10. Otherwise, the cost incurred is 20.

We consider discounted cost with a horizon of $T=15$. Therefore, the cost function $c_t(\cdot) = \delta^{t-1}c(\cdot)$ at time $t$, where the function $c(\cdot)$ is defined in Table \ref{costable}. The discount factor $\delta$ in our problem is $0.9$.

\subsection{Architecture and Implementation}
At any time $t$, we represent the value functions $V_t^u$ using a deep fully-connected ReLU network. The network (denoted by $\hat{V}_t$ with parameters $\theta_t$) takes the CIB $\pi_t$ as input and returns the value $\hat{V}_t(\pi_t,\theta_t)$. The state $X_t$ in our example takes only four values and thus, we sample the CIB belief space uniformly. For each CIB point in the sampled set, we compute an estimate of the value function $\bar{V}_t$ as in \eqref{barv}. Since Team 2 has only one player, we can use the structural result in Section \ref{dpstruct} to simplify the inner maximization problem in \eqref{barv} and solve the outer minimization problem in \eqref{barv} using gradient descent. The neural network representation of the value function is helpful for this as we can compute gradients using backpropagation. Since this minimization is not convex, we may not converge to the global optimum. Therefore, we use multiple initializations and pick the best solution. Finally, the interpolation step is performed by training the neural network $\hat{V}_t$ with $\ell_2$ loss.

We illustrate this iterative procedure in Section \ref{valueapproxfigs}. For each time $t$, we compute the upper value functions $\bar{V}_t$ at the sampled belief points using the estimated value function $\hat{V}_{t+1}$. In the figures in Section \ref{valueapproxfigs}, these points are labeled ``Value function" and plotted in blue. The weights of the neural network $\hat{V}_t$ are then adjusted by regression to obtain an approximation. This approximated value function is plotted in orange. For the particular example discussed above, we can obtain very close approximations of the value functions. We can also observe that the value function converges due to discounting as the horizon of the problem increases.

\subsection{The Value Function and Team 1's Strategy}
The value function computed using the methodology described above is depicted in Figure \ref{valuefunc}. Since the the initial belief is given by $\pi_1(0) = \pi_1(1) = 0.5$, we can conclude that the min-max value of the game is 65.2.
We note that the value function in Figure \ref{valuefunc} is neither convex nor concave. As discussed earlier in Section \ref{specialcases}, the trade-off between signaling and secrecy can be understood from the value function. Clearly, revealing too much (or too little) information about the hidden state is unfavorable for Team 1. The most favorable belief according to the value function seems to be $\pi_1(0) = 0.28$ (or 0.72). 

We also compute a min-max strategy for Team 1 and the corresponding best-response\footnote{This is not to be confused with the max-min strategy for the attacker.} for the attacker in Team 2. The min-max strategy is computed using Algorithm \ref{alg:example}. 
\begin{enumerate}
    \item In the best-response, the attacker decides to launch a targeted attack (while it is active) on entity $l$ if $0\leq\pi_1(0)<0.28$ and on entity $r$ if $0.72<\pi_1(0)\leq 1$. For all other belief-states it launches a blanket attack. Clearly, the attacker launches a targeted attack only if it is confident enough.
    \item According to our computed min-max strategy, Player 2 in Team 1 simply computes the CIB and chooses to defend the entity that is more likely to be vulnerable.
    \item The signaling strategy for Player 1 in Team 1 is more complicated. This can be seen from the strategy depicted in Figure \ref{strategy_fig}. The signaling strategy can be divided into three broad regions as shown in Figure \ref{strategy_fig}. 
    \begin{enumerate}
        \item The grey region is where the signaling can be arbitrary. This is because the attacker in these belief states is confident enough about which entity is vulnerable and will decide to launch a targeted attack immediately.
        \item In the brown region, the signaling is negligible. This is because even slight signaling might make the attacker confident enough to launch a targeted attack.
        \item In the yellow region, there is substantial signaling. In this region, the attacker uses the blanket attack and signaling can help Player 2 in Team 1 defend better.
    \end{enumerate}
      
\end{enumerate}

We also observe an interesting pattern in the signaling strategy. Consider the CIB $\pi_1(0) = 0.6$ in Figure \ref{strategy_fig}. Let the true state be 0. In this scenario, Player 1 in Team 1 knows that the state is 0 while Player 2 believes that state 1 is more likely. Therefore, Player 1's signaling strategy must be such that it quickly resolves this \emph{mismatch} between the state and Player 2's belief. However, when the true state is 1, Player 2's belief is consistent with the state. Too much signaling in this case can make the attacker more confident about the system state. We note that there is an asymmetry w.r.t. the state in the signaling requirement, i.e., signaling is favorable when the state is 0 but unfavorable when the state is 1. The asymmetric signaling strategy for states 0 and 1 depicted in Figure \ref{strategy_fig} achieves this goal of asymmetric signaling efficiently.

% Additional details such as the hyperparameters used can be found in the code attached. This code should run on Google Colab without any issues.

\twocolumn
\subsection{Value Function Approximations}\label{valueapproxfigs}
In this Section, we present successive approximations of the value functions obtained using our methodology in Section \ref{dpsolve}.
\begin{figure}[ht]
%\vskip 0.2in
\begin{center}
\centerline{\includegraphics[width=0.7\columnwidth]{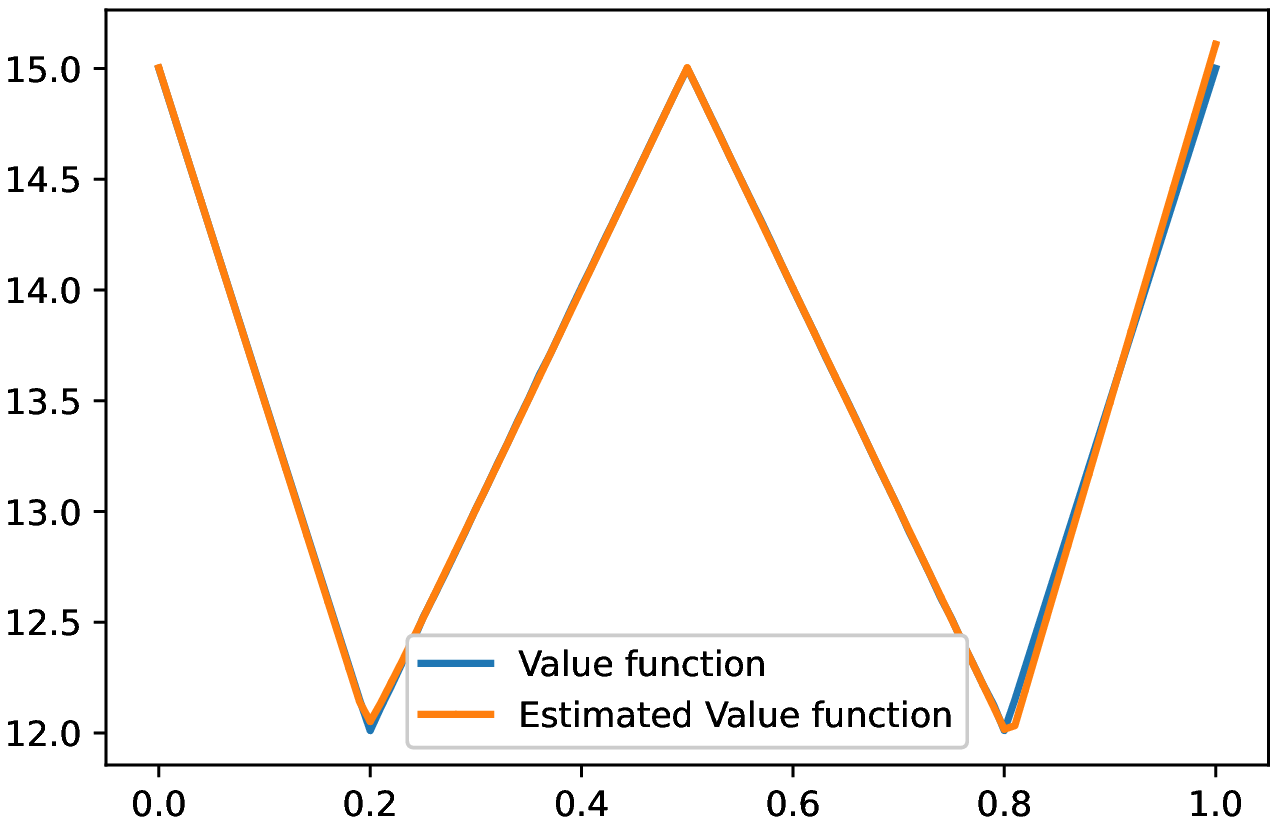}}
\caption{Estimated value function $\hat{V}_{15}$}
\label{icml-historical}
\end{center}
\vskip -0.2in
\end{figure}

\begin{figure}[ht]
%\vskip 0.2in
\begin{center}
\centerline{\includegraphics[width=0.7\columnwidth]{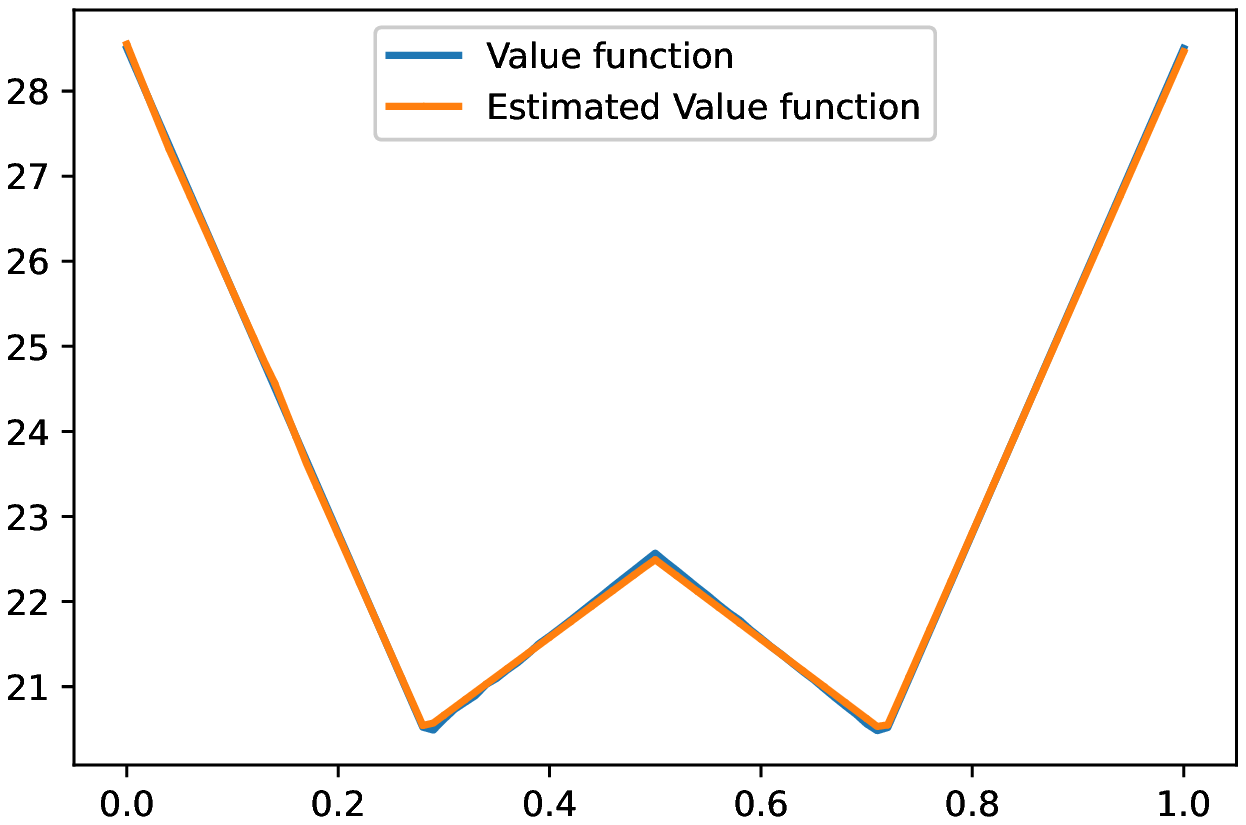}}
\caption{Estimated value function $\hat{V}_{14}$}
\label{icml-historical}
\end{center}
\vskip -0.2in
\end{figure}

\begin{figure}[ht]
%\vskip 0.2in
\begin{center}
\centerline{\includegraphics[width=0.7\columnwidth]{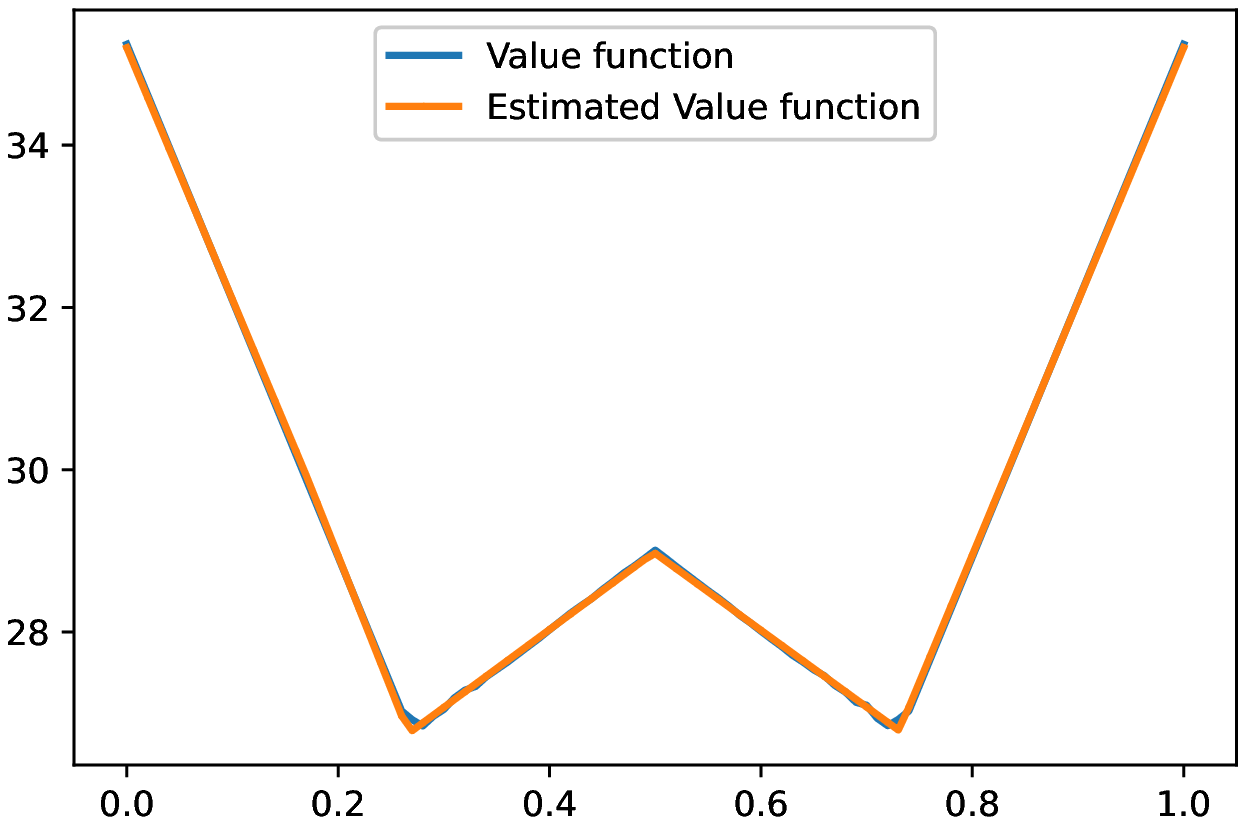}}
\caption{Estimated value function $\hat{V}_{13}$}
\label{icml-historical}
\end{center}
\vskip -0.2in
\end{figure}

\begin{figure}[ht]
%\vskip 0.2in
\begin{center}
\centerline{\includegraphics[width=0.7\columnwidth]{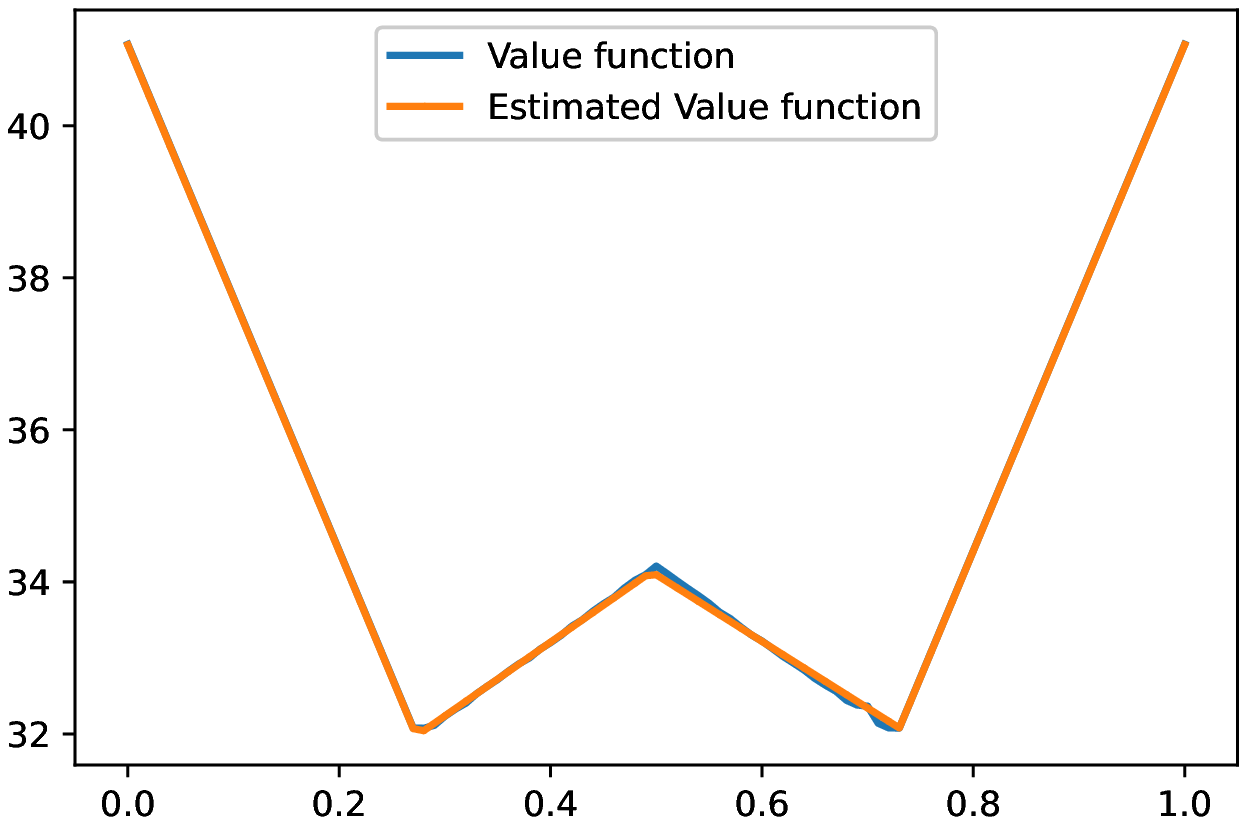}}
\caption{Estimated value function $\hat{V}_{12}$}
\label{icml-historical}
\end{center}
\vskip -0.2in
\end{figure}

\begin{figure}[ht]
%\vskip 0.2in
\begin{center}
\centerline{\includegraphics[width=0.7\columnwidth]{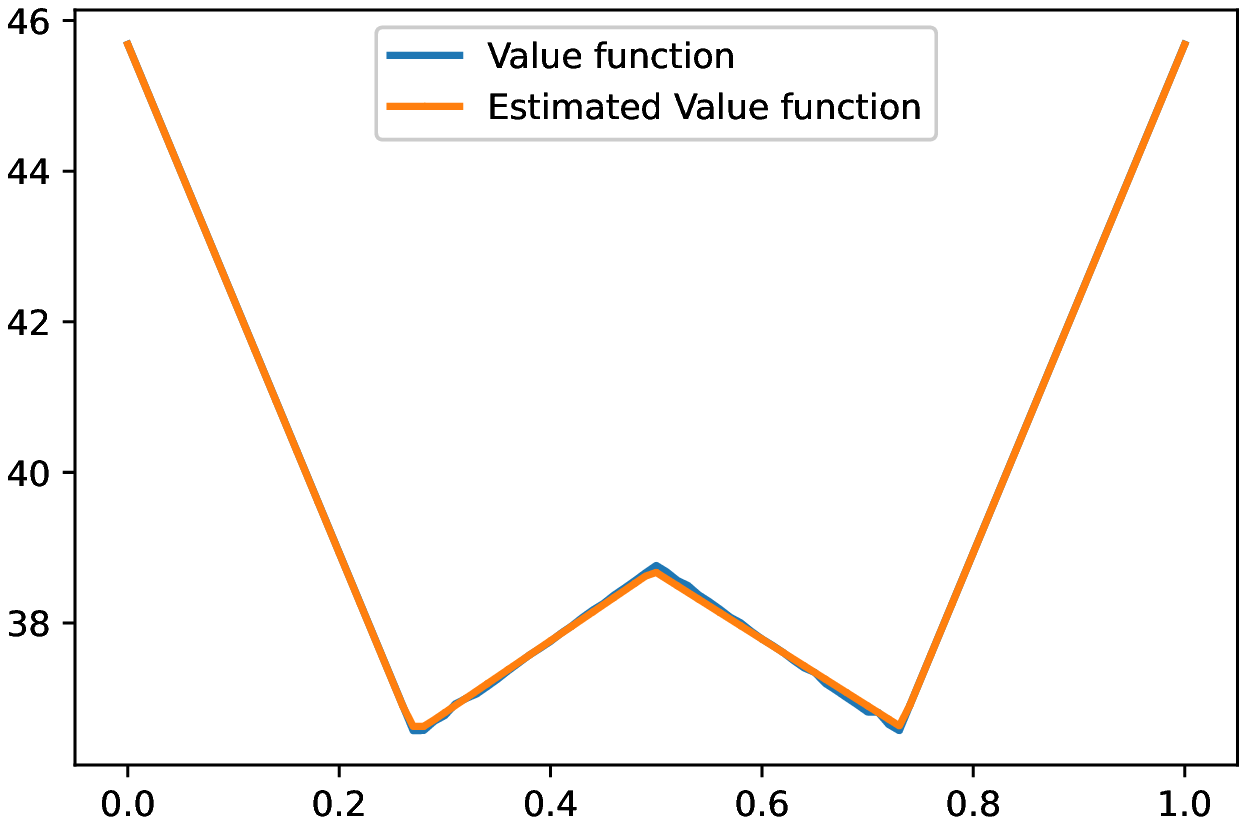}}
\caption{Estimated value function $\hat{V}_{11}$}
\label{icml-historical}
\end{center}
\vskip -0.2in
\end{figure}

\begin{figure}[ht]
%\vskip 0.2in
\begin{center}
\centerline{\includegraphics[width=0.7\columnwidth]{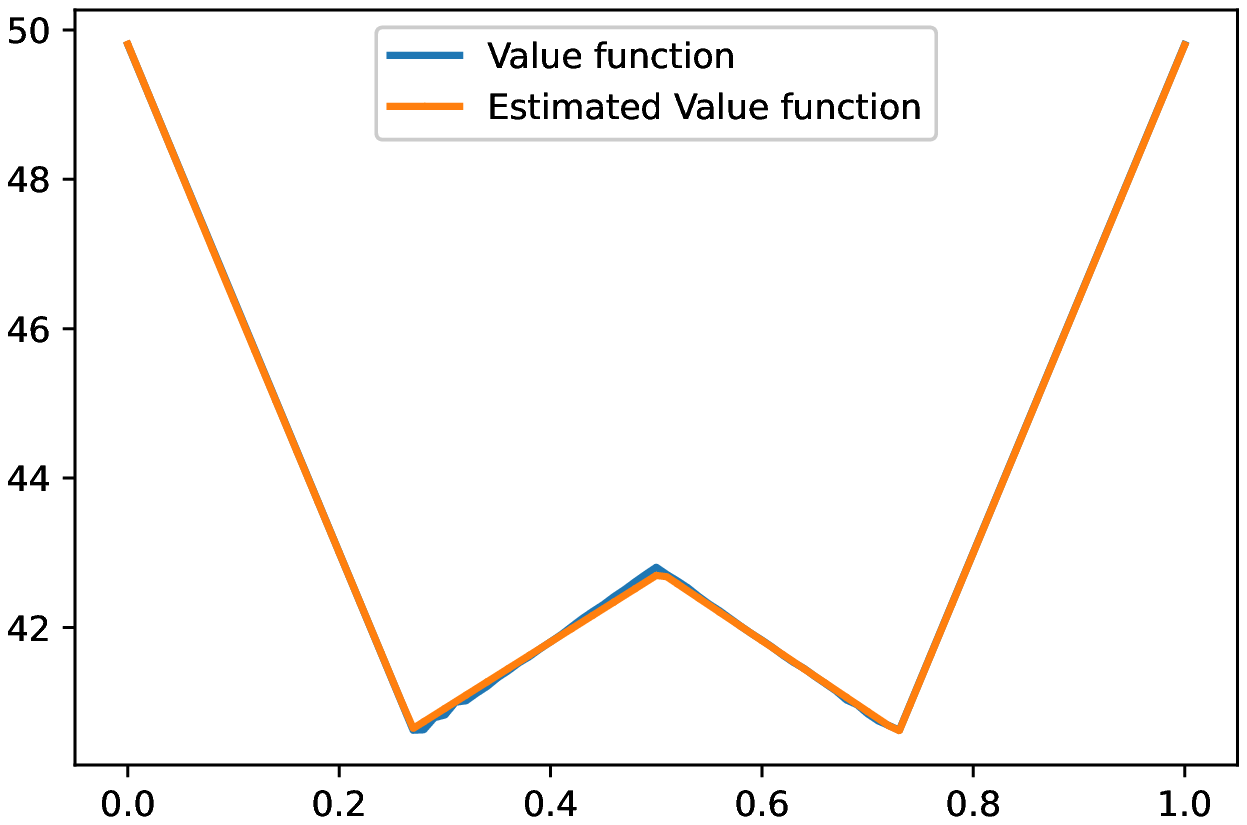}}
\caption{Estimated value function $\hat{V}_{10}$}
\label{icml-historical}
\end{center}
\vskip -0.2in
\end{figure}

\begin{figure}[ht]
%\vskip 0.2in
\begin{center}
\centerline{\includegraphics[width=0.7\columnwidth]{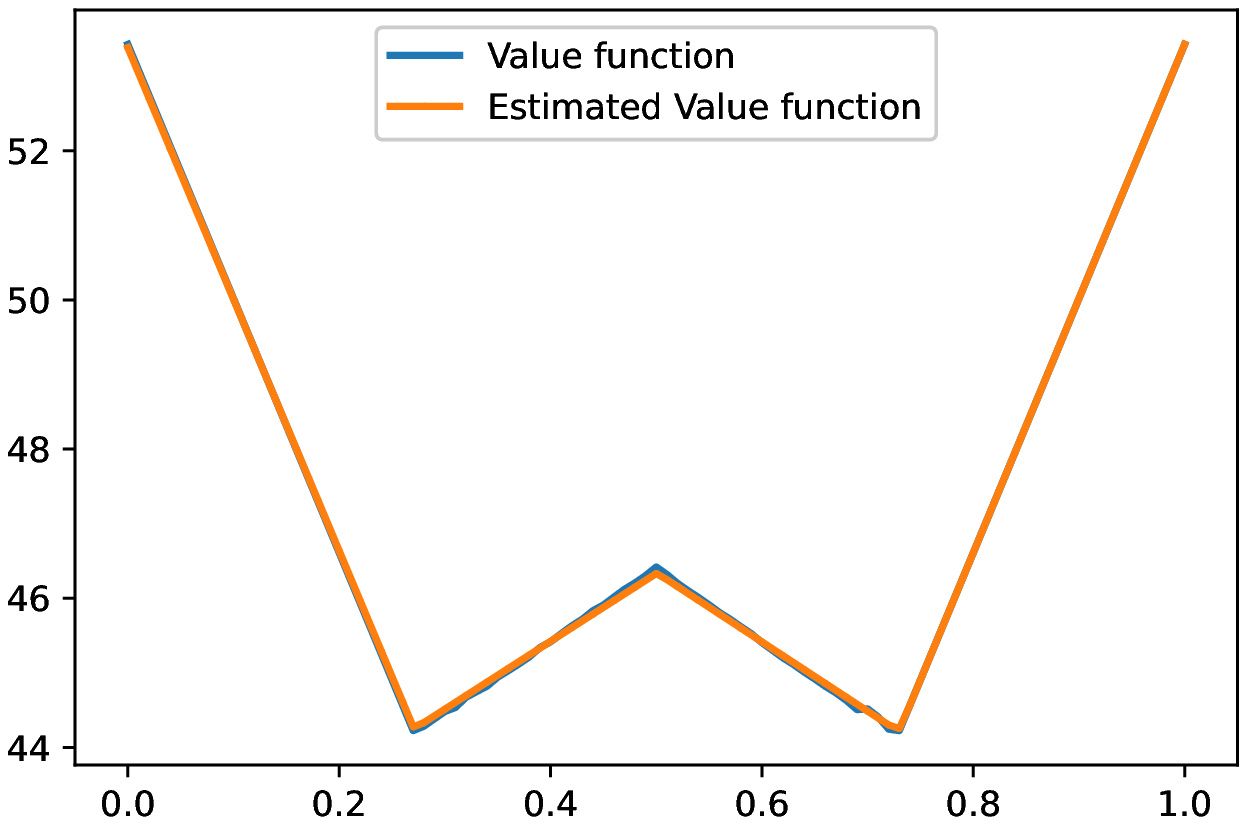}}
\caption{Estimated value function $\hat{V}_{9}$}
\label{icml-historical}
\end{center}
\vskip -0.2in
\end{figure}

\begin{figure}[ht]
%\vskip 0.2in
\begin{center}
\centerline{\includegraphics[width=0.7\columnwidth]{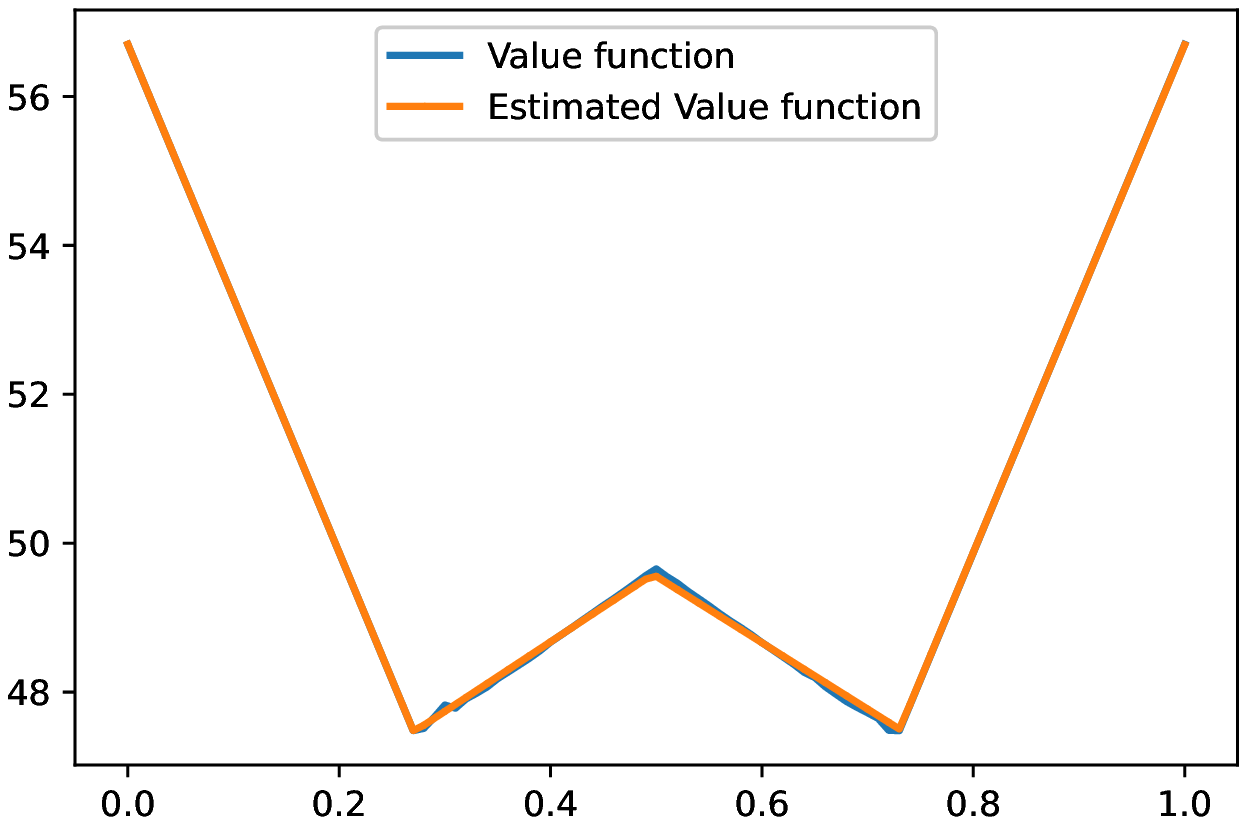}}
\caption{Estimated value function $\hat{V}_{8}$}
\label{icml-historical}
\end{center}
\vskip -0.2in
\end{figure}

\begin{figure}[ht]
%\vskip 0.2in
\begin{center}
\centerline{\includegraphics[width=0.7\columnwidth]{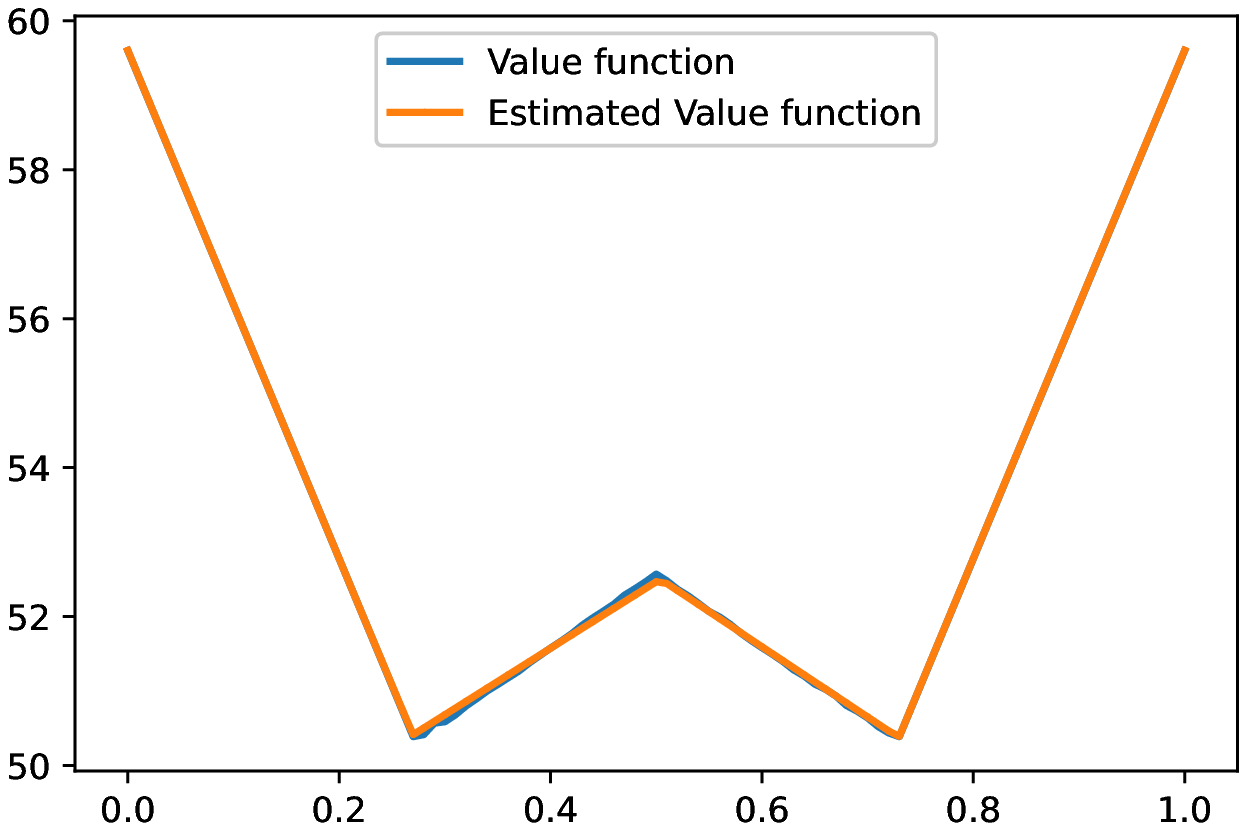}}
\caption{Estimated value function $\hat{V}_{7}$}
\label{icml-historical}
\end{center}
\vskip -0.2in
\end{figure}

\begin{figure}[ht]
%\vskip 0.2in
\begin{center}
\centerline{\includegraphics[width=0.7\columnwidth]{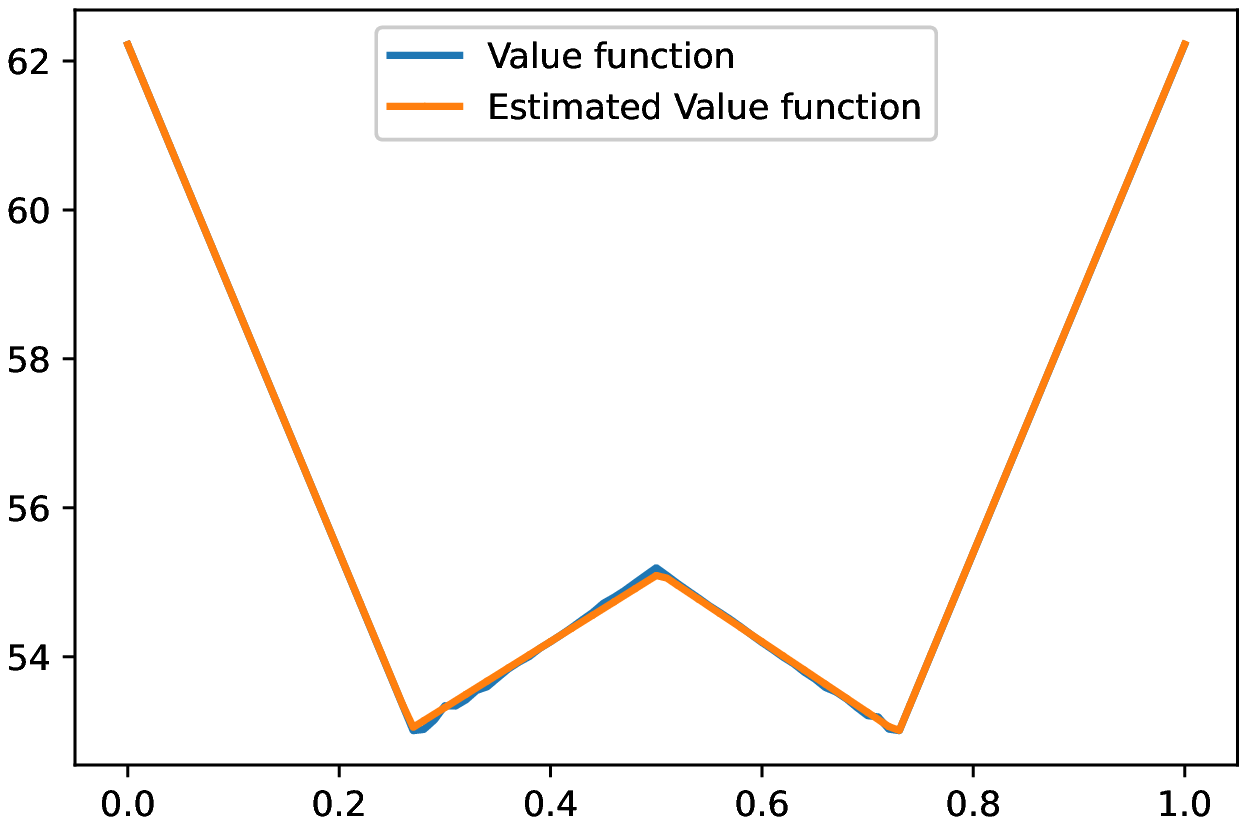}}
\caption{Estimated value function $\hat{V}_{6}$}
\label{icml-historical}
\end{center}
\vskip -0.2in
\end{figure}

\begin{figure}[ht]
%\vskip 0.2in
\begin{center}
\centerline{\includegraphics[width=0.7\columnwidth]{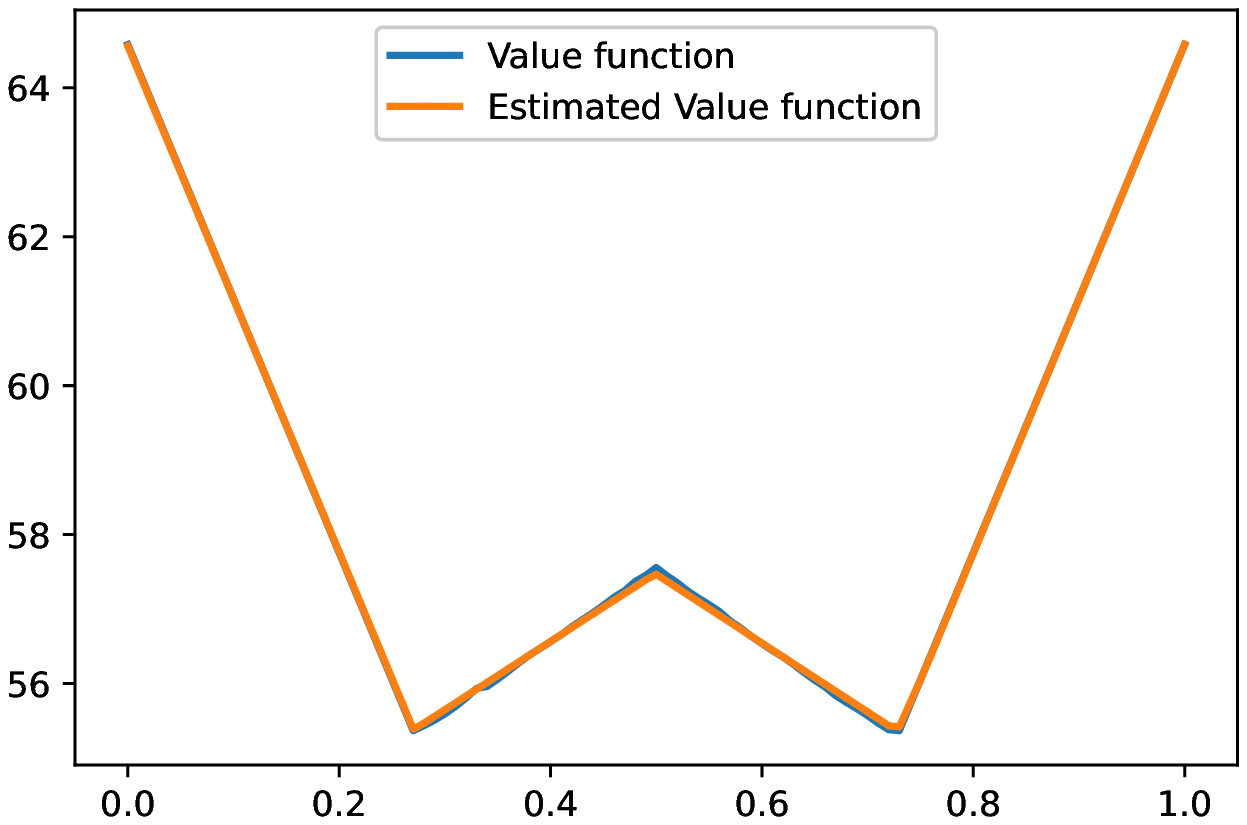}}
\caption{Estimated value function $\hat{V}_{5}$}
\label{icml-historical}
\end{center}
\vskip -0.2in
\end{figure}

\begin{figure}[ht]
%\vskip 0.2in
\begin{center}
\centerline{\includegraphics[width=0.7\columnwidth]{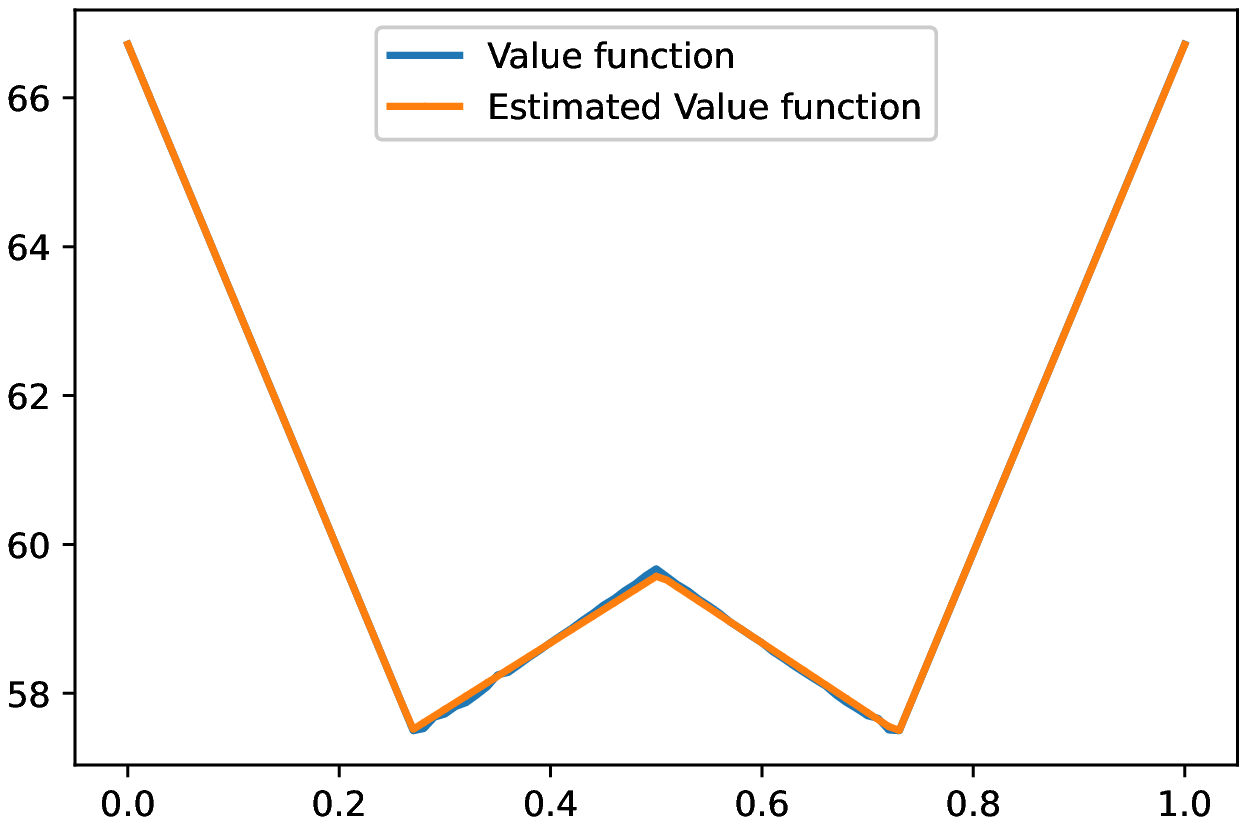}}
\caption{Estimated value function $\hat{V}_{4}$}
\label{icml-historical}
\end{center}
\vskip -0.2in
\end{figure}

\begin{figure}[ht]
%\vskip 0.2in
\begin{center}
\centerline{\includegraphics[width=0.7\columnwidth]{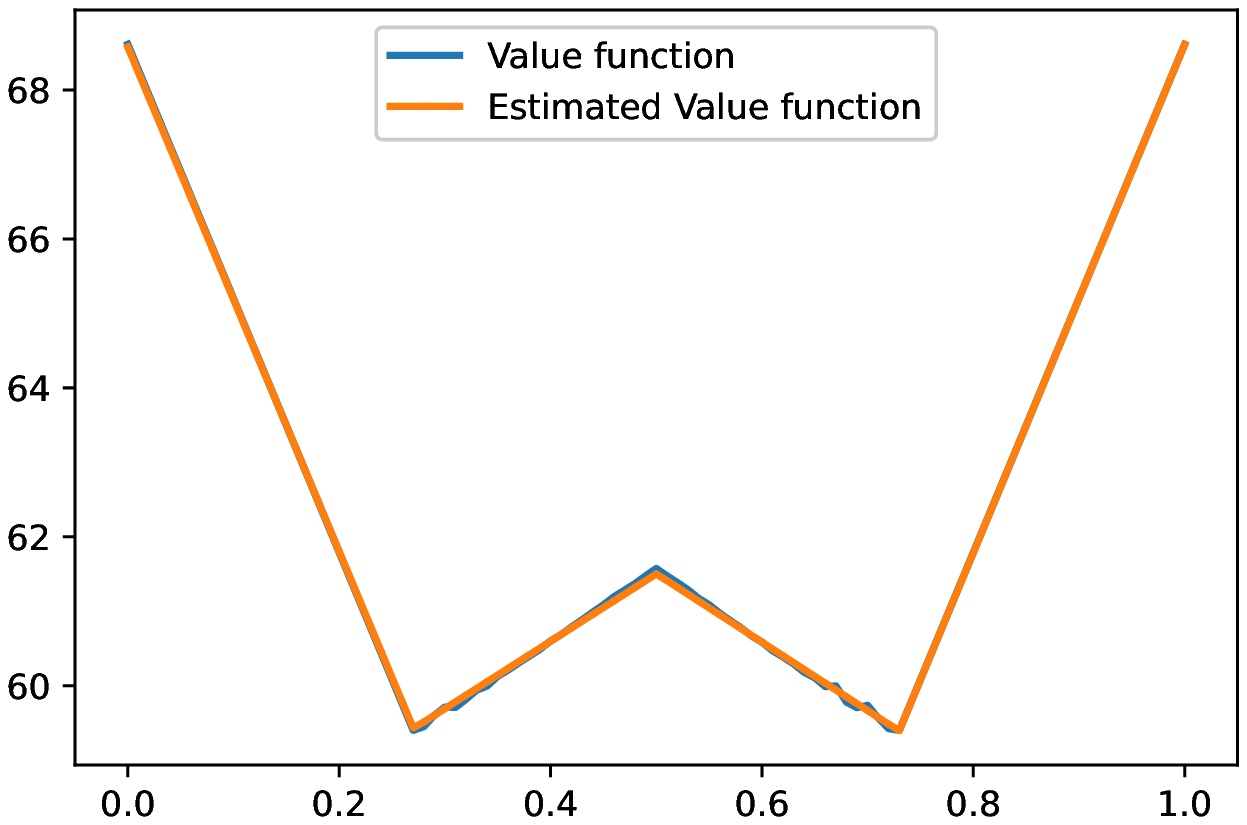}}
\caption{Estimated value function $\hat{V}_{3}$}
\label{icml-historical}
\end{center}
\vskip -0.2in
\end{figure}

\begin{figure}[ht]
%\vskip 0.2in
\begin{center}
\centerline{\includegraphics[width=0.7\columnwidth]{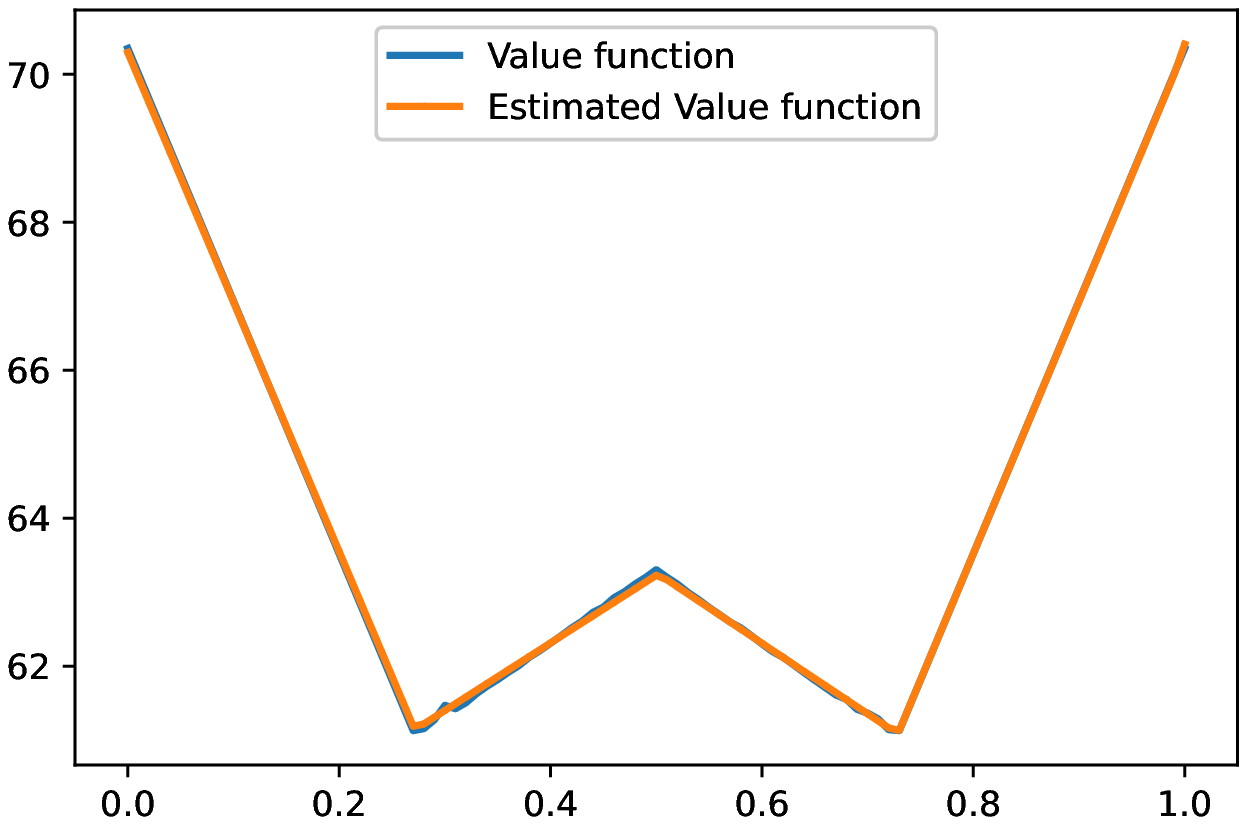}}
\caption{Estimated value function $\hat{V}_{2}$}
\label{icml-historical}
\end{center}
\vskip -0.2in
\end{figure}

\begin{figure}[ht]
%\vskip 0.2in
\begin{center}
\centerline{\includegraphics[width=0.7\columnwidth]{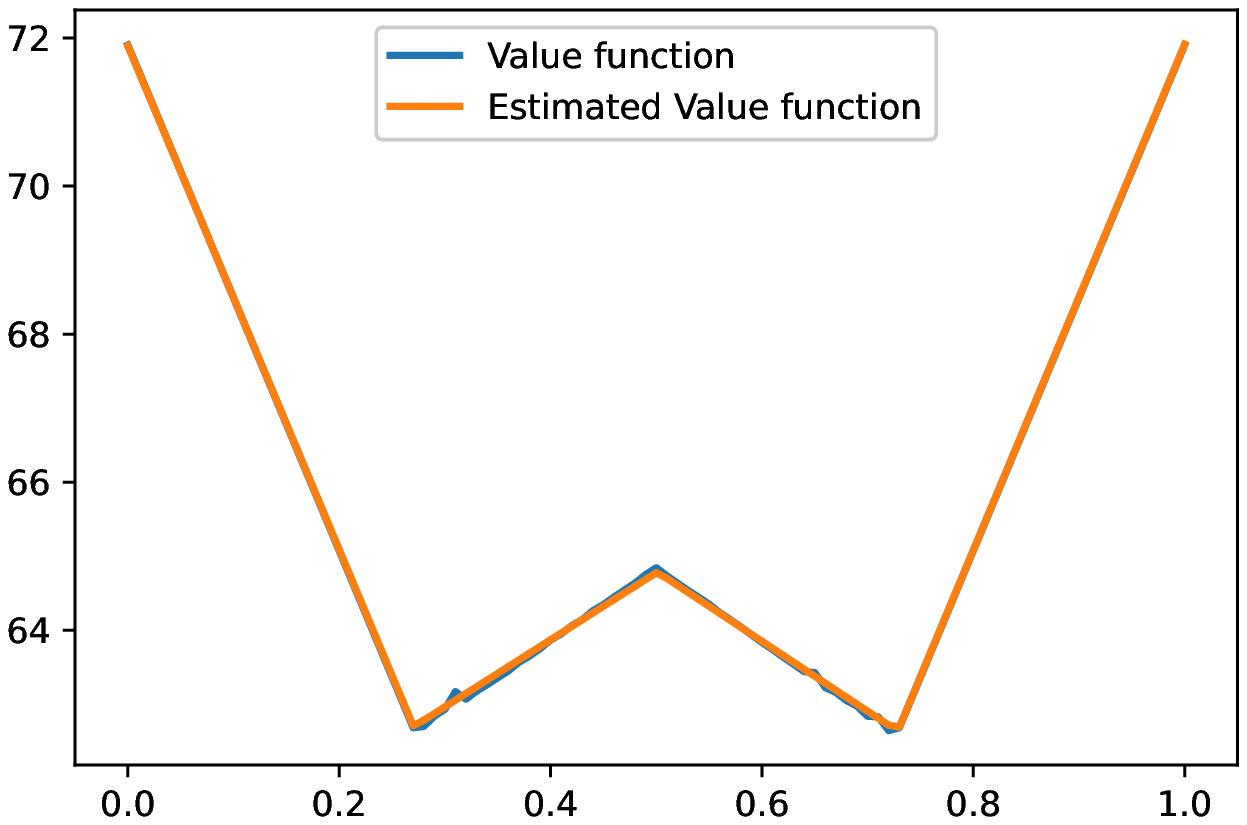}}
\caption{Estimated value function $\hat{V}_{1}$}
\label{icml-historical}
\end{center}
\vskip -0.2in
\end{figure}

\end{document}